\newtheorem{theorem}{Theorem}[chapter]
\newtheorem{proposition}[theorem] {Proposition}
\newtheorem{lemma}[theorem]{Lemma}
\newtheorem{corollary}[theorem]{Corollary}
\theoremstyle{definition}
\newtheorem{definition}[theorem]{Definition}
\newtheorem{question}[theorem]{Open Question}
\newcommand{\NN}{{\mathbb{N}}}
\newcommand{\QQ}{{\mathbb{Q}}}
\newcommand{\ZZ}{{\mathbb{Z}}}
\newcommand{\SI}[1]{\Sigma^0_{#1}}
\newcommand{\PI}[1]{\Pi^0_{#1}}
\newcommand{\bi}{\begin{itemize}}
\newcommand{\ei}{\end{itemize}}
\newcommand{\bc}{\begin{center}}
\newcommand{\ec}{\end{center}}
\newcommand{\ria}{\rightarrow}
\newcommand{\la}{\langle}
\newcommand{\ra}{\rangle}
\newcommand{\DE}[1]{\Delta^0_#1}
\newcommand{\lra}{\leftrightarrow}
\newcommand{\LR}{\Leftrightarrow}
\newcommand{\RA}{\Rightarrow}
\newcommand{\LA}{\Leftarrow}
\newcommand{\doctitle}{Computable Component-wise Reducibility}
\newcommand{\docauthor}{Egor Ianovski}
\newcommand{\docdate}{\today}
\title{\doctitle{}}
\author{\docauthor{}}
\date{\docdate{}}
\titleformat{\chapter}[hang]{\bfseries \huge}{\thechapter}{2pc}{}
\newcommand\cyr{%
\renewcommand\rmdefault{wncyr}%
\renewcommand\sfdefault{wncyss}%
\renewcommand\encodingdefault{OT2}%
\normalfont
\selectfont}
\DeclareTextFontCommand{\textcyr}{\cyr}
\begin{document}

\begin{titlepage}

\begin{center}

\includegraphics[width=0.15\textwidth]{./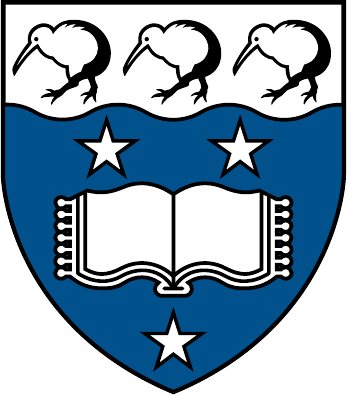}\\[1.5cm]    

\textsc{\LARGE University of Auckland}\\[1cm]

\textsc{\Large Department of Computer Science}\\[0.2cm]

\textsc{\Large Department of Philosophy}\\[0.8cm]

\hrule
\vspace{0.3cm}

{ \Large Computable Component-wise Reducibility}

\vspace{0.3cm}
\hrule
\vspace{0.9cm}
\begin{minipage}{0.4\textwidth}
\begin{flushleft} \large
\emph{Author:}\\
Egor Ianovski
\end{flushleft}
\end{minipage}
\begin{minipage}{0.4\textwidth}
\begin{flushright} \large
\emph{Supervisor:} \\
Dr.~Andr\'{e} Nies
\end{flushright}
\end{minipage}

\vfill

{\large Submitted in fulfilment of the requirements of the degree of}\\
{\large MSc in Logic and Computation. Last updated \today.}

\end{center}

\end{titlepage}

\setcounter{page}{1}
\renewcommand{\thepage}{\roman{page}}

\begin{abstract}
\thispagestyle{plain} 
We consider equivalence relations and preorders complete for various levels of
the arithmetical hierarchy under computable, component-wise reducibility. We
show that implication in first order logic is a complete preorder for $\SI 1$,
the $\le^P_m$ relation on EXPTIME sets for $\SI 2$ and the embeddability of
computable subgroups of $(\QQ,+)$ for $\SI 3$. In all cases, the symmetric
fragment of the preorder is complete for equivalence relations on the same
level. We present a characterisation of $\PI 1$ equivalence relations which
allows us to establish that equality of polynomial time functions and inclusion
of polynomial time sets are complete for $\PI 1$ equivalence relations and
preorders respectively. We also show that this is the limit of the enquiry: for
$n\geq 2$ there are no $\PI n$ nor $\DE n$-complete equivalence relations.
\end{abstract}

\setcounter{page}{2}
\section*{Acknowledgements}

I would like to extend my gratitude to the following people.

First and foremost to my supervisor Andr\'e Nies for
introducing me to the field of Recursion Theory. Though my steps did falter, and
the vistas were bewildering at times, I feel there could have been no better
reward for my labour than the enriched understanding it brought of the world of
computability.

To Alexander Gavruskin and Jiamou Liu for their
suggestions and commentary on my work in the periods whilst Andr\'e was away.
Being able to discuss research with others is an invaluable anchor on sanity in
periods when one's studies take one deeper, and deeper into theory and
abstraction.

To my co-authors Russell Miller and Keng Meng Ng. Much of the results in this
work will later appear in \cite{Ianovski2012}, and a fascinating result of
theirs I will present in the penultimate section here.

To my family for the moral and material support they provided through the
breadth of my studies. Ever they had more faith in me than I did myself, and I
can only hope that one day I may justify it.

Finally, to the creators and maintainers of the various \LaTeX\ resources and
guides throughout the internet. Without them mathematics would be a lot less
readable.

\tableofcontents

\newpage

\setcounter{page}{1}
\renewcommand{\thepage}{\arabic{page}}

\chapter{Introduction}

\begin{center}
\begin{tabular}{rl}
\textsc{Lepidus}&
What manner o' thing is your crocodile?\\
\textsc{Antony}&
It is shaped, sir, like itself; and it is as broad as it hath
breadth:\\
&it is just so high as it is, and moves with its own
organs:\\
&it lives by that which nourisheth it, and the elements
once out of it,\\
&it transmigrates.\\
\end{tabular}

\vspace{0.3cm}
\hfill-William Shakespeare, \emph{Antony and Cleopatra}
\end{center}
\vspace{1cm}

\noindent The concept of reducibility plays a prominent role throughout the
field of computer science, and for good reason; questions like ``What is
computable", ``tractable", ``complex" are notoriously difficult. Worse, they are
not even mathematical: at best we find ourselves in empirical science, more
probably in philosophy. And philosophical disputes have the unfortunate tendency
to sustain themselves for centuries with no resolution.

Reducibility allows us
to handle this issue with grace. While we cannot, with any certainty, say
anything objective about the difficulty of boolean satisfiability in light of
the yet unresolved relation regarding P and NP, and the more fundamental
question as to whether P indeed characterises the tractable problems, we can be
confident that this difficulty lies within a polynomial factor of graph
colouring. Some may deny the always-halting Turing machine as the limit of
computability, but the fact that if such a machine could solve first order
provability it could also solve the halting problem is uncontroversial. The more
determined constructivists among us may reject the axiom of choice as having a
place in mathematics, but in doing so we can all agree that they reject
Zorn's lemma as well.

Given the flexibility of the concept it is not surprising that it found its way
into many mathematical disciplines. Any field dealing with structures and
equipped with a notion of complexity can quite naturally seek to use the one to
impose some hierarchy on the other. Over and above its utility in mathematics,
the concept of reducibility has been described as a characteristic feature of
Computer Science. In particular the notion of NP-completeness is seen as an
important intellectual export of the field to other disciplines
(\cite{Papadimitriou1997}).

The specific form of reducibility studied here is what we term computable
component-wise reducibility: a computable function acts on each component
of a tuple separately; the objects under consideration are predominantly
equivalence relations, although often we will obtain the results via the more
general case of preorders.

As we will see in the literature review, the degree structure of equivalence
relations under this form of reducibility is rich and sometimes baffling. An
assault on the mysteries concealed within lies outside the scope of our enquiry.
Instead, we content ourselves with the much more limited goal of studying the
maximum elements of this structure: the complete relations for various levels of
the arithmetical hierarchy.

Our goal is to find equivalence relations and preorders that are not merely
complete, but in some sense natural, at least to a mathematician. Of course,
mulling too much over what it means to be ``natural" will pull us back into the
deep, dark woods of philosophical debate, so instead we present relations from a
variety of disciplines in the hope that the reader may find at least one
palatable. For a logician we have implication in first order logic, a computer
scientist we will serve polynomial time reducibility and to a
mathematician we hope to sell the embeddability of computable groups.
Ultimately, however, the reader should bear in mind that these are all just
subsets of $\NN\times\NN$ in various fineries. That the natural numbers can
display such richness and variety is perhaps the most fascinating and rewarding
aspect of our discipline. If the rest of this work is lost in formalism and
technicalities, we hope that at the very least the reader will find time to
reflect on the most beautiful mathematical structure of them all.

\section{Outline}

There are five parts to the sequel. In Section~\ref{sec:Preliminaries} we define
precisely what we mean by component-wise
reducibility, and offer some motivation for the choice. We introduce the
terminology and notation we will use and prove some
basic properties about the subject matter. In Section~\ref{sec:literature} we
present an
overview of some previous results in the field, and compare the computable case
to component-wise reducibilities in the related disciplines of complexity and
descriptive set theory. In
Section~\ref{sec:Si0nrelations} we begin our search for $\SI n$-complete
equivalence
relations. We present examples of complete relations and preorders for $\SI 1$,
$\SI 2$ and $\SI 3$, based in logic, complexity theory and group
theory respectively. In
Section~\ref{sec:Pi0nrelations} we construct a $\PI 1$-complete
equivalence relation and preorder and prove that this is as high as we can go:
we show that for $n\geq 2$,
no $\PI n$-complete equivalence relations exist. We conclude in
Section~\ref{sec:Conclusion}.

\chapter{Preliminaries}\label{sec:Preliminaries}

The subject matter of this thesis is what we will term computable
component-wise reducibility, and denote by $\le$. To state the definition,
given two binary relations $A$ and $B$, we say that $A\le B$ if there exists a
computable function $f$ such that:
\begin{equation}\label{eq:component_wise}
(x,y)\in A\iff(f(x),f(y))\in B.
\end{equation}
For contrast, recall that the standard notion of $m$-reducibility extended to
pairs would be:
\begin{equation}\label{eq:m_reducible}
(x,y)\in A\iff f(x,y)\in B.
\end{equation}

Neither the name nor the notation for this form of reducibility is standard;
we will mention some of the other names used in the literature in
Section~\ref{sec:literature}. While the definition given can be applied to any
binary relation, our main focus will be on equivalence relations and preorders.
Note also that we will also use $\le$ to denote the standard less-than-or-equal
relation on the integers. This should cause no ambiguity given the context.

This choice of reducibility and the focus on equivalence relations is not
arbitrary. In Section
\ref{sec:Intuition} we will show how this definition is
equivalent to concepts in other areas of mathematics. In Section
\ref{sec:Basic_results} we will prove some basic results about this manner of
reducibility, both to offer a glimpse as to how it differs from the standard
$m$-reduction of computability theory and to establish some facts that will be
useful in the later sections. First, however, we must establish notation
and terminology.

\section{Notation and terminology}\label{sec:Notation}

Our model of computation, where needed, will be a one-way infinite
Turing machine. Most results will not, however, explicitly invoke this and will
be based on a tacit acceptance of the Church-Turing Thesis.

As usual we assume some enumeration of all such machines and use $\varphi_e$ to
denote the $e$th partial recursive (p.r.)\ function arising from this
enumeration and
$W_e$ the $e$th recursively enumerable (r.e.)\ set. Likewise, $\varphi_{e,t}$
and
$W_{e,t}$ represent the $e$th function and set when the computation time is
restricted to $t$ steps. As every machine induces a function we may sometimes
use $M(w)$ to denote the output of $M$ on input $w$. If $M$ is a decider, then
we will say $M(w)=1$ to mean $M$ accepts $w$ and $0$ if it rejects. For an
oracle machine, $M(A;w)$ is used to mean Machine $M$ on input $w$ with access to
$A$ as an oracle.

As our subject matter is discrete, we will use $[a,b]$ to denote all integers
(rather than real
numbers) between $a$ and $b$ inclusive.

For an equivalence relation $E$, $[x]_E$
will be used to denote the equivalence
class of $x$ under $E$ and $\min[x]_E$ the least element (with respect to the
standard order on the natural numbers) of the equivalence class of $x$. If $E$
is clear from context we may omit the subscript. Where we are dealing with a
preorder $P$, we will reserve the words ``larger" and ``smaller" for the
standard order on the natural numbers, instead using ``above" and ``below" for
the order induced by $P$.

Angular brackets $\la\cdot\ra$ are used to denote an encoding function, mapping
$\cdot$ injectively to some natural number. If we do not mention the
requirements on the encoding explicitly, it is to be taken that any computable
encoding suffices. Strictly speaking, as the sets $W_e$ are subsets of $\NN$ we
should really use $\la x,y\ra$ for any binary relation considered here. However
we will often tacitly extend the notion of a r.e.\ set to be a subset of
$\NN\times\NN$, as such an extension does not affect any of our results. This
allows us to reserve the angular brackets for instances where the encoding
function plays some necessary role, or simply to reduce stacked parentheses.

Standard notation is used for the arithmetical hierarchy. We obtain a $\SI n$
relation by making an existential query over a $\PI {n-1}$ relation, a $\PI n$
from
the complement of a $\SI n$ and $\DE n=\SI n\cap\PI n$. The base case are
the computable relations $\SI 0=\DE 1=\PI 0$. We do not introduce special
notation to distinguish between the $\SI n$ $(\PI n,\DE n)$ sets, equivalence
relations or preorders. The specific meaning should be clear from context.

\section{Intuition}\label{sec:Intuition}

Component-wise reducibility is a clear analogue of embeddability in model
theory. Recall that given $A\subseteq U^2$ and $B\subseteq V^2$, we say that the
structure $(U,A)$ is embeddable in $(V,B)$ if there exists a function $f:U\ria
V$ satisfying:
\begin{equation}
(x,y)\in A \iff (f(x),f(y))\in B.
\end{equation}
If we restrict the setting so that $f$ is required to be computable, and the
domains of the structures are $\NN$ (or, given a suitable encoding, some
countable set), we obtain \eqref{eq:component_wise}: $(\NN,A)$ is computably
embeddable in $(\NN,B)$ if and only if $A\le B$.

Herein lies a crucial aspect of component-wise reducibility. Whereas
$m$-reducibility is only concerned with computability, a component-wise
reduction also implies a structural similarity between the relations. For
example, if $S$ and $R$ are two equivalence relations, and $S\le R$ via $f$,
then $g:\NN/S\ria\NN/R$ given by $[x]\mapsto[f(x)]$ is one to one.

The concept also arises from category theory (\cite{Andrews2010}), via the
theory
of enumerations of Ershov
(\cite{Ershov1969}). An enumeration is a function mapping the natural numbers
onto
some set. Any such $a:\NN\ria A$ defines an equivalence relation, $\{(x,y)\ |\
a(x)=a(y)\}$, and any equivalence relation over $\NN^2$ can be defined by some
enumeration: namely, $x\mapsto\min[x]$. It thus makes sense to class
enumerations by the complexity of the equivalence relation they define. In the
terminology of \cite{Ershov1969}, the \emph{positive} enumerations are those
which define $\SI 1$ equivalence relations and the \emph{negative} define $\PI
1$, and there is no difficulty in likewise associating a class of enumerations
with every level of the arithmetical hierarchy.

A \emph{morphism of enumerations} is a computable function $f:\NN\ria\NN$ such
that for
two enumerations $a$ and $b$, $a\circ f=b$. This gives us another means to view
component-wise reducibility: $f$ is a morphism from $a$ to $b$ if and only if
$f$ is a reduction between the induced equivalence relations, $E_a$ and $E_b$,
as
shown in Figure~\ref{fig:enum_morph}. Note that this induces the existence of a
unique $g:A\ria B$ such that the resulting diagram commutes. In fact $g$ is
precisely the injection from $\NN/E_a$ to $\NN/E_b$ we have seen already.

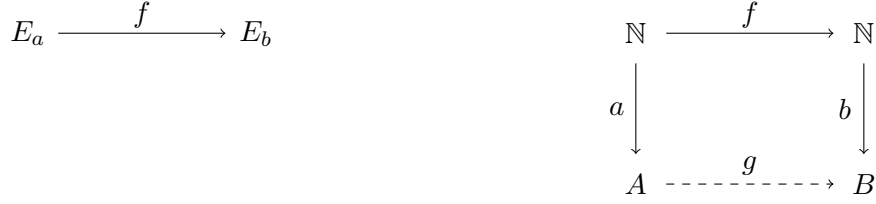
\begin{figure}
\begin{center}
\begin{tikzpicture}
\node at (0,0) {$E_a$};
\node at (3,0) {$E_b$};

\draw[->] (0.4,0)--(2.6,0);

\node at (1.5,0.25) {$f$};

\node at (8,0) {$\NN$};
\node at (11,0) {$\NN$};

\draw[->] (8.4,0)--(10.6,0);

\node at (9.5,0.25) {$f$};

\node at (8,-2) {$A$};
\node at (11,-2) {$B$};

\draw[->] (8,-0.4)--(8,-1.6);
\draw[->] (11,-0.4)--(11,-1.6);

\node at (7.75,-1) {$a$};
\node at (10.75,-1) {$b$};

\draw[->,dashed] (8.4,-2)--(10.6,-2);

\node at (9.5,-1.75) {$g$};
\end{tikzpicture}
\end{center}
\caption{A component-wise reduction is also a morphism of enumerations}
\label{fig:enum_morph}
\end{figure}

If we thus fix a category of enumerations of a given level in the arithmetical
hierarchy, the terminal object in this category is an enumeration with a
morphism from every other enumeration; an equivalence relation to which there
exists a reduction from any other equivalence relation: precisely the complete
equivalence relation that is the focus of our study.

\section{Basic results}\label{sec:Basic_results}

In this section we will prove a few simple results about the behaviour of
computable component-wise reducibility for two reasons. The first is to convince
ourselves
that this notion of reducibility is, indeed, substantially different from the
standard $\le_m$ or $\le_1$ reducibilities used in computability theory, and the
second
is to see that there nevertheless exist some interesting connections between
them.

\begin{proposition}\label{prop:comp_implies_set}
For binary relations $A$ and $B$, $A\le B\RA A\le_m B$, but not vice versa.
\end{proposition}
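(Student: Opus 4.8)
The plan is to prove the implication directly and then exhibit a concrete counterexample for the converse. For the forward direction, suppose $A \le B$ via a computable function $f$ satisfying $(x,y)\in A \iff (f(x),f(y))\in B$. We want a computable $g$ with $(x,y)\in A \iff g(x,y)\in B$ (reading $B$ as a subset of $\NN\times\NN$, or composed with a pairing function to land in $\NN$ as needed). The natural choice is $g(x,y) = (f(x),f(y))$ — or, if we insist on a genuine map to $\NN$, $g(x,y) = \la f(x), f(y)\ra$ and correspondingly replace $B$ by its image under $\la\cdot,\cdot\ra$, which is $m$-equivalent to $B$. Either way $g$ is computable since $f$ is, and the biconditional defining $A \le_m B$ holds by the hypothesis on $f$. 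This step is entirely routine.

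The substance of the proposition is the failure of the converse, so the main task is to find binary relations $A$ and $B$ with $A \le_m B$ but $A \not\le B$. The cleanest approach is to exploit the structural content of component-wise reducibility noted just before this proposition: a component-wise reduction between equivalence relations induces an injection on equivalence classes. So I would take $A$ and $B$ to be equivalence relations where this injectivity is the obstruction. The simplest candidate: let $B$ be the identity relation on $\NN$ (two equivalence classes? no — infinitely many singletons), and let $A$ be the equivalence relation with exactly one class, i.e. $A = \NN \times \NN$. Then $A \le_m B$ trivially, since $A$ is computable and $B$ is computable and infinite-and-co-infinite as a subset of $\NN\times\NN$, so any computable set reduces to it by the usual padding argument (map pairs in $A$ to a fixed element of $B$, pairs not in $A$ to a fixed element of $\NN^2 \setminus B$). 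But $A \not\le B$: a reduction $f$ would have to satisfy $(f(x),f(y))\in B$ for all $x,y$ (since all pairs are in $A$), i.e. $f(x) = f(y)$ for all $x,y$, so $f$ is constant; but then for any $x$ we'd need $(x,x)\in A \iff (f(x),f(x))\in B$, which holds, so that particular check passes — I need a sharper example.

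The fix is to make $A$ have finitely many classes and $B$ have fewer. Take $A$ to be the equivalence relation on $\NN$ with exactly two classes (say, evens and odds) and $B$ the equivalence relation with exactly one class, $B = \NN\times\NN$. Both are computable, so $A \le_m B$ holds by a trivial padding argument (as a subset of $\NN^2$, $B$ is nontrivial — its complement is empty — so actually one must be slightly careful: $\le_m$ to an all-of-$\NN^2$ set is trivially witnessed by a constant function into $B$ whenever $A \neq \NN^2$; and here we need $A \le_m B$, with $B$ everything, which holds via any constant map). Conversely $A \not\le B$: if $f$ witnessed $A \le B$, then since $(f(x),f(y))\in B$ always, the biconditional forces $(x,y)\in A$ for all $x,y$, contradicting that $A$ has two classes (e.g. $(0,1)\notin A$). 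This cleanly separates the two reducibilities. The only mild subtlety — and the one place to be careful — is checking that $\le_m$ genuinely holds for the chosen $A,B$ under the paper's convention of viewing these as subsets of $\NN\times\NN$; choosing $B$ to be all of $\NN^2$ (or alternatively its complement empty) sidesteps any worry about $B$ being $m$-trivial, since the constant-function reduction works, and it makes the non-reducibility argument a one-liner. I would present exactly this pair as the counterexample.
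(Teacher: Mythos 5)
Your forward direction is correct and is exactly the paper's argument: set $g(x,y)=(f(x),f(y))$ (or compose with a pairing function) and the biconditional transfers. The gap is in the counterexample for the converse, which is where the content of the proposition lies. With $A$ the two-class relation (evens/odds) and $B=\NN\times\NN$, the claim $A\le_m B$ is false. An $m$-reduction must satisfy $(x,y)\in A \iff g(x,y)\in B$; since $B$ contains every pair, the right-hand side is always true, which forces $A=\NN\times\NN$, contradicting your choice of $A$. Your parenthetical resolution of the ``mild subtlety'' is exactly backwards: a constant map into $B$ witnesses $A\le_m B$ only when $A$ is itself all of $\NN^2$, and the padding argument you invoke requires $B$ to have both a member and a non-member (as a subset of $\NN^2$). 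By deliberately taking $B$ to be everything you created the $m$-triviality problem rather than sidestepping it, so for your pair both $A\le B$ and $A\le_m B$ fail and nothing is separated.

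The underlying structural idea is sound and is the one the paper uses: a componentwise reduction forces $B$, restricted to the image, to share $A$'s relational properties (the paper contrasts a relation containing a reflexive pair with an everywhere irreflexive target, built from $K\times K$ and $K\times\{(e,w)\}$ for a never-halting $e$). Your example repairs easily once $B$ is kept nontrivial as a subset of $\NN^2$: keep $A=\{(x,y)\ |\ x\equiv y \ (\mathrm{mod}\ 2)\}$ and take, say, $B=\{(0,0)\}$. Then $A\le_m B$ by padding (send pairs in $A$ to $(0,0)$ and pairs outside $A$ to $(0,1)$), while any componentwise $f$ would need $(f(0),f(0))$ and $(f(1),f(1))$ in $B$, forcing $f(0)=f(1)=0$, whence $(f(0),f(1))\in B$ although $(0,1)\notin A$. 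As written, though, the ``not vice versa'' half of your proof does not stand.
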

\begin{proof}
By definition, $A\le B$ implies the existence of a computable $f$ satisfying:
$$(x,y)\in A\qquad\iff\qquad(f(x),f(y))\in B.$$
Define $f'$ by $f'(x,y)=(f(x),f(y))$. As $f$ is computable, clearly so is $f'$.
This gives us:
$$(x,y)\in A\qquad\iff\qquad f'(x,y)\in B,$$
which is to say, $A\le_m B$.

The simplest way to see that $A\le_m B$ does not imply $A\le B$ is to recall the
observations of Section~\ref{sec:Intuition}: $A\le B$ implies that in the
corresponding computable structures $A$ is embeddable in $B$, and as such $B$
restricted to the image of $f$ satisfies the same
relational properties as $A$. It follows that, for example, an $A$ that contains
a reflexive pair
cannot be component-wise reduced to a $B$ that is everywhere irreflexive.
Thereby if we use $K$ to denote the halting problem, namely:
$$K=\{(e,w)\ |\ \exists y,t\ \varphi_{e,t}(w)=y\},$$
and if $e$ is the index of some never-halting machine, $K\times K\nleq
K\times\{(e,w)\}$. However, clearly $K\times K\le_m
K\times\{(e,w)\}$ via $((i,v),(i',v'))\mapsto((j,\la v,v'\ra),(e,w))$, where $j$
is the machine that will first run machine $i$ on $v$ then machine $i'$ on $v'$.
\end{proof}

A corresponding statement fails for $\le_1$. We can construct binary relations
where $\le$ and $\le_m$ coincide, and as $\le_m$ does not imply $\le_1$, it
follows that component-wise reducibility as defined here is incomparable with
1-reducibility.

\begin{proposition}
There exist binary relations such that $A\le B$ but $A\nleq_1 B$.
\end{proposition}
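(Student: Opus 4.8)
The plan is to exhibit a pair of binary relations $A$ and $B$ for which $\le$ and $\le_m$ coincide, and then invoke the standard fact from computability theory that $\le_m$ does not imply $\le_1$. The cleanest way to force $\le$ and $\le_m$ to coincide is to make the relations so simple structurally that the only constraint a component-wise reduction imposes beyond an $m$-reduction is automatically satisfied. Recall from the proof of Proposition~\ref{prop:comp_implies_set} that the extra content of $A\le B$ over $A\le_m B$ is the structural similarity: $B$ restricted to the range of the reduction must model the relational theory of $A$. If $A$ and $B$ are chosen to be ``generic'' pairs — say $A = S\times\{a\}$ and $B = T\times\{b\}$ for suitable sets $S,T$ and fixed points $a,b$ — then a pair $(x,y)$ lies in $A$ iff $x\in S$ and $y=a$, so the second coordinate carries no relational structure to worry about, and one checks directly that $A\le B$ reduces to finding a computable $h$ with $x\in S\iff h(x)\in T$, i.e.\ to $S\le_m T$.

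First I would fix a set $S$ that is $m$-reducible but not $1$-reducible to some set $T$; the existence of such a pair is classical (for instance, take $T = S\oplus S$ for an appropriate $S$, or cite the textbook separation of $\le_m$ from $\le_1$). Second, I would set $A = \{(x,a)\mid x\in S\}$ and $B = \{(x,b)\mid x\in T\}$, where $a,b$ are any fixed naturals (say $a=b=0$). Third, I would verify the key equivalence: $A\le B$ via a computable $f$ iff there is a computable $h$ with $x\in S\iff h(x)\in T$. For the forward direction, if $f$ witnesses $A\le B$, then for any $x$ we have $(x,0)\in A\iff(f(x),f(0))\in B$, and $(f(x),f(0))\in B$ forces $f(0)=b$ and $f(x)\in T$ exactly when $x\in S$, so $h := f$ works; conversely, given $h$ reducing $S$ to $T$ via $m$-reduction, define $f$ to send the ``live'' coordinates into $T$ appropriately and the second-coordinate constant $a$ to $b$, and this is a component-wise reduction. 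Fourth, I would observe that the same computation shows $A\le_m B$ iff $S\le_m T$, so on this pair $\le$ and $\le_m$ genuinely coincide; since $S\leq_m T$ but $S\nleq_1 T$, we get $A\le B$ while $A\nleq_1 B$ (an $A\le_1 B$ would, by collapsing the constant coordinate, yield $S\le_1 T$).

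The main obstacle is arranging the bookkeeping in the third step so that the equivalence ``$A\le B$ iff $S\le_m T$'' really holds in both directions — in particular making sure that a component-wise reduction $f$ cannot ``cheat'' by mapping some pair outside the intended format, and that the constant second coordinate does not accidentally let $B$ satisfy relational constraints $A$ does not. Using singleton second coordinates and noting that membership in $A$ (resp.\ $B$) depends only on the second coordinate being exactly $a$ (resp.\ $b$) pins $f(a)=b$ down immediately and closes this gap; one should also confirm that both $A$ and $B$ are, say, everywhere irreflexive and antisymmetric in the trivial way, so no obstruction of the kind used in Proposition~\ref{prop:comp_implies_set} interferes. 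Everything else is routine: the classical $\le_m$-versus-$\le_1$ separation is off the shelf, and the reductions in both directions are explicit.
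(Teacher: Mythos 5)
Your overall strategy is the same as the paper's: take sets with $S\le_m T$ but $S\nleq_1 T$, form relations whose pairs have a constant second coordinate, and transfer the classical $\le_m$/$\le_1$ separation. However, the key step as you state it --- ``$A\le B$ iff $S\le_m T$'' for $A=S\times\{a\}$, $B=T\times\{b\}$ with $a=b=0$ and $S,T$ arbitrary --- is false, and this is a genuine gap rather than bookkeeping. As you yourself note, any component-wise reduction $f$ is forced (once $S\neq\emptyset$) to satisfy $f(a)=b$; but then $(a,a)\in A\iff(b,b)\in B$ forces the consistency condition $a\in S\iff b\in T$, which nothing in your setup guarantees. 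Concretely, if $0\in S$ and $0\notin T$ (perfectly compatible with $S\le_m T$), then $(0,0)\in A$ while any candidate $f$ has $(f(0),f(0))=(0,0)\notin B$, so $A\nleq B$: the coincidence of $\le$ and $\le_m$ on your pair fails, and with it the forward construction. Moreover, even when $a\in S\iff b\in T$ holds, the map $f=h$ off the marker is not yet a reduction: you also need $f(y)\neq b$ for every $y\neq a$, since otherwise for any $x\in S$ the pair $(x,y)\notin A$ would map to $(h(x),b)\in B$. So the given $m$-reduction $h$ must be rerouted wherever it hits the marker value, another point your sketch does not address; pinning down $f(a)=b$ is the \emph{source} of these constraints, not their resolution.

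This is exactly what the paper's construction is engineered to avoid: it shifts both sets up by one and adjoins $0$, setting $C^+=\{x+1\mid x\in C\}\cup\{0\}$ and $D^+$ likewise, and takes $A=C^+\times\{0\}$, $B=D^+\times\{0\}$. Then the marker $0$ lies in both sets by fiat (so the consistency condition holds automatically), and the reduction $f^+$ defined by $0\mapsto 0$ and $x+1\mapsto f(x)+1$ never sends an element code to the marker, so no rerouting is needed. Your argument can be repaired in the same spirit without the shift --- choose $a\in S$, set $b=h(a)\in T$, and redirect any $x\neq a$ with $h(x)=b$ to some other fixed element of $T$ (assuming, harmlessly, $|T|\ge 2$) --- but as written the central equivalence does not hold. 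Your final step, extracting $S\le_1 T$ from a hypothetical $A\le_1 B$ by projecting to the first coordinate, matches the paper's own argument and I will not press it further here, but note that it too needs care, since a $1$-reduction need not send pairs outside $A$ into the slice $\NN\times\{0\}$.
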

\begin{proof}
Let $C$ and $D$ be any two sets such that $C\le_m D$ but $C\nleq_1 D$. Shift the
elements of these sets up by one, creating $C^+=\{x+1\ |\ x\in C\}\cup\{0\}$ and
likewise with $D$.
Let $A=C^+\times \{0\}$, $B=D^+\times \{0\}$.

Clearly, $A\le B$: if $f$ is the $m$-reduction from $C$ to $D$, let $f^+$ be
given by the maps $x+1\mapsto f(x)+1$ and $0\mapsto 0$. Observe that $(x,y)\in
A$
if and only if $(f^+(x),f^+(y))\in B$.
However, $A\nleq_1B$. Suppose to the contrary that there exists a one to one $g$
satisfying:
$$
(x,y)\in A\qquad\iff\qquad g(x,y)\in B.$$

Let $g_1$ be the first component of $g$. As the second component is always 0, it
must be the case that $g_1$ is one to one, and it is clearly computable. This
gives us:
$$
x\in C\qquad\iff\qquad g_1(x)-1\in D,$$
which contradicts the assumption on $C$ and $D$.
\end{proof}

As a useful starting point in the investigation of the complexity of sets under
component-wise reducibility, we will demonstrate that there exist $\SI n$ and
$\PI
n$-complete relations for every $n$. On the contrary, for $n\geq 2$ there are no
$\DE n$-complete relations. Note that here we are
speaking of general binary relations, so this in no way contradicts the result
we will later present in Theorem~\ref{thm:noPi0nrelation}.

\begin{proposition}\label{prop:Si0n_complete_relations}
For every $n$, there is a $\SI n$ and a $\PI n$ complete relation. However,
there are no $\DE n$-complete relations for $n\geq 2$. 
\end{proposition}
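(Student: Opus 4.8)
The plan is to handle the two halves separately: the existence of $\SI n$- and $\PI n$-complete relations by a direct (non-diagonal) construction, and the non-existence of $\DE n$-complete relations for $n\ge 2$ by reducing it, via Proposition~\ref{prop:comp_implies_set}, to a classical fact about $m$-degrees. For the existence part with $n\ge 1$, I would fix a universal $\SI n$ set $U\sub\NN^3$ whose sections $A_i=\{(x,y):(i,x,y)\in U\}$ enumerate all $\SI n$ subsets of $\NN\times\NN$, fix a computable pairing bijection $\la\cdot,\cdot\ra$ on $\NN$, and set
$$B=\{\,(\la x,i\ra,\la y,j\ra)\ :\ i=j\ \land\ (x,y)\in A_i\,\}.$$
Deciding membership in $B$ is a decidable comparison followed by a query to $U$, so $B$ is $\SI n$ (for $n\ge1$ the class is closed under intersection with computable predicates). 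Completeness is immediate: given $A=A_i$, the computable map $f(x)=\la x,i\ra$ witnesses $A\le B$, since $(x,y)\in A_i\iff(\la x,i\ra,\la y,i\ra)\in B$. Running the same argument with an enumeration of the $\PI n$ relations, and using that computable predicates are $\PI n$ and that $\PI n$ is closed under finite intersection, produces a $\PI n$-complete relation. The degenerate case $n=0$ (where $\SI 0=\PI 0$ is the class of computable relations) escapes this trick, since there is no universal computable set; there I would instead take a computable presentation of the Fra{\"\i}ss{\'e} limit of the class of all finite structures in a single binary relation, into which every computable binary relation embeds computably by greedily realising quantifier-free types, so that this relation is $\le$-complete for the computable relations.

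For the non-existence part, suppose towards a contradiction that $B$ is a $\DE n$ relation ($n\ge 2$) that is $\le$-complete for $\DE n$ relations. By Proposition~\ref{prop:comp_implies_set}, $A\le B$ implies $A\le_m B$, so every $\DE n$ relation $m$-reduces to $B$; identifying $\NN\times\NN$ with $\NN$ through a computable bijection, this says exactly that $B$ is an $m$-complete $\DE n$ set. So everything reduces to the classical fact that for $n\ge 2$ there is no $m$-complete $\DE n$ set.

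To prove that fact I would relativise the standard diagonalisation to $\emptyset^{(n-2)}$. By the relativised Limit Lemma, a $\DE n$ set $A$ has a $\{0,1\}$-valued approximation $A(z)=\lim_s g(z,s)$ with $g\leT\emptyset^{(n-2)}$; assuming $A$ is moreover $m$-complete, build a set $D$ meeting each requirement $R_e$: ``$\varphi_e$ is not an $m$-reduction of $D$ to $A$''. Give $R_e$ a fresh follower $x_e$, keep all non-followers out of $D$, and put $D(x_e)=\lim_s h(e,s)$ where $h(e,s)=1-g(\varphi_{e,s}(x_e),s)$ if $\varphi_{e,s}(x_e){\downarrow}$ and $h(e,s)=0$ otherwise. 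If $\varphi_e$ is total then $\varphi_e(x_e){\downarrow}=v$ and $D(x_e)=1-A(v)\neq A(\varphi_e(x_e))$, so $\varphi_e$ is not a reduction; if $\varphi_e$ is partial it is not a reduction either. Hence $D\not\le_m A$, while $D\leT\emptyset^{(n-1)}$ (settle convergence of $\varphi_e(x_e)$ with $\emptyset'$ and membership in $A$ with $\emptyset^{(n-1)}$), so $D\in\DE n$, contradicting $m$-completeness of $A$. The only genuinely delicate point in the whole proposition is this last diagonalisation: it must be carried out relative to $\emptyset^{(n-2)}$, one must check that a divergent or partial reduction index causes no damage, and one must verify that the constructed $D$ really lands in $\DE n$. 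It is worth recording the boundary as a sanity check: at $n=1$ the obstruction vanishes — every computable $S$ with $\ES\subsetneq S\subsetneq\NN$ is already $m$-complete for computable sets, and correspondingly a $\le$-complete $\DE 1$ relation exists — which is exactly why the non-existence claim only begins at $n\ge 2$.
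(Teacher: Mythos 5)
Your proposal is correct, and its overall architecture matches the paper's: existence via a universal $\SI n$ (resp.\ $\PI n$) set with the index tagged onto each component, and non-existence by invoking Proposition~\ref{prop:comp_implies_set} to convert a hypothetical $\le$-complete $\DE n$ relation into an $m$-complete $\DE n$ set, which cannot exist. The only real divergences are in packaging. For the set-level fact the paper gives a one-line diagonal, $B=\{e\ |\ \varphi_e(e)\!\downarrow\ \ria\ \varphi_e(e)\notin A\}$, diagonalising at the reduction's own index; your follower construction with a limit-lemma approximation $g\leT\ES^{(n-2)}$ proves the same thing but carries machinery you do not actually need, since your $D$ is definable outright by $D(x_e)=1-A(\varphi_e(x_e))$ when $\varphi_e(x_e)$ converges (and $0$ otherwise), with $D\leT\ES'\oplus A$ giving $D\in\DE n$ directly. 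Conversely, you go beyond the paper in two small ways: you build the $\PI n$-complete relation from a universal $\PI n$ set rather than taking the complement of the $\SI n$-complete one (equivalent, both fine), and you treat $n=0$ explicitly via a computable copy of the Fra\"iss\'e limit of finite binary-relation structures, a case the paper's proof silently omits (it only covers the computable case later, and only for preorders); that addendum is a nice touch and is correct.
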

\begin{proof}
Let us first consider the case with $n=1$. As $\SI 1$ relations are precisely
those enumerated by some machine, we can fix some encoding of pairs and treat
$W_e$ as an r.e subset of $\NN^2$. It then makes sense to speak of the $e$th
$\SI 1$ relation:
precisely that which is enumerated by $W_e$.

The $e$th $\SI 1$ relation for any $e$ can then be reduced to
$S=\{(\la x,e\ra,\la y,i\ra)\ |\ e=i\wedge (x,y)\in W_e)\}$ via the function
$x\mapsto\la x,e\ra$. As such this relation is $\SI 1$ complete.

As with sets, the complement of a $\SI 1$-complete relation is
complete for $\PI 1$. This can be seen by considering that for any $R$ in $\SI
1$:
$$(x,y)\in \overline{R}\iff (x,y)\notin R\iff (f(x),f(y))\notin S\iff
(f(x),f(y))\in\overline{S}.$$

For $n>1$ the argument is identical, except in this case a $\SI n$ relation is
one enumerated by a machine with access to a $\PI {n-1}$ oracle. Once we fix an
enumeration of the sets generated by those, the rest follows.

For the $\DE n$ case, recall that for $n\geq 2$ there are no $\DE n$-complete
sets (\cite{Ershov1970}). We
can see this by letting $A$ be any $\DE n$ set, $n\geq 2$, and letting $B$ be as
follows:
$$B=\{e\ |\ \exists y,t\ \varphi_{e,t}(e)=y\ria\varphi_e(e)\notin A\}.$$
That is, $B$ consists of the indices of partial recursive functions that either
diverge on their own index or whose value on their own index is not in $A$. Note
that $B$ is $\DE n$: determining whether a partial recursive function diverges
on a given input is $\PI 1\subseteq\DE n$, and since $A$ is a $\DE n$ set
determining whether an element is not in $A$ is $\DE n$.

We claim that
$B\nleq_m A$: suppose to the contrary that there exists a computable $f$ such
that $x\in B$ if and only
if
$f(x)\in A$. As $f=\varphi_e$ for some $e$, we can consider the behaviour of $f$
on its own index. Suppose $e\in B$. By the definition of $B$, either $f(e)$
diverges, which is impossible as $f$ is computable, or $f(e)\notin A$, which
contradicts our assumption on $f$. It follows that $e\notin B$. However, that is
impossible as it would necessitate that $f(e)\notin A$ and so, by the definition
of $B$, $e\in B$.

This shows that for $n\geq 2$ there can therefore be no $\DE n$-complete set,
so by
Proposition~\ref{prop:comp_implies_set} there can be no $\DE n$-complete
relation.
\end{proof}

In the sequel, we will not concern ourselves with arbitrary binary relations.
Our objects of study are equivalence relations and preorders, and it is of
interest that the argument in Proposition~\ref{prop:Si0n_complete_relations} can
be lifted to establish the existence of equivalence relations and preorders
complete for $\SI n$.

Note that for $W_e\in\SI 1$, if we abuse notation then the
transitive closure of
$W_e$ can be expressed as:
\begin{equation}
W_e\cup\{(x,y)\ |\ \exists k\ ((x,z_1)\in W_e\wedge(z_1,z_2)\in
W_e\wedge\dots\wedge (z_k,y)\in W_e)\}.
\end{equation}
To make the statement more precise, $k$ would need to be a computable encoding
of a finite tuple of indices. Regardless, as we are merely adding an additional
existential quantifier over a finite conjunction, the resulting
relation is still $\SI 1$.

Since every $\SI 1$ equivalence relation
arises from the
transitive closure of some symmetric and reflexive $\SI 1$ relation, it makes
sense to speak of the $e$th $\SI 1$ equivalence relation: for a given $W_e$ add
the reflexive and symmetric pairs and take the transitive closure. We can then
modify the proof of
Proposition~\ref{prop:Si0n_complete_relations} to obtain a $\SI n$-complete
equivalence relation. This gives us the same construction of a universal
equivalence
relation as in
Proposition 2.10 of \cite{Coskey2012}.

We can likewise speak of an $e$th $\SI 1$ preorder:
add the reflexive pairs to $W_e$ and take the transitive closure. A complete
preorder exists by the same construction.

There are other ways we can derive complete equivalence relations from more
general classes. Of use to us will be the simple observation that every
equivalence relation arises as the symmetric fragment of some preorder, and the
symmetric fragment of a $\SI n$, $\PI n$ or $\DE n$ preorder is still $\SI n$,
$\PI n$ or $\DE n$
respectively.

\begin{proposition}\label{prop:preorders}
The existence of a $\SI n$, $\PI n$ or $\DE n$-complete preorder implies the
existence of a $\SI n$, $\PI n$ or $\DE n$-complete equivalence relation
respectively.
\end{proposition}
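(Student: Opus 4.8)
The plan is to take a $\SI n$, $\PI n$ or $\DE n$-complete preorder $P$ and show that its symmetric fragment $E_P$, defined by $(x,y)\in E_P \iff (x,y)\in P \wedge (y,x)\in P$, is a complete equivalence relation for the same level of the hierarchy. There are two things to check: that $E_P$ has the right complexity, and that every equivalence relation of that level reduces to it.

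The complexity claim is the routine observation already flagged in the paragraph preceding the proposition: the symmetric fragment of a $\SI n$ (resp.\ $\PI n$, $\DE n$) relation is obtained by intersecting the relation with its own converse, and both intersection and the converse operation preserve each level of the arithmetical hierarchy. Since $P$ is a preorder, $E_P$ is reflexive (from reflexivity of $P$), symmetric (by construction), and transitive (from transitivity of $P$), so it is genuinely an equivalence relation. First I would dispatch this in a sentence or two.

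The substantive step is completeness. Let $R$ be an arbitrary equivalence relation of the appropriate level. Since $R$ is in particular a preorder of that level, completeness of $P$ gives a computable $f$ with $(x,y)\in R \iff (f(x),f(y))\in P$. I claim this same $f$ witnesses $R\le E_P$. Indeed, because $R$ is symmetric, $(x,y)\in R$ implies $(y,x)\in R$, hence $(f(x),f(y))\in P$ and $(f(y),f(x))\in P$, so $(f(x),f(y))\in E_P$; conversely if $(f(x),f(y))\in E_P$ then in particular $(f(x),f(y))\in P$, so $(x,y)\in R$. Thus $(x,y)\in R \iff (f(x),f(y))\in E_P$, as required.

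I do not expect a serious obstacle here: the key point, and the only one needing a moment's care, is that the reduction $f$ obtained from $P$-completeness is applied to $R$ qua preorder, and then symmetry of $R$ is exactly what upgrades ``$(f(x),f(y))\in P$'' to ``$(f(x),f(y))\in E_P$'' in the forward direction. One should also remark that the map $\mathcal{P}\mapsto E_{\mathcal{P}}$ on preorders is the analogue, at the level of relations, of passing to the symmetric fragment, so this is really the ``same'' construction used to build a complete equivalence relation from a complete preorder in the discussion above.
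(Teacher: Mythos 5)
Your proposal is correct and is essentially identical to the paper's own proof: both pass to the symmetric fragment of the complete preorder, note its complexity and the equivalence-relation axioms, and then reuse the reduction $f$ obtained by viewing the given equivalence relation as a preorder, with symmetry supplying the forward implication into the fragment. Nothing further is needed.
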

\begin{proof}
Let $S$ be a preorder complete for some level of the arithmetical hierarchy. Let
$S_{sym}$ be the symmetric fragment of $S$; that is, $\{(x,y)\ |\ Sxy\wedge
Syx\}$.
We claim that $S_{sym}$ is the required equivalence relation.

It is straightforward to verify that $S_{sym}$ is indeed a $\SI n$ ($\PI n$,$\DE
n$)
equivalence relation: it is symmetric, transitive and reflexive, and $\SI n$
($\PI n$,$\DE n$) languages are closed under intersection.

To see that it is complete, note that as any $\SI n$ ($\PI n$,$\DE n$)
equivalence
relation $E$ is
also a preorder, there exists a computable $f$ satisfying:
$$Exy\qquad\iff\qquad Sf(x)f(y).$$
However as $E$ is symmetric, $S$ has to preserve that property under the image
of $f$ and we get:
$$
Exy\iff Exy\wedge Eyx\iff Sf(x)f(y)\wedge Sf(y)f(x)\iff S_{sym}f(x)f(y).$$
The proposition follows.
\end{proof}

As our enquiry was motivated by equivalence relations, we will end this
section with an
interesting observation between the connection of component-wise and
$m$-reducibility with respect to equivalence classes. This will later be seen in
the construction of a $\PI 1$-complete equivalence relation in
Section~\ref{sec:Pi0nrelations}.

\begin{proposition}\label{prop:relations_minelements}
If $E$ is a $\PI 1$ $(\SI 1)$ equivalence relation, then $A=\{x\ |\ x=\min[x]\}$
is $\SI 1$ $(\PI 1)$.
\end{proposition}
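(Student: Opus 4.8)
The plan is to observe that $x = \min[x]_E$ exactly when no strictly smaller number is $E$-equivalent to $x$, i.e.
\[
x \in A \quad\iff\quad \forall y < x \; \neg(x \mathrel{E} y).
\]
The quantifier here is bounded, so its effect on the arithmetical complexity is only that of a finite conjunction. First I would spell this out: $x \in A$ iff $(x,0)\notin E \wedge (x,1)\notin E \wedge \dots \wedge (x,x-1)\notin E$, a conjunction of $x$ many instances of the complement of $E$.

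Now suppose $E$ is $\PI 1$. Then $\overline{E}$ is $\SI 1$, and $\SI 1$ relations are closed under finite conjunction (intersection); since the length of the conjunction is computably bounded in $x$, the predicate ``$\forall y<x\,(x,y)\notin E$'' is $\SI 1$, so $A \in \SI 1$. Symmetrically, if $E$ is $\SI 1$ then $\overline E$ is $\PI 1$, the bounded conjunction stays $\PI 1$, and hence $A \in \PI 1$. This gives both halves of the statement.

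There is essentially no obstacle here; the only point requiring a word of care is the justification that a conjunction whose length \emph{depends on the input} $x$ stays within the same level of the hierarchy. This is standard: if $R(x,y)$ is $\SI 1$, say $R(x,y) \iff \exists z\, P(x,y,z)$ with $P$ computable, then $\forall y<x\, R(x,y) \iff \exists w\, \forall y < x\, P(x,y,(w)_y)$, where $(w)_y$ denotes the $y$th component of a computable pairing of a finite tuple coded by $w$; the matrix remains computable because the quantifier on $y$ is bounded, so the whole predicate is $\SI 1$. The analogous manipulation (pushing a bounded existential quantifier inside, or simply noting $\PI 1$ is closed under bounded conjunction) handles the $\SI 1$-to-$\PI 1$ direction. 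I would state this closure fact in a single sentence and conclude.
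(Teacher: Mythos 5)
Your proof is correct and takes essentially the same route as the paper: both reduce membership in $A$ to the bounded condition $\forall y<x\,\neg E(x,y)$ and then use the fact that a bounded universal quantifier does not raise the level of the hierarchy. The paper verifies this closure concretely via the enumeration stages of $W_e$ — for the $\PI 1$ case it asserts the existence of a single stage $t$ by which all the required pairs $(x,y)$, $y<x$, have been enumerated, which is the same device as your coded tuple of witnesses $(w)_y$.
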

\begin{proof}
Let $E\in\PI 1$. Observe that $x$ is the minimum element of its equivalence
class if and only if for all $y<x$, $(x,y)\notin E$. $E$ is the complement of
an r.e set, so $E=\NN^2\setminus W_e$ for some $e$. As such:
$$
A=\{x\ |\ \exists t\ [0,x]^2\setminus W_{e,t}=\{(x,x)\}\}.$$
For $E\in\SI 1$, $E=W_e$ so we can simply pick out the elements $y<x$ such that
$(x,y)$ never enters $W_e$:
$$A=\{x\ |\ \forall t\forall y\!<\!x\ (x,y)\notin W_{e,t}\}.$$\end{proof}

\subsection{A \texorpdfstring{$\mathbf{\DE 1}$}{Delta 01}-complete preorder}

While the existence of a $\DE 1$-complete equivalence relation has been known
for some time, the case of preorders has as yet escaped attention. We will hence
here show how a complete preorder can be obtained in the computable case.

As opposed to subsequent sections, we are not here concerned with motivating the
result as mathematically natural. Instead we focus purely on the technical
aspect of the problem. The main difficulty lies in defining a preorder general
enough to allow any
computable preorder to be embedded in it. Intuitively, we address this by
piecing every computable preorder together. However as there is no effective
enumeration of $\DE 1$ sets, let alone the transitive ones, we in fact have to
consider approximations of computable preorders in a given number of steps. As
such we will define our preorder by letting $(x,e,t)\preceq (y,i,r)$ whenever
$x=y,e=i,t=r$, or the following hold:
\begin{itemize}
\item
$e=i$.
\item
The $e$th machine halts on all pairs in $[0,x]$ within $t$ steps, and all pairs
in $[0,y]$ within $r$ steps, and the pairs accepted constitute a preorder.
\item
The pair $(x,y)$ is accepted by the $e$th machine.
\end{itemize}
A relation so defined is clearly reflexive and computable. Once we establish
transitivity, showing it is complete will be straightforward.

\begin{theorem}
The preorder $\preceq$ described above is complete for $\DE 1$ preorders.
\end{theorem}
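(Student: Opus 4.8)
The text preceding the theorem already grants that $\preceq$ is reflexive and computable, so the plan is to prove the two remaining points: transitivity of $\preceq$ (after which $\preceq$ is a bona fide $\DE 1$ preorder), and completeness among $\DE 1$ preorders. For transitivity I would first note that both clauses of the definition force the middle coordinates to be equal, so $\preceq$ is really a disjoint union, indexed by $e$, of relations depending only on the $e$th machine. Given $(x,e,t)\preceq(y,i,r)\preceq(z,j,s)$, if either instance holds by the trivial clause — the two triples being literally equal — the conclusion is immediate, so the substantive case is when both hold by the main clause, forcing $e=i=j$. The first instance then says the $e$th machine halts on $[0,x]^2$ within $t$ steps and on $[0,y]^2$ within $r$ steps, that the pairs it accepts on those finite domains form a preorder, and that $(x,y)$ is accepted; the second instance gives the analogous facts for $(y,z)$ and the domains $[0,y]^2$, $[0,z]^2$. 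To derive $(x,e,t)\preceq(z,j,s)$ I would need the machine to halt on $[0,x]^2$ within $t$ and on $[0,z]^2$ within $s$ — both already in hand — the accepted pairs on the relevant domains to form a preorder, and $(x,z)$ to be accepted.

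The one genuine observation is that the initial segments $[0,x]$, $[0,y]$, $[0,z]$ are nested, so their squares are nested too, and the largest square, $[0,\max(x,y,z)]^2$, already occurs as one of the finite domains named in the hypotheses; hence the machine has in fact halted on all of $[0,\max(x,y,z)]^2$ and the pairs it accepts there form a genuine preorder. Since that preorder contains $(x,y)$ and $(y,z)$ it contains $(x,z)$, and restricting a preorder to a sub-interval leaves a preorder, so all the conditions needed for $(x,e,t)\preceq(z,j,s)$ fall out. I expect this bookkeeping — moving the ``accepted pairs form a preorder'' clause from the domains supplied by the hypotheses to the domain demanded by the conclusion — to be the main obstacle. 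It goes through precisely because the intervals are nested, so one never has to claim that a union of two preorders is a preorder, which is false in general.

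For completeness, let $P$ be an arbitrary $\DE 1$ preorder and fix a total machine, say the $e$th, that decides it. For each $x$ let $t(x)$ be the number of steps the $e$th machine takes to halt on every pair in $[0,x]^2$; this is computable, since one runs the machine on each of those finitely many pairs and takes the maximum. I would then set $f(x)=(x,e,t(x))$, which is computable, and verify $(x,y)\in P\iff f(x)\preceq f(y)$. For the forward direction, if $(x,y)\in P$ then the substantive clause holds of $f(x),f(y)$: the middle coordinates agree, the halting requirements hold by the choice of $t(x)$ and $t(y)$, the accepted pairs on the relevant finite squares are exactly $P$ restricted there and so form a preorder, and $(x,y)$ is accepted because the $e$th machine decides $P$. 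For the converse, if $f(x)\preceq f(y)$ then either the triples coincide, giving $x=y$ and $(x,y)\in P$ by reflexivity, or the substantive clause holds, so the $e$th machine accepts $(x,y)$, which, since it decides $P$, means $(x,y)\in P$. Hence $f$ witnesses $P\le\,\preceq$; as $P$ was an arbitrary $\DE 1$ preorder, the theorem follows.
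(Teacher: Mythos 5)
Your proposal is correct and follows essentially the same route as the paper: the same nested-interval argument for transitivity (the accepted pairs on $[0,\max(x,y,z)]$ form a preorder containing $(x,y)$ and $(y,z)$, hence $(x,z)$), and the same reduction $x\mapsto(x,e,t(x))$ with $t(x)$ the maximal halting time on pairs from $[0,x]$. Your write-up is merely more explicit about the case analysis and the converse direction than the paper's sketch.
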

\begin{proof}
First let us verify that $\preceq$ is transitive. Suppose $(x,e,t)\preceq
(y,i,r)\preceq (z,k,s)$ for distinct triples. For contradiction, let
$(x,e,t)\npreceq (z,k,s).$ The possibility that $e\neq k$ is ruled out because
$e=i$ and $i=k$. The machine in question also needs to halt on all pairs in
$[0,x]$
and $[0,z]$ in order to have $(x,e,t)\preceq
(y,i,r)$ and $(y,i,r)\preceq
(z,k,s)$, so $(x,e,t)\npreceq (z,k,s)$ cannot stem from the failure of the
machine to define a preorder in the required number of steps. As such the only remaining
possibility is that the machine rejects the pair
$(x,z)$. But that is clearly impossible as the machine accepts $(x,y)$ and
$(y,z)$, and the pairs accepted in $[0,\max(x,y,z)]$ constitute a preorder.

It remains to define the reduction $f$, and the choice is clear: given a
computable preorder $P$ and an element $x$, $f$ evaluates $P$ on all pairs in
$[0,x]$ and maps $x$ to $(x,e,t)$, where $e$ is the index of the machine
deciding $P$ and $t$ is the maximum of the time required to decide any of the
pairs.
\end{proof}
\begin{corollary}
There exists a $\DE 1$-complete equivalence relation.
\end{corollary}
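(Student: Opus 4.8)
The plan is to invoke Proposition~\ref{prop:preorders} directly. The preceding theorem established that $\preceq$ is a $\DE 1$-complete preorder, so Proposition~\ref{prop:preorders} immediately supplies a $\DE 1$-complete equivalence relation, namely the symmetric fragment $\preceq_{sym} = \{((x,e,t),(y,i,r)) \mid (x,e,t)\preceq(y,i,r) \text{ and } (y,i,r)\preceq(x,e,t)\}$. So the proof is essentially a one-line appeal, and the only remaining task is to spell out why the hypotheses of that proposition are met and, if desired, to describe $\preceq_{sym}$ concretely.

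First I would note that $\preceq_{sym}$ is $\DE 1$: it is the intersection of the computable relation $\preceq$ with its converse, and the computable relations are closed under intersection. It is reflexive, symmetric and transitive because $\preceq$ is reflexive and transitive. Then, following the proof of Proposition~\ref{prop:preorders}, I would check completeness: given any $\DE 1$ equivalence relation $E$, $E$ is in particular a $\DE 1$ preorder, so by the theorem there is a computable $f$ with $E\,x\,y \iff f(x)\preceq f(y)$; since $E$ is symmetric, $E\,x\,y$ holds iff $E\,x\,y$ and $E\,y\,x$ hold iff $f(x)\preceq f(y)$ and $f(y)\preceq f(x)$ hold iff $f(x)\preceq_{sym} f(y)$, so $E\le\preceq_{sym}$.

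For concreteness I would also unpack the construction: $\preceq_{sym}$ relates two distinct triples $(x,e,t)$ and $(y,i,r)$ precisely when $e=i$, the $e$th machine halts on all pairs in $[0,x]$ within $t$ steps and on all pairs in $[0,y]$ within $r$ steps with the accepted pairs forming a preorder, and both $(x,y)$ and $(y,x)$ are accepted — that is, $x$ and $y$ lie in the same equivalence class of the $e$th computable preorder. Thus $\preceq_{sym}$ is morally the ``piecing together'' of all computable equivalence relations, read off from the computable preorders.

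I do not anticipate any real obstacle: the substantive work — transitivity of $\preceq$, its completeness, and the general preorder-to-equivalence transfer — has already been carried out in the theorem and in Proposition~\ref{prop:preorders}. The only point requiring mild care is bookkeeping: the domain of $\preceq$ consists of triples $(x,e,t)$ taken modulo a fixed computable encoding into $\NN$, so that $\preceq_{sym}$ is literally a subset of $\NN\times\NN$; this is already covered by the conventions of Section~\ref{sec:Notation}, and the reduction $f$ from $E$ composes with that encoding without loss of computability.
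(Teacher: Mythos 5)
Your proposal is correct and matches the paper's own proof, which is exactly a one-line application of Proposition~\ref{prop:preorders} to the $\DE 1$-complete preorder $\preceq$ just constructed. The extra unpacking of $\preceq_{sym}$ is fine but not needed.
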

\begin{proof}
By application of Proposition~\ref{prop:preorders}. We will see a much simpler
example in the next section.
\end{proof}

\chapter{Literature Review}\label{sec:literature}

We can trace the study of component-wise reductions through three fields,
distinguished by the requirements placed on the reduction in each case: by Borel
functions in descriptive set theory, polynomial time functions in complexity
theory and computable functions in computability theory. This is not to say that
these are the only component-wise reducibilities present in the literature; for
instance hyperarithmetical reductions are considered by Fokina and Friedman
(\cite{Fokina2012:2}),
effectively Borel ($\Delta^1_1$) reductions by Fokina, Friedman and T\"ornquist
(\cite{Fokina2010}) and infinite-time computable reductions by Coskey and
Hamkins \cite{Coskey2011}.

We choose to focus on the three selected here due to their affinity in spirit to
the sort of problems we will study in Sections \ref{sec:Si0nrelations} and
\ref{sec:Pi0nrelations}.

\section{Component-wise reductions in Descriptive Set Theory}

Component-wise reductions are in fact the standard, rather than a novel, concept
in descriptive set theory, and the modern interest in them in complexity and
computability theory is likely inspired by these earlier results.

The notion of Borel reducibility is introduced in Friedman and Stanley
(\cite{Friedman1989}),
Definition 1 and 2. Denoted by $\le_B$, if $A$ and $B$ are classes of structures
then $A\le_B B$ if and only if there exists a Borel function $f$ such that for
all $x,y\in A$ we have $f(x),f(y)\in B$ and:
\begin{equation}
x\cong y\iff f(x)\cong f(y).
\end{equation}
As expected, we say $B$ is Borel-complete if $B$ is Borel and for all Borel $A$,
$A\le_B B$.

The classes of trees, linear orders and groups are shown to be Borel-complete as
well as fields of a prime or zero characteristic. Examples of non-complete
classes are finitely branching trees, Abelian torsion groups and fields of a
prime or zero characteristic degree with a finite transcendence rank. It is
interesting that restricted versions of these natural mathematical structures
display similar behaviour in our framework also: we will present a result for
subgroups of $\QQ$ in Section~\ref{sec:Sigma03} and for polynomial time trees in
Section~\ref{sec:Pi01}.

The only equivalence relation studied by \cite{Friedman1989} is thus
isomorphism, however the framework was readily extended to a more general
setting. Much of these results are covered by Gao (\cite{Gao09}).

Two of the more famous results arising out of the fields are Silver's Dichotomy
(\cite{Silver1980}) and the Glimm-Effros dichotomy (Harrington, Kechris and
Louveau \cite{Harrington1990}). Silver's dichotomy states that a $\Pi^1_1$
equivalence relation's position in the degree hierarchy is either at most
equality on $\NN$ or at least equality over $2^\NN$. Another consequence of
\cite{Silver1980} is that equality over $2^\NN$ is the simplest Borel
equivalence relation with uncountably many equivalence classes. The Glimm-Effros
dichotomy states that there is no Borel equivalence relation between equality
over $2^\NN$ and $E_0$, where $E_0$ is the relation of almost equality, i.e.\
$(A,B)\in E_0$ if the symmetric difference of $A$ and $B$ is finite. Part of the
work in the computable case has been motivated by the desire to find similar
results.

\section{Component-wise reductions in Computability Theory} 

Given the breadth of the field, it is to be expected that the notion of
reducibility means different things to different authors. We will briefly
examine some alternate definitions before turning to those which correspond to
our own.

Calvert, Cummins, Knight and Miller (\cite{Calvert2004}) consider the degree
structure of classes of finite structures
under \emph{enumeration reducibility}, denoted by $\le_c$. If $A$ and $B$ are
two classes of finite structures, then $A\le_c B$ means there is a partial
computable function $\varphi$ such that if $a,a'\in A$ and $a\cong a'$ then
$\varphi(a)\cong \varphi(a')$. In our framework such an enquiry would have been
trivial: all finite structures are computable, so the only distinction between
relations on such would be the number of equivalence classes. The key difference
between the approach of \cite{Calvert2004} and the one considered in this work
is that in our study we are concerned with relations on the natural
numbers, i.e. subsets of $\NN\times\NN$, whereas in \cite{Calvert2004}
structures are encoded as subsets of $\NN$, so the isomorphism relation studied
there is in fact a subset of $2^\NN\times2^\NN$.

The resulting framework imposes a further structural constraint on reducibility.
Their encoding ensures that if $a\subseteq b$, then the structure encoded by $a$
is a substructure of that encoded by $b$. In Proposition 1.1, they show that
this induces that if $A\le_c B$ via $f$, then $a\subseteq b$ implies
$\varphi(a)\subseteq \varphi(b)$. In other words, reducibility must respect
substructures as well as isomorphism.

The resulting degree structure is rich. There are uncountably many classes
incomparable via $\le_c$ (Proposition 4.1), as well as countably infinite chains
(Proposition 4.5). A maximal element of the ordering is that of finite graphs,
and an open problem left by the authors is whether the class of finite graphs is
reducible to the class of finite linear orders. This parallels the Complexity
Theory results of \cite{Buss2011}, where the authors likewise find the class of
graphs to be maximal under strong isomorphism reducibility, but are unable to
determine whether the class of linear orders with a unary relation lies strictly
below graphs or not. The notion of computable reducibility on structures encoded
in $2^\NN$ is further explored
by Calvert and Knight (\cite{Calvert2006}). Knight, Miller and Vanden Boom
(\cite{Knight2007}) consider the same setting but
where the reducibility function is only required to be Turing.

Gao and Gerdes (\cite{Gao2001}) use the same notion of reducibility as we do,
labelling it
$m$-reducibility by analogy with classical recursion theory. They demonstrate
that the degree structure of equivalence relations has an initial segment of
type $\omega+1$. Namely, if $Id(n)$ is taken to mean equality modulo $n$ and
$Id$ equality over the the natural numbers, we have the following tower:
\begin{equation}
Id(1)\lneq Id(2)\lneq\dots\lneq Id.
\end{equation}
Every computable equivalence relation with $n$ equivalence classes is
bi-reducible with $Id(n)$, and those with infinitely many equivalence classes
are bi-reducible with $Id$ (Proposition 3.4). Together this gives a complete
characterisation of the computable equivalence relations: precisely those which
have a computable invariant function.

The authors further show that it is possible to embed the 1-degrees
into the degrees induced by component-wise reducibility. 
Given an r.e.\ set $A$, the authors define $R_A=\{(x,y)\ |\ x=y\vee x,y\in A\}$
and show that $A\le_1 B$ implies $R_A\le R_B$. This result is
strengthened by Coskey, Hamkins and Miller (\cite{Coskey2012}) to show that if
$A,B$ are non-computable then
$R_A\le R_B$ implies $A\le_1 B$. In other words, even though $\le$ does not
automatically imply $\le_1$, the degree structure of equivalence relations
nevertheless contains a copy of the 1-degrees.

It should be noted that while, as we have seen, the number of equivalence
classes is crucial to the complexity of a computable equivalence relation, the
size of these classes does not seem to matter much. This pattern persists
throughout the arithmetical hierarchy, and an
equivalence relation does not need to have an overly complex equivalence class
structure to rank highly with respect to $\le$. Fokina and Friedman
(\cite{Fokina2012:2}) show that it
is in fact possible to construct a properly $\Sigma^1_1$ equivalence relation,
vastly more complex than anything we deal with here, which has equivalence
classes of size one or two only (Claim 5.1).

Much of \cite{Coskey2012} focuses on the parallels of Borel
reducibility in a computable setting. They show that a version of Silver's
Dichotomy fails: equality of of r.e.\ sets does not appear to have any special
status in the computable world. There are relations strictly below and above it,
even within $\PI 2$. In fact, there exist infinite chains and arbitrarily large
finite antichains of equivalence relations lying below equality of r.e.\ sets
(Corollary 4.14). The authors leave the question as to whether a parallel of the
Glimm-Effros dichotomy exists open.

Most similar in spirit to our enquiry is that of Fokina, Friedman and Nies
(\cite{Fokina2012}). Their study is of equivalence relations complete for $\SI
3$. Their main result (Theorem 1) demonstrates that 1-equivalence on r.e.\ sets
is complete, which they later use to demonstrate that computable isomorphism of
computable predecessor trees, Boolean algebras and metric spaces is $\SI
3$-complete. This is a
technique we will use in Sections \ref{sec:Sigma03} and \ref{sec:Pi01}:
establish the completeness of a relation that may only seem meaningful to a
computer scientist, then use that result to show the completeness of more
natural
relations in mathematics.

\section{Component-wise reductions in Complexity Theory}\label{sec:lit_complex}

Polynomial time component-wise reductions seem to be first introduced by Fortnow
and Grochow
(\cite{Fortnow2011}, Definition 4.13). As the kernel of a polynomial time
function $f$ is
the set of $(x,y)$ satisfying $f(x)=f(y)$, \cite{Fortnow2011} define a
\emph{kernel
reduction} from $E$ to $S$ to be a polynomial time function satisfying:
\begin{equation}
Exy\iff Sf(x)f(y).
\end{equation}
In other words, $x,y,$ lie in the kernel of $f$ modulo $S$.

The interest in this notion of reduction was due to the fact that an equivalence
relation kernel-reduces to equality if and only if it has a polynomial time
computable invariant. The paper
leaves as an
open question whether kernel reductions are actually distinct from standard
polynomial time reductions, and if so whether P has a maximum element under
kernel reductions.

The issue was picked up on by Buss, Chen, Flum, Friedman and M\"uller
(\cite{Buss2011}) in their investigation of \emph{strong
isomorphism
reductions}, $\le_{iso}$, and \emph{strong equivalence reductions}, $\le_{eq}$,
depending on whether the
underlying relation concerned the isomorphism of finite objects or a general
equivalence relation.

More specifically, they define a strong isomorphism reduction to be a polynomial
time $f$ between
classes of arbitrarily large structures closed under isomorphism, where a
structure is a finite tuple of relations over a finite domain. Then we can say
for classes $C,D$ that $C\le_{iso} D$ if and only if for all
$\mathcal{A},\mathcal{B}\in C$:
\begin{equation}
\mathcal{A}\cong\mathcal{B}\iff f(\mathcal{A})\cong f(\mathcal{B}).
\end{equation}

Strong equivalence reductions are defined more in line with the component-wise
reductions we defined in \eqref{eq:component_wise}. As such $\le_{eq}$ does not
range over classes, but rather equivalence relations. As expected, a polynomial
time $f$ is a reduction from $R$ to $E$, denoted $R\le_{eq}E$, if it satisfies:
\begin{equation}
(x,y)\in R\iff (f(x),f(y))\in E.
\end{equation}

Within the
paper the open question of \cite{Fortnow2011} regarding the distinction of
$\leq_m^P$ from a component-wise reduction was answered by demonstrating a rich
structure of strong isomorphism degrees within P. The authors demonstrate that
the countable, atomless Boolean algebra is embeddable into the degree structure
induced by $\le_{iso}$ on polynomial time classes of structures (Theorem 5.1).

There thus exist infinite chains of equivalence relations under strong
isomorphism reductions despite being equivalent under $\le_m^P$; one is
therefore able to achieve an infinitely finer gradation of degrees with
component-wise reductions in complexity theory.

As we have seen in the previous section, for computable equivalence relations
the number of equivalence classes determines the degree structure.
\cite{Buss2011} study a parallel notion as \emph{potential reducibility}: if
$A(n)$
and $B(n)$ represent the number of isomorphism types of structures of size at
most $
n$, the authors observe that if $A$ strong isomorphism reduces to $B$, it must
be the case that for some polynomial $p$, $A(n)\le B(p(n))$ for all $n\in\NN$.
When the latter condition holds $A$ is said to potentially reduce to $B$.

The authors were unable to determine whether potential reducibility differs from
their other notions in the absence of further complexity theoretic assumptions.
However they do demonstrate that if potential reducibility and strong
isomorphism
reducibility is distinct then P$\neq\#$P. For
strong equivalence reducibility, being distinct from potential reducibility
would imply P$\neq$NP.

A curious distinction between the case with complexity and computability
reductions is that while the existence of complete equivalence relations, at
least for $\SI n$, was never in question, the authors of
\cite{Buss2011} were unable to answer the
second question of \cite{Fortnow2011} definitively. They did, however, manage to
derive a necessary and sufficient condition: any of the classes of polynomial
time, NP, or CoNP equivalence relations admit a
complete element if and only if there exists an effective enumeration of that
class.

The ``if" direction clearly carries over to the computable case: this is
precisely the argument that guarantees the existence of $\SI n$ complete
equivalence relations. The second direction, however, relies on the fact that
there exists an effective enumeration of polynomial time functions, and as such
given a maximum element the class of polynomial time equivalence relations it
would be possible to enumerate all
elements
of that class via the preimage of the reduction functions. We cannot
replicate such a technique in the computable case.
However, it does not appear that we need to: in Section~\ref{sec:Pi0nrelations}
we will demonstrate the existence of a $\PI
1$-complete equivalence relation, whereas we do not know of any effective
enumeration of that class.

\chapter{The case of \texorpdfstring{$\mathbf{\SI n}$}{Sigma 0n}}\label{sec:Si0nrelations}

While the existence of $\SI n$-complete equivalence relations and preorders
follow immediately from the transitive closure of a $\SI n$ relation being $\SI
n$, natural examples of such relations are nevertheless interesting as
component-wise reducibility induces a finer granularity on the arithmetical
hierarchy than $\le_m$. In this section we give three examples of preorders and
associated equivalence relations from logic, complexity theory and algebra
complete for $\SI 1,\SI 2$ and $\SI 3$ respectively.

\section{\texorpdfstring{$\mathbf{\SI 1}$}{Sigma 01} and Logical Equivalence}

The reader is reminded that the following problem is $\SI 1$-complete:
\begin{equation}
\{\varphi\ |\ \exists\Phi:\ \Phi\text{ is a FOL proof of }\varphi \}.
\end{equation}
By FOL we mean first order logic with a full signature; that is, with
predicate symbols of any arity\footnote{Of course, the full strength of first
order logic is not required. The papers of Church and Turing
(\cite{Church1936},\cite{Turing1937}) established this result only needing
predicates of arity at most 2. However as having predicates of higher arity is
convenient for our construction, we do not place ourselves under any
restrictions here.}. This yields a natural $\SI 1$ equivalence
relation, logical equivalence:
\begin{equation}
\{(\varphi,\psi)\ |\ \exists\Phi:\ \Phi\text{ is a FOL proof of }\varphi\lra\psi
\}.
\end{equation}
In a sense this result is not new. Montagna and Sorbi (\cite{Montagna1985}) have
shown that implication in any consistent, finitely axiomatisable r.e.\ extension
of Peano arithmetic is
complete for $\SI 1$-preorders. In an earlier result Pour-El and Kripke
(\cite{Pour-El1967}) have shown that finitely axiomatisable theories
containing a sufficient portion of Peano Arithmetic are complete with respect to
the class of such theories under a deduction-preserving mapping. As the
conjunction of any finite set of axioms can be thrown into the antecedent of a
first order sentence, both of these frameworks can extend to our result. What we
present below, however, is a direct proof,
and the only fragment of arithmetic we require is that which is already
available in first order logic.

\begin{theorem}
Logical implication is complete for $\SI 1$ preorders. That is, for every $\SI
1$ preorder $P$ and every $x,y$ there exists a computable $f$ mapping integers
to FOL formulae such that:
\begin{equation}
Pxy\iff \vdash_{FOL}f(x)\ria f(y).
\end{equation}
\end{theorem}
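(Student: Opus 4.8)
The plan is to encode an arbitrary $\SI 1$ preorder $P$ into first order provability by building, for each integer $x$, a single FOL sentence $f(x)$ whose provable implications mirror exactly the $P$-relation. Since every $\SI 1$ preorder arises as the transitive closure of (the reflexive closure of) some r.e. set $W_e$, it suffices to work with such a presentation: $Pxy$ holds iff there is a finite chain $x = z_0, z_1, \dots, z_k = y$ with each consecutive pair in $W_e$ (allowing $k=0$). The idea is to let $f(x)$ assert, in a first order way, "there is a witnessing computation showing that $x$ stands in the $W_e$-chain relation to me," so that $f(x) \ria f(y)$ is provable precisely when such a chain from $x$ to $y$ exists.

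Concretely, I would introduce a single predicate symbol $R$ (of suitable arity) intended to name the r.e. relation $W_e$, together with enough first order machinery to talk about finite sequences and the step-counter of a Turing machine; crucially, by the footnote we are free to use predicates of any arity, which lets us sidestep any coding overhead. For each pair $(a,b)$ that enters $W_e$ at stage $t$, we put into $f(x)$ a conjunct of the form "if the computation has run for $t$ steps then $R(\bar a, \bar b)$ holds" — but of course $f(x)$ is a single finite sentence, so the trick is the standard one: rather than listing facts about $W_e$ (which is infinite), $f(x)$ should be the sentence $\exists(\text{computation of length } s)\bigl[\text{this computation is a valid run of machine } e \text{ for } s \text{ steps} \ \land\ R\text{-reachability of me from }\bar x \text{ is certified by it}\bigr]$. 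Then $\vdash_{FOL} f(x) \ria f(y)$ holds iff, provably in FOL, any computation certifying reachability-from-$x$ can be extended/reused to certify reachability-from-$y$; since "reachable from $x$" FOL-implies "reachable from $y$" exactly when there is an actual $W_e$-chain from $x$ to $y$, and the soundness/completeness of FOL together with the $\SI 1$-completeness of provability pins this down, we get $Pxy \iff {\vdash_{FOL} f(x)\ria f(y)}$. Reflexivity of $P$ is matched by $\vdash f(x)\ria f(x)$, and transitivity by the transitivity of provable implication, so the two directions of the biconditional are: ($\Rightarrow$) a $W_e$-chain from $x$ to $y$ is a finite object and FOL proves the corresponding implication by chaining the relevant instances; ($\Leftarrow$) if no chain exists, build a model of $f(x)\land\neg f(y)$ by interpreting $R$ as the true $W_e$ (or its transitive closure) and using that $y$ is not in the $W_e$-closure of $x$.

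The main obstacle, and the part needing genuine care, is the ($\Leftarrow$) direction: one must be sure that $f$ does not accidentally make $f(x)\ria f(y)$ provable when $Pxy$ fails. This means the FOL-theory implicitly carried by $f(x)$ must be weak enough that it does \emph{not} force $R$ to behave like a preorder or to contain spurious pairs — in particular $f(x)$ must not entail that $R$ is transitive, only that $R$ contains the specific certified pairs, so that the intended countermodel (interpret $R$ as exactly $W_e$, not its closure) actually satisfies $f(x)$ while refuting $f(y)$. Getting the sentence to be simultaneously strong enough for ($\Rightarrow$) (FOL must actually prove the chained implication, which forces $f(x)$ to carry a usable induction-free account of finite computations) and weak enough for ($\Leftarrow$) is the delicate balance; I expect the construction to lean on the fact that a $t$-step computation is a bounded, quantifier-free-describable finite object, so the required arithmetic stays inside pure first order logic with no need for any axioms of Peano arithmetic, exactly as the theorem statement advertises. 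Once $f(x)$ is written down this way, computability of $f$ is immediate (given $x$ one can effectively write the sentence), and the theorem follows; the passage to the associated $\SI 1$-complete equivalence relation "logical equivalence" is then just Proposition~\ref{prop:preorders}.
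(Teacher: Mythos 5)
Your overall strategy (present $P$ as chain-reachability in an r.e.\ set, prove the implication from the finite witness, refute it by a countermodel built from the intended interpretation when no chain exists) is the same family as the paper's argument, but the heart of the matter --- the actual shape of the sentence $f(x)$ --- is exactly what you leave unresolved, and the one concrete shape you propose does not work as stated. If $f(x)$ is a purely existential sentence ``there exists a (coded) computation of machine $e$ certifying reachability from $\bar x$'', the forward direction already fails: to prove $f(x)\ria f(y)$ one must, working inside an arbitrary model, turn an arbitrary (possibly nonstandard) witness of the $x$-certificate into a witness of the $y$-certificate, and a bare existential sentence licenses no such construction --- one can build a model whose domain contains an object satisfying the $x$-certificate description but no object satisfying the $y$-certificate description (for instance because the numeral scaffolding needed to name $\bar y$ is simply absent). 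What is missing is that $f(x)$ must carry, as conjuncts, \emph{universally quantified} axioms: successor/ordering axioms and transition axioms for the enumerating machine, so that a derivation can propagate configuration descriptions step by step and manufacture the needed witnesses. This is precisely the paper's construction: $f(x)=NAT\wedge TRA\wedge CONF(x)$, where the machine is nondeterministic and, upon enumerating any $y$ with $Pxy$, ``restarts'' in the initial configuration on input $y$; then $CONF(y)$, and hence $f(y)$, is literally derivable from $f(x)$ by simulating the run, while for $\neg Pxy$ the standard model of the computation tree satisfies $f(x)$ and refutes $f(y)$.

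A second, related gap is that your justification of the biconditional is circular and has the direction garbled. The claim that `` `reachable from $x$' FOL-implies `reachable from $y$' exactly when there is a $W_e$-chain from $x$ to $y$'' is the statement to be proved, not a fact to cite; and on the literal reading it is backwards, since a chain from $x$ to $y$ makes the set of points reachable from $y$ a \emph{subset} of those reachable from $x$, so the predicate ``reachable from $y$'' entails ``reachable from $x$'', not conversely. The intended direction only comes out once $f$ is chosen so that $f(x)$ describes a start configuration and $f(y)$ is a consequence of running the machine --- again, the machinery you omit. Likewise, invoking the $\SI 1$-completeness of FOL provability does not help: that gives an $m$-reduction of sets, whereas the theorem needs a single map $f$ acting component-wise on the preorder. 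Your countermodel idea for the converse is sound in outline and matches the paper's, but it only goes through after $f(x)$ has been given the axioms-plus-configuration form.
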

\begin{proof}
Observe that $P\in\SI 1$ implies that for a given $x$ the set $\{y\ |\ Pxy\}$
is uniformly r.e., meaning there exists a machine which enumerates it. Our proof
relies on representing the computation of such a machine in first order logic.
We define a non-deterministic $M$ that on input $x$ begins to enumerate all $y$
above it. Every time a new $y$ is found, the machine makes a non-deterministic
choice. One branch will continue with the described computation, whereas the
other will ``restart" on input $y$. That is: clear the tape, write $y$ on it,
move to the first cell and enter the initial state. Observe that if $Pxy$, some
branch of $M$ on input $x$ will eventually enter the initial configuration of
$M$ on input $y$.

The configurations of $M$ will correspond to formulae in the logic, and our
construction will ensure that if $\varphi$ and $\psi$ are representations of
configurations, and $M$ can make a transition from $\varphi$ to $\psi$, then
$\varphi\ria\psi$ will be provable in the logic. Therefore our reduction will be
given by an $f$ that maps $x$ to the the initial configuration of $M$ on input
$x$. Readers familiar with such constructions can safely skip the rest of the
proof: it contains nothing but technicalities.

Let us be more specific about our machine. We require a non-deterministic $M$
with two tapes: a working tape and a printing tape.
For simplicity, $M$ is working on a unary alphabet. The initial configuration of
$M$ on input $x$ then consists of the first $x$ cells of the working tape
marked, both heads at the leftmost edge of their tapes and the machine in the
initial state, $q_0$. When run on a tape with the first $x$ cells
marked the machine will print all $y$ with $Pxy$. We take printing $y$ to mean
that it will mark
the first $y$ cells of the print tape, then enter a special state $q_{pr}$.

After exiting $q_{pr}$ it enters another special state $q_{nd}$ and then makes a
non-deterministic branch. One
branch continues searching for elements above $x$ in $P$, the other clears both
tapes, marks the first $y$ cells of the working tape, moves the heads to the
start of the tapes, enters a special state $q_{re}$, then enters the initial
state $q_0$. Intuitively, one branch continues
with the computation, the other
restarts computation on input $y$. For the sake of convenience, we
assume the machine has the ability to keep the head still and not write or read
anything during a transition, which it only ever uses in states $q_{pr},q_{nd}$
and $q_{re}$. We also require that the \emph{cleaning} states, i.e.\ those used
between $q_{nd}$ and $q_{re}$, are not used anywhere else in the computation,
and the special states $q_{pr},q_{nd}$ and $q_{re}$ are not used anywhere except
in the circumstances outlined above.

The idea behind the construction is to have $f$ map $x$ and $y$ to the initial
configurations of $M$ on input $x$ and $y$ respectively, and then represent the
computation of $M$
in first order logic. If $Pxy$, then some computation branch of $M$ on input $x$
will eventually enter the initial configuration of $M$ on input $y$ and as such
it will be possible to derive $f(y)$ from $f(x)$.

As a convention, we will treat the first cell on the tape as 1 rather than 0.

We require the following predicate symbols in our logic. For the reader's
convenience we note the intended interpretation in parentheses, but of course as
far as syntax is concerned these are just arbitrary predicates.
\begin{itemize}
\item
$C_0$: 4-ary predicate ($C_0w_ms_td_bu_x$ means cell $m$ of the working
tape is marked at time $t$ with initial machine input $x$, and $d_b$ is a
counter used to track the branch of the computation).
\item
$C_1$: 4-ary predicate (As $C_0$, but for the output tape).
\item
$Q_i$ 4-ary predicate for all machine states $q_i$ ($Q_i s_td_bu_xn_z$ means the
machine is
in state $q_i$ at time $t$ in branch $b$ with initial input $x$. The variable
$n_z$ is $y+1$ if the machine is in the process of cleaning the tape to restart
on $y$, otherwise it is 0).
\item
$H_0$: 4-ary predicate ($H_0w_ms_td_bu_x$ means working head is over cell $m$
at time
$t$ with initial machine input $x$ on branch $b$ of the computation).
\item
$H_1$: 4-ary predicate (Same as $H_0$ but for the output head).
\item
$P$: 4-ary predicate ($Pw_ys_tu_x$ means the machine printed $y$ at time $t$
with
initial input $x$).
\item
$S$: Binary predicate ($Sxy$ means $y=x+1$).
\item
$Z$: Unary predicate ($Zx$ means $x=0$).
\item
$L$: Binary predicate ($Lxy$ means $x\le y$).
\item
$E$: Binary predicate ($Exy$ means $x=y$).
\end{itemize}
A few notes on our use of the variables: the intended interpretation of our
domain is the natural numbers, and so we use subscripts above to signal the
number a specific variable is meant to represent. As our variables will
generally be bounded by quantifiers, we can assume without loss of generality
that they are always renamed to conform with the conventions above. E.g.\
$s_{i+1}$ represents the time step following $s_i$, $w_m$ is the $m$th cell and
so on. We do however wish to remind the reader that this is merely for
notational intuition: these are arbitrary variables, and arithmetic is not a
part
of our logic.

We require the branch counters on certain predicates because as our machine is
non-deterministic, its configuration is not uniquely determined by the clock.
The head, for instance, can be in several positions at a given time and if we
are to prevent the head in a parallel computation branch from interfering, we
need to keep track of where in the computation tree each is located. This is
purely a technicality, the only places in which we will touch the branch counter
is in state $q_{nd}$, where the choice happens, and in state $q_{re}$, where the
machine restarts.

The variable $n_z$ is carried by the state predicate purely so that we can
``remember" the value we are supposed to restart the machine on. We alter it
only after the print state, and when the machine is restarted.

It is necessary
to record the initial input via $u_x$ because a single predicate captures not
the entire machine configuration, but just a component of it: it is entirely
possible for two different configurations to exist at the same time and
same branch number of the computation, but with different initial input. We do
not wish these to interfere with each other. The only time we change $u_x$ is
when we reset the machine.

Next we define some formulae to capture the behaviour of $M$ and the
characteristics of the natural numbers. As is usual in logic we cannot guarantee
the uniqueness of zero and successor, so we have to settle for uniqueness modulo
$E$.

\begin{enumerate}
\item
$\forall x\ Lxx$ (reflexivity of $L$).
\item
$\forall xyz\ Lxy\wedge Lyz\ria Lxz$ (transitivity of $L$).
\item
$\forall xy\ Lxy\vee Lyx$ (totality of $L$).
\item
$\forall xy\ Exy$ (reflexivity of $E$).
\item
$\forall xyz\ Exy\wedge Eyz\ria Exz$ (transitivity of $E$).
\item
$\forall xy\ Exy\ria Eyx$ (symmetry of $E$).
\item
$\forall xy\ Zx\ria Lxy$ (zero is least element).
\item
$\forall xy\ Sxy\ria Lxy$ ($L$ respects successor).
\item
$\forall xy\ Zx\ria\neg Syx$ (zero has no predecessor).
\item
$\forall x\exists y\ Sxy$ (every number has a successor).
\item
$\forall x\exists yz\ Sxy\wedge Sxz\ria Eyz$ (uniqueness of successor modulo
$E$).
\item
$\forall x\exists y\ \neg Zx\ria Syx$ (every non-zero number has a predecessor).
\end{enumerate}
As well as an indiscernibility of identicals clause, stating that $Exy$ implies
that the predicates we defined above are true for $x$ if and only if they are
true for $y$. Note that this is a valid first order formulae because we have
defined only finitely many predicate symbols. The conjunction of the above we
will label $NAT$.

We next turn to the machine
behaviour. Recall that every transition of $M$ can be represented as a sextuple
$(T,q_i,R,W,M,q_j)$, where $q_i$ is the machine
state before the transition,
$R\in\{0,1\}$ represents whether or not the cell under the head on tape $T$ is
marked,
$W\in\{0,1\}$ the action of the machine on the cell under the head on tape $T$,
$M\in\{Le,Ri\}$
the direction the head moves on tape $T$ and $q_j$ the end state.

\begin{enumerate}
\item\label{enum:trans_one}
For every transition of the form $(T,q_i,R,W,M,q_j)$, where
$q_i\notin\{q_{pr},q_{nd},q_{re}\}$:
$$\forall w_ms_td_bu_xn_z\exists s_{t+1}\ (Ss_ts_{t+1}\wedge (ANT\ria
(Q_js_{t+1}d_bu_xn_z\wedge TAPE\wedge HEAD))).$$

$ANT$ states the machine read $R$ in state $q_i$ on tape $T$. Note that
whether or not the last clause is present is contingent on the value of $R$:
$$Q_is_td_bu_xn_z\wedge H_Tw_ms_td_bu_x\wedge C_Tw_ms_td_bu_x[\text{if
}R=1].$$

$TAPE$ states that the cells not under the head on tape $T$ remain
unchanged:
$$\forall w_kk(\neg H_Tw_ks_td_bu_x\ria$$
$$((C_Tw_ks_td_bu_x\ria
C_Tw_ks_{t+1}d_bu_x)\wedge(C_{1-T}w_ks_td_bu_x\ria
C_{1-T}w_ks_{t+1}d_bu_x))).$$

$HEAD$ states that the $T$-head writes $W$ and moves $M$, while the other
head remains where it is. Note that the formulae will vary depending on the
values of $W,R$ and $M$:

$C_Tw_ms_{t+1}d_bu_x\ [\text{if }W=1]\ \wedge\\ (\forall i\ Zi\ria\\((Siw_m\ria
H_Tw_ms_{t+1}d_bu_x)\wedge(\neg
Siw_m\ria(\exists w_{m-1}\ Sw_{m-1}w_m\wedge H_Tw_{m-1}s_td_bu_x))))
\\\relax [\text{if }M=Le]\ \wedge\\ (\exists w_{m+1}\
Sw_mw_{m+1}\wedge H_Tw_{m+1}s_{t+1}d_bu_x)\ [\text{if }M=Ri]\ \wedge\\
(\forall w_ms_td_bu_x\ H_{1-T}w_ms_td_bu_x\ria
H_{1-T}w_ms_{t+1}d_bu_x).$

\item\label{enum:trans_two}
For every transition with $q_i=q_{pr}$, we remind the reader that $M$ prints the
contents of the output tape and transitions to $Q_{nd}$ without moving the heads
or writing anything. So we have:
$$\forall w_ms_td_bu_xn_z\exists s_{t+1}\ (Ss_ts_{t+1}\wedge$$
$$(Q_{pr}s_td_bu_xn_z\ria (Q_{nd}s_{t+1}d_bu_xn_z\wedge TAPE\wedge
HEAD\wedge PRINT))).$$

$TAPE$ preserves the tape contents:
$$(C_0w_ms_td_bu_x\ria C_0w_ms_{t+1}d_bu_x)\wedge(C_1w_ms_td_bu_x\ria
C_1w_ms_{t+1}d_bu_x).$$

$HEAD$ preserves the head locations:
$$(H_0w_ms_td_bu_x\ria H_0w_ms_{t+1}d_bu_x)\wedge(H_1w_ms_td_bu_x\ria
H_1w_ms_{t+1}d_bu_x).$$

$PRINT$ prints the contents of the output tape:
$$\forall w_yw_m (C_1w_ys_td_bu_x\ria (C_1w_ms_td_bu_x\ria Lw_mw_y))\ria
Pw_ys_{t+1}u_x.$$

\item\label{enum:trans_three}
For every transition with $q_i=q_{nd}$ and $q_j$ a cleaning state, we remind the
reader that $M$ makes a transition to $q_j$ without doing anything else. All we
need to do is preserve the tape contents, update the branch variable and set
$n_z$ to remember the printed value. Note
that this will induce that all cleaning steps have an odd branch counter:
$$\forall w_ms_td_bu_xn_zw_y\exists s_{t+1}d_{b+1}\ (Ss_ts_{t+1}\wedge
Sd_bd_{b+1}\wedge$$
$$(Pw_ys_tu_x\wedge Q_{nd}s_td_bu_xn_z\ria (Q_{j}s_{t+1}d_{b+1}u_xw_y\wedge
TAPE\wedge
HEAD))).$$

$TAPE$ preserves the tape contents:
$$(C_0w_ms_td_bu_x\ria
C_0w_ms_{t+1}d_{b+1}u_x)\wedge(C_1w_ms_{t}d_bu_x\ria
C_1w_ms_{t+1}d_{b+1}u_x).$$

$HEAD$ preserves the head locations:
$$(H_0w_ms_{t}d_bu_x\ria
H_0w_ms_{t+1}d_{b+1}u_x)\wedge(H_1w_ms_{t}d_bu_x\ria
H_1w_ms_{t+1}d_{b+1}u_x).$$

\item\label{enum:trans_four}
For every transition with $q_i=q_{nd}$ and $q_j$ a non-cleaning state, we remind
the
reader that $M$ makes a transition to $q_j$ without doing anything else. All we
need to do is preserve the tape contents and update the branch variable. Note
that this induces that all the non-cleaning steps have an even branch counter:
$$\forall w_ms_td_bu_xn_z\exists s_{t+1}d_{b+1}d_{b+2}\ Ss_ts_{t+1}\wedge
Sd_bd_{b+1}\wedge Sd_{b+1}d_{b+2}\wedge$$
$$(Q_{nd}s_td_bu_xn_z\ria
(Q_{j}s_{t+1}d_{b+2}u_xn_z\wedge TAPE\wedge
HEAD)).$$

$TAPE$ preserves the tape contents:
$$(C_0w_ms_td_bu_x\ria
C_0w_ms_{t+1}d_{b+2}u_x)\wedge(C_1w_ms_td_bu_x\ria
C_1w_ms_{t+1}d_{b+2}u_x).$$

$HEAD$ preserves the head locations:
$$(H_0w_ms_td_bu_x\ria
H_0w_ms_{t+1}d_{b+2}u_x)\wedge(H_1w_ms_td_bu_x\ria
H_1w_ms_{t+1}d_{b+2}u_x).$$

\item\label{enum:trans_five}
For every transition with $q_i=q_{re}$ recall that $q_j=q_0$ and all other
parameters are ignored. The heads should already be at the start of the tapes,
so all we need to do is preserve the tape contents and reset the counters. We
then have:
$$\forall w_ms_td_bu_xn_zi\exists u_yw_1\ (Zi\wedge Q_{re}s_td_bu_xn_z\ria$$
$$(Su_yn_z\wedge Siw_1\wedge Q_0iiu_yi\wedge TAPE\wedge
H_0w_1iiu_y\wedge H_1w_1iiu_y)).$$

Since the output tape should be empty, $TAPE$ is:
$$C_0w_ms_td_bu_x\ria C_0w_miiu_y.$$
\end{enumerate}

The conjunction of the above we will label $TRA$.

We will use $CONF(x)$ as shorthand for the formula representing the initial
configuration of $M$ on input $x$. That is:
$$\exists w_1\dots w_xu_1\dots u_x \
Z0\wedge$$
$$\bigwedge_{i=0}^{x-1}(Sw_iw_{i+1}\wedge Su_iu_{i+1})\wedge
Q_000u_x0\wedge H_0w_100u_x\wedge H_1w_100u_x\wedge\bigwedge_{i=1}^{x}
C_0w_i00u_x.$$

We now have all the components necessary to define $f$. Namely:
$$f(x)=NAT\wedge
TRA\wedge CONF(x).$$
It remains to show that $Pxy$ is equivalent to $f(x)\ria
f(y)$.

First, suppose that $\neg Pxy$. Recall that the completeness of first order
logic implies a formula is provable if and only if it is true under all
interpretations. In particular, it must also be true under the interpretation we
have been using throughout the construction: that of the simulation of $M$. As
$\neg Pxy$ implies that
$M$ on input $x$ will never reach the initial configuration of $M$ on input $y$,
$f(x)\ria f(y)$ is false under some interpretation and thus
cannot be provable in the logic.

Now suppose $Pxy$. Observe that this implies there exists a computation history
of $M$ starting with the initial configuration of $M$ on $x$ and leading to the
initial configuration of $M$ on $y$. Let $HIST(x,t)$ be the formula representing
the machine configuration at time $t$ of this history. We will show that for
$0\le t\le r-1$,
$\bigwedge_{i\leq t} HIST(x,i)\ria HIST(x,t+1)$.
This established, by the principle of strong induction we will have $f(x)\ria
f(y)$.

Note that there is no generality
lost in assuming that $r-1$ is the only step in the history where the state is
$q_{re}$. By the transitvity of $P$, whatever elements $M$ may reach after
restarting on another element, it will eventually reach by taking the
non-cleaning branches until the desired number is printed.

Before we proceed, we must be explicit about what we mean by $HIST(x,t)$
representing the machine configuration. Let $p$ be the farthest cell to the
right on either tape either marked or with a head above it, $b$ the current
branch counter and $n$ either 0 or $y+1$ depending on whether the machine has
entered the cleaning stage. Let $t\neq r$.

$$HIST(x,t)=\exists w_1\dots w_ps_1\dots s_tu_1\dots u_xd_1\dots d_bn_1\dots
n_z$$
$$NAT\wedge TRA\wedge Z0\wedge
Succ(x,t)\wedge State(x,t)\wedge
Heads(x,t)\wedge Tape(x,t).$$
$Succ(x,t)$ associates all active variables with the numbers they represent: 
$$Succ(x,t)=S0w_1\wedge S0s_1\wedge S0u_1\wedge S0d_1\wedge S0n_1\wedge$$
$$\bigwedge_{i=1}^{p-1}Sw_iw_{i+1}\wedge\bigwedge_{i=1}^{t-1}Ss_is_{i+1}
\wedge\bigwedge_{i=1}^{x-1}Su_iu_{i+1}\wedge\bigwedge_{i=1}^{z-1}Sn_in_{i+1}.$$
$State(x,t)$
is the current state, $Heads(x,t)$ is the positions of the heads and $Tape(x,t)$
is the contents of the tape. Note that this is consistent with our requirement
that $HIST(x,0)=f(x)$. $HIST(x,r)$ we let equal to $f(y)$ as mentioned above. We
will show that the implication holds for every component individually.

$NAT$, $TRA$ and $Z0$ are constant throughout, so clearly $NAT\wedge TRA\wedge
Z0\ria NAT\wedge
TRA\wedge Z0$ via the law of identity.

The $Succ(x,t)$ clause is also simple. To show $Succ(x,t)$ we need only show the
existence of sufficiently many successor elements, but $NAT$ allows us to always
derive the existence of a new variable that is the successor to some existing
variable. As such for all $t$, $NAT\wedge Z0\ria Succ(x,t+1)$ simply by
introducing the correct number of successors and renaming them as needed.

If the machine is in state $i$ at step $t$, then $State(x,t)$ is simply
$Q_itbw_xz$. It is easy to verify that if $t<r-1$ then we can derive
$Q_j(t+1)bw_xz$ for the appropriate $j$ via one of
\ref{enum:trans_one}-\ref{enum:trans_four}, and hence $State(x,t)\ria
State(x,t+1)$. When $t=r-1$, however, while \ref{enum:trans_five} says that we
can derive
$Q_000u_y0$, we must be careful not to be deceived by notation: $u_y$ in this
context is a simply another name for $n_{z-1}$ ($n_z$ being such that
$Q_{re}s_{r-1}d_bu_xn_z$ holds at time $r-1$), whereas in $State(x,s)=Q_000u_y0$
it is specifically the $y$th successor of zero. To prove the implication holds
we
must therefore show that $En_{z-1}u_y$ holds at time $r$; that is, is derivable
from $\bigwedge_{t<r} HIST(x,t)$.

Let $t'$ be the time in which the machine prints $y$. Observe that the machine
is in state $q_{pr}$, so \ref{enum:trans_two} applies. From the $PRINT$
statement we can derive $Pw_ys_{t'+1}u_x$. In the context of $PRINT$, $w_x$ is
the rightmost cell on the output tape. Since at time $t'$ the output tape has
the first $y$ cells marked, $w_y$ is the $y$th successor of zero. As $NAT$
accounts for the uniqueness of successor, $(HIST(x,t')\wedge NAT)\ria Ew_yu_y$.
In state $t'+1$, the machine enters $q_{nd}$ where from \ref{enum:trans_three}
we get $Q_js_{t'+2}d_{b+1}u_xw_y$. Note that from this point on the last
argument of the state predicate is $w_y$ where $Ew_yu_y$, we can rename
variables to obtain $Q_{re}s_{r-1}d_bu_xn_z$ and $En_zu_y$ as required. This
gives us $\bigwedge_{i\le t} HIST(x,i)\ria State(x,t)$.

For the $Heads(x,t)$ clause we must preserve the position of the heads that do
not move, and move the heads that do. In states $q_{pr},q_{nd}$ and $q_{re}$
both heads are stationary, and this is accounted for by the $HEAD$ clause of
\ref{enum:trans_two}-\ref{enum:trans_five}. If a head does move, we are in the
case covered by \ref{enum:trans_one} and it is easy to verify that the $HEAD$
clause there does all that is required: move the active head left or right,
keeping it still if it tries to move off the edge of the tape, and preserving
the position of the other head. For the transition to $Heads(x,r)$ we must also
verify that $u_y$ is indeed the $y$th successor of zero, which we do by the same
argument we used for $State(x,t)$, giving us $\bigwedge_{i\le t} HIST(x,i)\ria
Heads(x,t)$.

For the $Tape(x,t)$ clause we must make sure the marked tape cells are preserved
and if the head marks a new cell it is accounted for. Note that if the machine
is in state $q_{pr},q_{nd}$ or $q_{re}$, the head takes no action and the
preservation of the tape follows immediately from the $TAPE$ clause of
\ref{enum:trans_two}-\ref{enum:trans_five}. If the machine is in a different
state the head must either mark or clear the cell it is over. Note that the
$HEAD$ clause in \ref{enum:trans_one} provides for exactly this, and the $TAPE$
clause preserves the rest of the tape. It follows that $Tape(x,t)\ria
Tape(x,t+1)$ for $t<r-1$. For the last implication we need to assure ourselves
that $u_y$ is the $y$th successor of zero, which we do by the same argument as
before, giving us $\bigwedge_{i\le t} HIST(x,i)\ria Tape(x,t)$.

From all of the above it follows that for $t<r$, $\bigwedge_{i\leq t}
HIST(x,i)\ria HIST(x,t+1)$. By strong induction it follows that if $Pxy$ then we
can derive $f(y)$ from $f(x)$, completing the theorem.
\end{proof}

\begin{corollary}
Logical equivalence is complete for $\SI 1$ equivalence relations.
\end{corollary}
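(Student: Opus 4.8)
The plan is to derive this immediately from the preceding theorem together with Proposition~\ref{prop:preorders}. The crucial observation is that logical equivalence is \emph{precisely} the symmetric fragment of logical implication: since $\vdash_{FOL}\varphi\lra\psi$ holds if and only if both $\vdash_{FOL}\varphi\ria\psi$ and $\vdash_{FOL}\psi\ria\varphi$ hold, the relation $\{(\varphi,\psi)\ |\ \vdash_{FOL}\varphi\lra\psi\}$ coincides with $\{(\varphi,\psi)\ |\ \vdash_{FOL}\varphi\ria\psi\ \land\ \vdash_{FOL}\psi\ria\varphi\}$. This is a $\SI 1$ equivalence relation, being the intersection of a $\SI 1$ preorder with its (also $\SI 1$) converse, and $\SI 1$ relations are closed under intersection.

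Given this identification, completeness follows by the argument already used in the proof of Proposition~\ref{prop:preorders}, which I would simply reproduce in this concrete instance. Let $E$ be an arbitrary $\SI 1$ equivalence relation. Since $E$ is in particular a $\SI 1$ preorder, the theorem supplies a computable $f$ mapping integers to FOL formulae with $Exy\iff\ \vdash_{FOL}f(x)\ria f(y)$. Because $E$ is symmetric, $Exy$ is equivalent to $Exy\land Eyx$, hence to the conjunction of $\vdash_{FOL}f(x)\ria f(y)$ and $\vdash_{FOL}f(y)\ria f(x)$, which is exactly $\vdash_{FOL}f(x)\lra f(y)$. Thus the same $f$ witnesses a reduction of $E$ to logical equivalence, and since $E$ was arbitrary, logical equivalence is $\SI 1$-complete.

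I expect no real obstacle here: all the technical work resides in the theorem, and this corollary is essentially a one-line application of Proposition~\ref{prop:preorders} once one notices that the symmetric fragment of the implication preorder on FOL sentences is, verbatim, the logical equivalence relation. The only point requiring any care is checking that the converse of a $\SI 1$ relation is again $\SI 1$, which is trivial, so a very terse write-up citing Proposition~\ref{prop:preorders} directly would also suffice.
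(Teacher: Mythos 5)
Your proposal is correct and is exactly the paper's argument: the paper proves this corollary in one line by citing Proposition~\ref{prop:preorders}, noting that logical equivalence is the symmetric fragment of implication, which is precisely the identification and instantiation you spell out.
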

\begin{proof}
Follows from Proposition~\ref{prop:preorders}, as equivalence is the symmetric
fragment of implication.
\end{proof}

\section{\texorpdfstring{$\mathbf{\SI 2}$}{Sigma 02} and Polynomial Time
Equivalence}

With two lead quantifiers, $\SI 2$ admits a wider array of candidates for our
study as it is now simple to speak of the behaviour of mappings from one
structure to another: we can use the lead existential quantifier to select a
function from a suitable class, and the universal to dictate the function's
behaviour on all input. We choose to consider a preorder and a
resulting equivalence relation
from complexity theory: that of polynomial time reducibility of EXPTIME sets. In
fact we do not need the full power of EXPTIME, treating it as DTIME($2^n$)
would suffice.

We first prove a lemma which establishes a connection between component-wise and
$m$-reducibility. Since when viewed as a set of pairs a $\SI n$ preorder is just
a $\SI n$ set, we can bootstrap classical completeness results to provide a
useful characterisation of the preorder in question. The technique is quite
general, and we make further use of it in Sections \ref{sec:Sigma03} and
\ref{sec:Pi0n}.

\begin{lemma}\label{lemma:Vxy_sequence}
For every $\SI 2$ preorder $\preceq$ there exists a sequence of non-empty r.e.\
sets $\{V_{xy}\ |\ x,y\in\NN\}$ such that $x\preceq y$ iff $V_{xy}$ is finite,
and given $x$
and $y$ we can retrieve the machine enumerating $V_{xy}$ in time linear in $\log
x,\log y$. That is, linear in the size of the input.
\end{lemma}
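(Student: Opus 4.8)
The plan is to bootstrap the classical fact that $\mathrm{Fin} = \{e \mid W_e \text{ is finite}\}$ is $\SI 2$-complete (see \cite{Soare1987}). Regarding $\preceq$ as a $\SI 2$ subset of $\NN$ via a fixed computable encoding of pairs, completeness of $\mathrm{Fin}$ supplies a total computable function $g$ such that $x \preceq y$ iff $W_{g(\la x,y\ra)}$ is finite. I would then let $V_{xy}$ be the set enumerated by the machine that first outputs $0$ and thereafter simulates the enumeration of $W_{g(\la x,y\ra)}$. This $V_{xy}$ is r.e., it always contains $0$ and so is non-empty, and it is finite exactly when $W_{g(\la x,y\ra)}$ is, i.e.\ exactly when $x \preceq y$; the extra element $0$ is harmless since adjoining one fixed point to a set does not change whether it is finite.

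The one point requiring care is the efficiency clause, and the key observation is that we must \emph{not} evaluate $g$ when producing the index. Fix once and for all a machine $M$ that, given $x$ and $y$ as parameters and then an input, computes $e = g(\la x,y\ra)$, outputs $0$, and thereafter simulates the enumeration of $W_e$. By the s-m-n theorem there is a function $h$, primitive recursive and in fact computable in time linear in $|x|+|y|$ under any reasonable indexing, since it merely splices literal copies of $x$ and $y$ into a fixed template for $M$, such that $h(x,y)$ is an index for $M$ run with parameters $x,y$, hence for a machine enumerating $V_{xy}$. Since $|x| + |y| = O(\log x + \log y)$ is the size of the input, $h$ retrieves the required machine within the stated bound. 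All the genuine computation — evaluating $g$, and hence deciding the $\SI 2$ condition — is deferred to the run time of the enumerating machine and costs nothing towards the retrieval.

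If one wants a proof that does not call on the completeness of $\mathrm{Fin}$ as a black box, the reduction $g$ can be unwound: write $x \preceq y \iff \exists u\, \forall v\, D(x,y,u,v)$ with $D$ decidable and have the enumerating machine run the usual candidate-chasing dovetail — hold a current candidate $u$, starting at $0$; search for a $v$ with $\neg D(x,y,u,v)$; whenever one is found, enumerate the marker $u+1$ and advance the candidate to $u+1$. If $x \preceq y$ the search eventually stalls forever on the least good candidate, so only finitely many markers appear; if $x \not\preceq y$ every candidate is abandoned and every marker appears, yielding an infinite set. Prefixing a single output $0$ again secures non-emptiness, and the same s-m-n remark supplies the linear-time index. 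I expect the only real obstacle to be expository: verifying that this dovetail behaves correctly in both cases, and stating the s-m-n step carefully enough that the retrieval is genuinely linear-time and not merely computable.
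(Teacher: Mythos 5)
Your proposal is correct and follows essentially the same route as the paper: reduce $\preceq$ to the $\SI 2$-complete set $\{e \mid W_e \text{ is finite}\}$, pad with a fixed element for non-emptiness, and obtain the linear-time index by splicing the literal parameters $x,y$ into a fixed machine template so that all real computation is deferred to the enumeration itself. Indeed, your ``unwound'' alternative (the candidate-chasing dovetail over a $\SI 2$ normal form $\exists v\,\forall u\, Rvuxy$, with the hard-coded input costing only $O(|\la x,y\ra|)$ extra states) is precisely the construction the paper gives.
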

\begin{proof}
Recall that the following problem is $m$-complete for $\SI 2$ (see, for instance,
\cite{Soare1987}
Theorem IV:3.2):
$$\{e\ |\ W_e\text{ is finite}\}.$$

It follows that there exists a computable $f$ such that $(x,y)\in\preceq\iff
W_{f(x,y)}$ is finite. Such a $W_{f(x,y)}$ is then precisely the $V_{xy}$
required. Observe that there is no generality lost in assuming $V_{xy}$ is
non-empty: we
can merely define $V_{xy}$ to be $W_{f(x,y)}\cup\{0\}$, thereby guaranteeing
that it has at least one element. It remains to show that we can compute $f$ in
linear time.

As $\preceq$ is a $\SI 2$ set, there exists a recursive predicate $R$ such that
$x\preceq y\iff\exists v \forall u\ Rvuxy$. Let $M$ be the (always halting)
machine computing $R$ and $N$ the machine that on input $\la
x,y\ra$ runs $M$ on $\la v=0,u,x,y\ra$, for all $u$. If $M$ ever rejects,
$N$ prints $v$ and runs $M$ on $\la v+1,u,x,y\ra$ for all $u$, and so on.
Observe that if $x\preceq y$ then $N$ will only print finitely many values, as
eventually it will reach such a $v$ such that $Rvuxy$ holds for all $u$. We will
therefore use $N_{xy}$ to denote $N$ with input fixed to $\la x,y\ra$, $N_{xy}$
is
precisely the machine enumerating $V_{xy}$.

Let $|N|$ be the size of $N$. Observe that to obtain $N_{xy}$ from $N$, it is
sufficient to add $2|\la x,y\ra|$ states to $N$: $|\la x,y\ra|$ to print $\la
x,y\ra$ on the tape, $|\la x,y\ra|$ to move the head back to the first cell and
then enter the initial state of $N$. The size of $N_{xy}$ is then
$|N|+2|\la x,y\ra|$, so if the encoding is linear in $\log x,\log y$, then we
can obtain a
machine enumerating $V_{xy}$ in time linear in $\log x,\log y$.
\end{proof}

\begin{theorem}
Polynomial time reducibility of EXPTIME sets is complete for $\SI 2$ preorders.
That is, if $A_x$ is the $x$th exponential set, then for every $\SI 2$ preorder
$\preceq$ there exists a computable function $f$ satisfying:
$$x\preceq y\iff A_{f(x)}\le^P_mA_{f(y)}.$$
\end{theorem}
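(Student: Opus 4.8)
The plan is to attach to each $x$ an EXPTIME set $A_x$ (this is $A_{f(x)}$) in such a way that $A_x\le^P_m A_y$ holds exactly when $x\preceq y$. First I would apply Lemma~\ref{lemma:Vxy_sequence} to the given $\SI 2$ preorder $\preceq$, producing the non-empty r.e.\ sets $V_{xy}$ with $x\preceq y\iff V_{xy}$ finite and the machines $N_{xy}$ enumerating them, each obtainable in linear time; these machines are the only non-effective information the construction will ever consult. A convenient simplification is that, since $\le^P_m$ is itself reflexive and transitive, it is enough to meet two kinds of requirement. The \emph{positive} requirement $P_{x,y}$ says that if $x\preceq y$ then $A_x\le^P_m A_y$; and, fixing an effective enumeration $(\varphi^P_e)_{e\in\NN}$ of all polynomial-time functions, the \emph{negative} requirement $N_{x,y,e}$ says that if $x\not\preceq y$ then $\varphi^P_e$ is not an $m$-reduction of $A_x$ to $A_y$. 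Transitivity of $\preceq$ then needs no separate handling: reductions witnessing $x\preceq y$ and $y\preceq z$ compose into one witnessing $x\preceq z$.

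The set $A_x$ would be a disjoint union of columns, the column indexed by $(y,e)$ being reserved for the interplay of $A_x$, $A_y$ and $\varphi^P_e$. On that column I would carry out a Ladner-style \emph{delayed diagonalisation} whose clock is the enumeration of $V_{xy}$: so long as simulating $N_{xy}$ --- which $A_x$ may do for exponentially many steps, being allowed $2^n$ time --- has produced no new element of $V_{xy}$, the column stays \emph{passive} (a sparse, polynomial-time-decidable set, $m$-reducible to any non-trivial target), and each new element of $V_{xy}$ opens one more round, on which blocks copying a fixed EXPTIME-complete set alternate with blocks forcing $A_x$ to disagree with the $\varphi^P_e$-pullback of $A_y$, the two kinds of block spaced by a looking-back clock so that an active column is superpolynomial yet not EXPTIME-complete. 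If $V_{xy}$ is finite only finitely many rounds occur, the column is eventually passive and $P_{x,y}$ can be met; if $V_{xy}$ is infinite infinitely many rounds occur and the $e$-th one defeats $\varphi^P_e$, meeting $N_{x,y,e}$. Membership in $A_x$ is decided by simulating the relevant $N_{xy}$ inside the $2^n$ budget and locating the input among the blocks, so $A_x\in$ EXPTIME and $f$ is computable.

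For the forward implication, assume $x\preceq y$; by transitivity $\{z:x\not\preceq z\}\subseteq\{z:y\not\preceq z\}$, so every column of $A_x$ that becomes active (which happens only at indices $(z,e)$ with $x\not\preceq z$) lies over an active column of $A_y$, while the passive columns of $A_x$ can be sent to a fixed non-member of the matching column of $A_y$; the reduction $A_x\le^P_m A_y$ is then assembled column by column, the map on an active column being the obvious rescaling provided the block schedules of the two sides are run off a common, simulation-driven clock. For the backward implication, assume $x\not\preceq y$; then $V_{xy}$ is infinite, so every column $(y,e)$ of $A_x$ is active, whereas the column $(y,e)$ of $A_y$ is passive because $y\preceq y$ makes $V_{yy}$ finite, and the usual argument shows $\varphi^P_e$ cannot reduce $A_x$ to $A_y$: it would have to carry the infinitely-often-diagonalised column $(y,e)$ of $A_x$ into the passive column $(y,e)$ of $A_y$, contradicting the round in which $\varphi^P_e$ was explicitly defeated.

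The balance that has to be struck here is the main obstacle. Passive columns must be genuinely negligible, so that a finite $V_{xy}$ really does yield $A_x\le^P_m A_y$ rather than an artificial obstruction; active columns must be hard enough to kill every $\varphi^P_e$ but must avoid EXPTIME-completeness, since an EXPTIME-complete active column would render $A_y$ itself EXPTIME-complete, whereupon \emph{every} $A_x$ reduces to it and the equivalence we want collapses. At the same time the positive reductions must be oblivious to the exponentially deep block structure, which forces the schedules on the two sides to be coordinated and the reductions to be fixed, simple rescalings; and the requirements must be arranged in a priority order so that the $A_x$'s never reference one another circularly. Ladner's delayed-diagonalisation method, together with the time-hierarchy theorem in the form $\mathrm{P}\subsetneq\mathrm{EXPTIME}$ (indeed its refinement producing intermediate degrees), is what makes a block schedule meeting all of these constraints available.
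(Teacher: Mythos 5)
Your requirement structure and your use of Lemma~\ref{lemma:Vxy_sequence} coincide with the paper's, but the construction you sketch does not deliver the positive half of the theorem, and that is where essentially all the work lies. When $x\preceq y$ you must produce an actual polynomial-time $m$-reduction from $A_x$ to $A_y$; you propose to get it ``for free'' from structural similarity, mapping each active column $(z,e)$ of $A_x$ into the corresponding active column of $A_y$ by a rescaling. But those two columns are clocked by the enumerations of two \emph{different} r.e.\ sets, $V_{xz}$ and $V_{yz}$, whose relative enumeration speeds you do not control, so their block boundaries are unboundedly misaligned; and their contents (segments of an EXPTIME-complete set together with diagonalisation bits computed from $A_z$ and from $A_x$ or $A_y$) are not polynomial-time computable. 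A polynomial-time map can neither recompute membership nor realign the blocks, and there is no reason $A_y$ should happen to contain the information carried by $A_x$'s column. This is exactly why the paper does not argue by similarity at all: it actively codes $A_x$ into $A_y$ via the suffix table, updated each time $V_{xy}$ receives an element, so that when $V_{xy}$ is finite the final table entry literally \emph{is} the reduction $w\mapsto w1^{\langle x,y,r,e\rangle}0^{q(|w|)}$ for long $w$. The price of that coding is the delicate compatibility argument your plan has no counterpart for: the verification that effecting the cascading codings never places strings shorter than $p_e(m)$ into $A_y$, and the padding argument (using transitivity of $\preceq$ to find an infinite $V_{z_iz_{i+1}}$ in any coding chain) showing that some padded index $e'$ eventually passes the verification, so the diagonalisation still succeeds.

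Second, your diagonalisation needs the value $A_y(\varphi^P_e(w))$ (and, inside active columns, values of $A_z$) at construction time, while $A_y$ and $A_z$ are themselves under construction and carry requirements referring back to $A_x$. You acknowledge this (``priority order so that the $A_x$'s never reference one another circularly''), but a priority ordering cannot remove the circularity: for incomparable $x,y$ there are diagonalisation requirements in both directions. The paper resolves it not by priority but by scale separation: every candidate element of every set has length a tower value $g(n)$ encoding $\langle x,y,r,e\rangle$, exactly one string is considered per stage, the simulation of $M_e(A_y;w)$ is attempted only when it fits in $2^{|w|}$ steps, and its queries are shorter than $p_e(|w|)$, hence concern only decisions fixed at earlier, much smaller scales; this same staging is what makes each $A_x$ decidable in exponential time. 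Without an analogous mechanism your columns are not even well defined, and your claim that membership in $A_x$ is decided ``by simulating $N_{xy}$ inside the $2^n$ budget'' ignores the nested references to $A_y$ and $A_z$. A smaller point: your assertion that a reduction ``would have to carry column $(y,e)$ of $A_x$ into the passive column $(y,e)$ of $A_y$'' is unjustified, since an $m$-reduction need not respect columns; the correct argument is the explicit disagreement manufactured against $\varphi^P_e$, which also shows that the EXPTIME-complete copy blocks and the Ladner/time-hierarchy machinery are unnecessary --- the paper defeats each candidate (indeed each oracle machine, so even $\le^P_T$) with a single diagonalisation point and needs no hardness in the sets at all.
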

\begin{proof}
First, let us establish that this relation is in fact $\SI 2$. Note that
$A_i\le^P_m A_k$ if and only if there exists a polynomial time $t$
such that $x\in A_i\iff t(x)\in A_k$. Taking $t_e$ to be the $e$th polynomial
time function, the desired relation can be expressed as follows:
$$\{(i,k)\ |\ \exists e\forall x\  x\in A_i\lra t_e(x)\in A_k\}.$$
The matrix is computable: we can query $A_i$ in exponential time, compute $t_e$
in polynomial and verify whether the result is in $A_k$ in exponential. As a
result the relation is $\SI 2$.

Let $\preceq$ be a $\SI 2$ preorder, $\{V_{xy}\ |\ x,y\in\NN\}$ be a sequence
of non-empty r.e.\ sets as in Lemma~\ref{lemma:Vxy_sequence}, $\{M_e\ |\
e\in\NN\}$
be an enumeration of all polynomial time oracle
machines, and $p_e$ be the clock of the $e$th machine.
We will construct EXPTIME sets $\{A_x\ |\ x\in\NN\}$ consisting of strings over
the alphabet $\{0,1\}$ satisfying:
\begin{enumerate}
\item\label{enum:SI2finite}
$V_{xy}$ finite $\ria$ $A_x\leq^P_mA_y$.
\item\label{enum:SI2infinite}
$V_{xy}$ infinite $\ria$ $A_x\nleq^P_TA_y$. That is, $A_x\neq M_e(A_y)$ for any
$e$.
\end{enumerate}
Note that as $\le_m^P$ implies $\le_T^P$, this will be sufficient to establish
the theorem.

The idea behind the construction is to make the sets extremely sparse, so that
elements we add to the set later in the construction will be much too large to
affect the computation of any polynomial time reduction. To this end we define
the function $g$ by $g(0)=1$, $g(n+1)=2^{g(n)}$.
We also have need of a unary encoding of 4-tuples that encodes each 4-tuple by a
string of length $g(n)$ for some $n$. For convenience, if $(x,y,r,e)$ is encoded
by $0^m$ then we require that
$x,y,r,e\leq m$. Clearly for a function that grows as quickly as $g$ this will
be satisfied almost everywhere anyway, but insisting on the restriction reduces
the cases needed to consider in the proof.

To ensure \eqref{enum:SI2finite} the construction will make use of a suffix
table.
This table will associate a pair $(x,y)$ with a 4-tuple $(x,y,r,e)$ and a
polynomial $q$.
The sets $\{A_x\ |\ x\in\NN\}$ will be constructed by stages, and the meaning of
the suffix table is that if at stage $n$ of the construction $(x,y)$ is
associated with $(x,y,r,e)$ and
$q$ then for all $w$ with $|w|\geq g(n)$, $w\in A_x$ at stage $n$ iff $w1^{\la
x,y,r,e\ra}0^{q(|w|)}\in A_y$.
The table thus defines a polynomial reduction of $A_x$ at stage $n$ into $A_y$:
if $w$ is long enough we can apply the suffix, if it is too short we can verify
whether
it belongs to $A_x$ directly in constant time.
If after some stage of the construction the $(x,y)$ entry is never updated, we
will have a reduction at all stages, and hence a reduction of the entire set.

To ensure \eqref{enum:SI2infinite} we will use $P_{xye}$ to denote the
requirement
that $A_x\neq M_e(A_y)$. This will be done by letting $A_x(w)=1-M_e(A_y;w)$ for
some $w$.
The construction will ensure that if $V_{xy}$ is large enough then $P_{xye}$
will be satisfied.

Regarding terminology, note that we distinguish between $P_{xye}$ being
satisfied and being declared satisfied. When we say that $P_{xye}$ has been
declared satisfied, we mean that the stage of the construction below where the
declaration is made has been reached. When we say that $P_{xye}$ is satisfied we
mean that it is true: that is, $A_x\neq M_e(A_y)$ whether we have declared
$P_{xye}$ to be satisfied or not.

Once satisfied, $P_{xye}$ is never injured. The idea is then that if $V_{xy}$ is
finite, we will have a reduction via the suffix table. If $V_{xy}$ grows
indefinitely, it will eventually result in every $P_{xye}$ being satisfied,
meaning no machine can be a reduction.

At stage $n$ of the construction let $m=g(n)$, and as $g(n)$ is the encoding of
a 4-tuple, $m=\la x,y,r,e\ra$. The purpose of the stage is to determine whether
$w=0^m$ is in $A_x$.

First verify that the $e$th element enters $V_{xy}$ at step $r$.
If not, do nothing at this stage.

Update the $(x,y)$ entry of the suffix table with $(x,y,r,e)$ and $q=\sum_{i\leq
e}p_e$, and effect all the codings required by the suffix table. That is, for
every entry $(x',y')$ associated with $(x',y',r',e')$ and $q'$, place $w'1^{\la
x',y',r',e'\ra}0^{q'(|w'|)}$ in $A_{y'}$ for every $w'\in A_{x'}$.

Next, verify:
\begin{itemize}
\item
We can simulate $M_e(Ay;w)$ in $2^m$ time.
\item
Placing $w$ into $A_x$ and effecting again all the codings required by the
suffix table will not place any string shorter than $p_e(m)$ into $A_y$.
\end{itemize}
If either fails, do nothing further. Otherwise let $A_x(w)=1-M_e(A_y;w)$,
declare $P_{xye}$ satisfied and effect all codings required by the suffix table.

This completes the construction. It remains to verify \eqref{enum:SI2finite},
\eqref{enum:SI2infinite} hold, and that $A_x$ is indeed EXPTIME.

Suppose $|V_{xy}|=i$ and that the $i$th element enters $V_{xy}$ at step $s$.
Let $n$ be the stage where $g(n)=\la x,y,s,i\ra$. At stage $n$ the suffix table
is updated to associate
$(x,y)$ with $(x,y,s,i)$ and $q=\sum_{j\leq i}p_j$, meaning for any $w$ already
in $A_x$, $w1^{\la x,y,s,i\ra}0^{q(|w|)}$ is in $A_y$.
Observe that as no element enters $V_{xy}$ again, the $(x,y)$ entry of the table
is never updated.
This means that if some other $w$ ever enters $A_x$, effecting
the suffix table will require that $w1^{\la x,y,s,i\ra}0^{q(|w|)}$ be in $A_y$.
This gives a polynomial time reduction from $A_x$ to $A_y$: 
\[
f(w)=
\begin{cases}
w1^{\la x,y,s,i\ra}0^{q(|w|)}&\text{if }|w|\geq g(n)\\
\text{least }v\in A_y&\text{if }|w|<g(n)\text{ and }w\in A_x\\
\text{least }u\notin A_y&\text{if }|w|<g(n)\text{ and }w\notin A_x.
\end{cases}
\]
Note that the two latter cases are in
fact constant time: the amount of time to find $v$ or $u$ will be the same on
any input, and since $|w|$ is bounded by $g(n)$ the time to determine whether
$w\in
A_x$ is also constant. This establishes
\eqref{enum:SI2finite}.

Next, suppose $V_{xy}$ is infinite. We will show that the function computed by $M_e$ is not a reduction
from
$A_x$ to $A_y$. As $e$ is arbitrary, this will show that no machine is a
reduction, and hence a reduction cannot exist. We will do this via the
requirements $P_{xye}$: first by showing that once a requirement is satisfied,
it is never again violated and second by showing that with $V_{xy}$ infinite
$P_{xye}$ is eventually satisfied for every $e$.

Suppose that $P_{xye}$ is declared satisfied at stage $n$.
Observe that $P_{xye}$ is not injured at stage $n$: we only effect the suffix
table codings if they do not place any string shorter than $p_e(|w|)$ into
$A_y$, and as $p_e$ is the clock of $M_e$, the computation on $M_e(w)$ cannot
possibly query the oracle for any strings longer than than $p_e(|w|)$. As such
whatever strings may have been added to $A_y$ through the codings do not affect
the computation of $M_e$. So if $A_x(w)=1-M_e(A_y;w)$ held when $P_{xye}$ was
declared satisfied, it still holds after the codings have been effected.
Neither can $P_{xye}$ be injured at any later stage: we have verified that
$M_e(A_y;w)$ can be simulated in $2^{|w|}={g(n+1)}$ time, so at the next
stage of the
construction whatever strings may be added to $A_y$ will be too large to affect
the computation of $M_e(A_y;w)$.

Next, suppose that $P_{xye}$ has never been declared satisfied. We will show
that there exists an $e'>e$ with $M_{e'}(w')=M_e(w')$ for all $w'$, such that
$P_{xye'}$ has been
declared satisfied. This will ensure that $P_{xye}$ is, in fact, eventually
satisfied, even though it is never declared to be so.

We wish to pad $M_e$ into a large enough $M_{e'}$ so that at some stage $\la
x,y,r,e'\ra$
satisfies all the verifications of the construction and therefore $A_x(w)\neq
M_{e'}(A_y;w)$ and $P_{xye'}$ is satisfied. By padding $M_e$ we mean we add
inaccessible states so that the machine index changes but the machine behaviour
remains the same on any string.

This means we need to show that for a large enough $e'$, at stage $m=\la
x,y,r,e'\ra$, the following hold:
\begin{itemize}
\item
We can simulate $M_{e'}(Ay;w)$ in $2^m$ time.
\item
Placing $w$ into $A_x$ and effecting all the codings required by the
suffix table will not place any string shorter than $p_{e'}(m)=p_e(m)$ into
$A_y$.
\end{itemize}

The first will hold eventually as $p_{e'}$ grows polynomially with $m$
while $2^m$ grows exponentially.
The second is violated only if there is a sequence of entries in the suffix
table $(z_0,z_1),(z_1,z_2),\dots,(z_k,z_{k+1})$, $x=z_0$ and $y=z_{k+1}$,
with corresponding polynomials $q_0,\dots,q_k$ satisfying
$q_0(m)+q_1(m_1)\dots+q_k(m_k)\leq p_{e'}(m)$
where $m<\dots<m_k$. That is, placing $0^m$ in $A_x=A_{z_0}$ will place
$0^m1^{\la z_0,z_1,s,i\ra}0^{q_0(m)}$ in $A_{z_1}$, which in turn will add
another suffix and put the resulting string in $A_{z_2}$, and so on through the
entire chain, with the final string being shorter than $p_e(m)$. Clearly a
necessary (but not sufficient) condition for this is that all the polynomials
$q_i$ be smaller than $p_e$. We will show that this will eventually fail to
hold. Note that there is no point in considering the codings effected at later
stages in the construction, as the strings considered there will be much too
large.

Observe that by the transitivity of $\preceq$, $V_{z_i,z_{i+1}}$ must be
infinite for some $i$. Suppose $d>e$ and the $d$th element enters
$V_{z_i,z_{i+1}}$ at step $s$.
At the stage of the construction corresponding to $\la z_i,z_{i+1},s,d\ra$ we
will update the $(z_i,z_{i+1})$ entry of the suffix table with $q_i>p_e$. As
$p_e=p_{e'}$, $q>p_{e'}$.
This will ensure that $q_0(m)+q_1(m_1)\dots+q_k(m_k)> p_{e'}(m)$. As such if
$e'>d>e$, it will be large enough to satisfy $P_{xye'}$, and therefore
$P_{xye}$. This establishes \eqref{enum:SI2infinite}.

Finally, we must show that $A_x$ is EXPTIME. Observe that any string in $A_x$
begins with $0^{g(n)}$ and is followed by zero or more suffixes of the form
$1^i0^j$. If a string is of the wrong form, we can reject it immediately.

First, perform the first $n$ steps of the construction to determine whether
$0^{g(n)}\in A_z$, and to construct the suffix table at that stage.
We will demonstrate that the $n$th step can be performed in exponential time,
and accordingly so can the first $n$.
With $g(n)=\la z,y,r,e\ra$ we begin by constructing $V_{xy,r}$. As $r\leq g(n)$,
this takes at most polynomial time.
Next we check whether $M_e(A_y;0^{g(n)})$ can be run in $2^m$ steps: this can be
done simply by calculating $p_e(g(n))$.
After that we verify whether placing $0^{g(n)}$ in $A_z$ will place any strings
shorter than $p_e(g(n))$ in $A_y$.
Observe that if we treat the suffix table as an adjacency table for a graph,
this corresponds to finding a path of the lowest weight.
As the graph has at most two nodes for every $g(k)$, $k<m$, it has very much
less nodes than $g(n)$ and therefore than the length of the input string.
As such finding such a path can be done in well under exponential time. Finally
we need to simulate $M_e(A_y;0^{g(n)})$.
Note that as we have already performed all the steps prior to $n$, we know which
strings $A_y$ contains so the simulation only takes polynomial time.

At this stage if we have determined that $0^{g(n)}\notin A_z$, we can safely
conclude that the queried string is not in $A_x$.
Otherwise we simply need to verify whether every suffix of the string
corresponds to a valid entry in the suffix table. Specifically, if the suffix is
of the form $1^{\la x',y',r',e'\ra}0^q$ we perform the first $n'$, $g(n')=\la
x',y',r',e'\ra$, steps of the construction and verify whether the suffix table
entry for $(x',y')$ does, in fact, associate $(x',y')$ with $( x',y',r',e')$
and $q$. By using the same argument as above we can, mutatis mutandis, verify
that every such verification takes no more than exponential time. As there are
at most linearly many verifications to make, the entire process is within
EXPTIME.
\end{proof}

\begin{corollary}
Polynomial time equivalence of EXPTIME sets is complete for $\SI 2$ equivalence
relations.
\end{corollary}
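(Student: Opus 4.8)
The plan is to obtain this as an immediate consequence of the preceding theorem together with Proposition~\ref{prop:preorders}. First I would observe that polynomial time equivalence of EXPTIME sets is, by definition, the symmetric fragment of polynomial time reducibility of EXPTIME sets: writing $\preceq$ for the preorder given by $A_x \le^P_m A_y$, its symmetric fragment $\{(x,y) \mid A_x \le^P_m A_y \land A_y \le^P_m A_x\}$ is exactly the relation declaring $A_x$ and $A_y$ to be polynomial-time $m$-equivalent. Since the previous theorem established that $\preceq$ is a $\SI 2$-complete preorder, Proposition~\ref{prop:preorders} hands us directly that this symmetric fragment is a $\SI 2$-complete equivalence relation.

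The only points that require a line of checking are the ones already folded into Proposition~\ref{prop:preorders}: that the symmetric fragment is genuinely an equivalence relation (reflexivity and transitivity of $\le^P_m$ are inherited, and symmetry is built in by taking the conjunction $A_x \le^P_m A_y \land A_y \le^P_m A_x$), and that it remains at level $\SI 2$ (it is the intersection of two $\SI 2$ relations, and $\SI 2$ is closed under intersection). Completeness transfers because any $\SI 2$ equivalence relation $E$ is in particular a $\SI 2$ preorder, so the theorem supplies a computable $f$ with $Exy \iff A_{f(x)} \le^P_m A_{f(y)}$; symmetry of $E$ then forces $A_{f(x)} \le^P_m A_{f(y)} \iff A_{f(y)} \le^P_m A_{f(x)}$ on the image, so $f$ is also a reduction to the symmetric fragment.

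There is no real obstacle here; this is a routine corollary of exactly the same shape as the $\SI 1$ and $\DE 1$ equivalence-relation corollaries already proved via Proposition~\ref{prop:preorders}. The one thing worth stating explicitly, rather than assuming, is the identification of "polynomial time equivalence of EXPTIME sets" with the symmetric fragment of $\le^P_m$ on EXPTIME sets, so that the reader sees the corollary is about a natural relation and not an artificially symmetrised one. Accordingly the written proof can be a single sentence invoking Proposition~\ref{prop:preorders}.
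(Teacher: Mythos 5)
Your proposal is correct and matches the paper's own argument exactly: the paper proves this corollary in one line by invoking Proposition~\ref{prop:preorders} with the observation that polynomial time equivalence is the symmetric fragment of polynomial time reducibility, which is precisely your route. The additional checks you spell out are just the content already packaged inside Proposition~\ref{prop:preorders}, so nothing further is needed.
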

\begin{proof}
Follows from Proposition~\ref{prop:preorders}, as polynomial time equivalence is
the symmetric fragment of polynomial time reducibility.
\end{proof}

\section{\texorpdfstring{$\mathbf{\SI 3}$}{Sigma 03} and Isomorphism of
Groups}\label{sec:Sigma03}

There are two parts to this section. In \ref{sec:almost_equal} we show that
almost inclusion (equality) of r.e.\ sets is complete for $\SI 3$ preorders
(equivalence relations). In \ref{sec:subgroups} we
show that this result is in fact equivalent to embeddability (isomorphism) of
computable subgroups of $(\QQ,+)$, as a consequence of Baer's (\cite{Baer1937})
characterisation of the subgroup structure of $(\QQ,+)$.

\subsection{Almost equality of r.e.\ sets}\label{sec:almost_equal}

We will make use of a number of results from the literature of recursion theory.
To facilitate that end we will first introduce some notational conventions for
this section.

We will use $A\subset_m B$ and $A\subset_{sm} B$ to denote that $A$ is a major
and small-major subset of $B$ respectively. See \cite{Soare1987} chapter X:4 for
these
notions. Maass and Stob (\cite{Maass1983}) have shown that given any $A\subset_m
B$ the lattice generated by almost inclusion on the r.e.\ sets between them is
unique up to isomorphism. We will use $[A,B]$ to denote this lattice, and $C^*$
to denote the equivalence class of $C$ under $=^*$. If $A$ is non-recursive, and
$A\setminus B$ is r.e.,\ then $B$ is said to be a split of $A$, which we will
denote with $B\sqsubseteq A$.

We are now ready to prove the theorem.

\begin{theorem}\label{thm:almost_inclusion}
Almost inclusion of r.e.\ sets is complete for $\SI 3$ preorders. That is, for
any
$\SI 3$ preorder $P$ there exists a computable $f$ such that:
$$Pxy \iff W_{f(x)}\subseteq^*W_{f(y)}.$$
\end{theorem}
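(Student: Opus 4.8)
I would first reduce the problem, via Lemma~\ref{lemma:Vxy_sequence}-style bootstrapping, to a convenient arithmetical normal form for $\SI 3$ preorders. Concretely, recall that $\{e \mid W_e \text{ is cofinite}\}$ is $m$-complete for $\SI 3$ (the standard companion of the $\PI 3$-completeness of $\{e \mid W_e \text{ coinfinite}\}$, see \cite{Soare1987}). Hence for a $\SI 3$ preorder $P$ there is a computable $f_0$ with $Pxy \iff W_{f_0(x,y)}$ cofinite. I would prefer, though, to get the parameters \emph{separated} as in Lemma~\ref{lemma:Vxy_sequence}: uniformly in $x,y$ produce an r.e.\ set $U_{xy}$ (equivalently a computable sequence of r.e.\ sets) with $Pxy \iff U_{xy}$ is cofinite, and such that the machine enumerating $U_{xy}$ is obtainable effectively from $x,y$. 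The point of the normal form is that ``cofinite'' is exactly the right target: the whole strategy rests on matching ``$U_{xy}$ cofinite'' with ``$W_{f(x)} \subseteq^* W_{f(y)}$'', and the theorem's proof must arrange an r.e.\ set whose coinfinite part can be ``absorbed'' into a major-subset-style gap precisely when $\neg Pxy$.

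\textbf{Key steps.} The heart of the construction is a simultaneous-enumeration argument building the sets $W_{f(x)}$. I would aim for the following skeleton. Fix, once and for all, an r.e.\ set $M$ with a major subset $M' \subset_m M$, so that by Maass--Stob the lattice $[M', M]$ is a fixed, universal, highly branching structure; the plan is to carve each $W_{f(x)}$ out as (a copy of) an element of such a lattice, so that $W_{f(x)} \subseteq^* W_{f(y)}$ holds iff the corresponding lattice elements are ordered. To make the \emph{dependence on the pair} work, I would use the separated sequence $U_{xy}$ to govern, over infinitely many ``columns'', how much of $W_{f(x)}$ is allowed to leak into $W_{f(y)}$: when a fresh element enters $U_{xy}$ we close off another column (permanently forcing $W_{f(x)} \setminus W_{f(y)}$ to retain an element in that column), and when $U_{xy}$ has gone cofinite only finitely many columns are ever closed, so the symmetric difference is finite. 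Transitivity of $P$ must be respected by the construction — if $Pxy$ and $Pyz$ then the leaks compose — and this is exactly where the major-subset machinery earns its keep: the lattice $[M',M]$ is closed under the relevant meet/join operations, so ``$\subseteq^*$'' on the constructed sets is genuinely a preorder reflecting $P$. One direction ($Pxy \Rightarrow W_{f(x)} \subseteq^* W_{f(y)}$) is then a finite-injury-free bookkeeping argument: only finitely many columns close, each contributing a bounded finite discrepancy. The other direction ($\neg Pxy \Rightarrow W_{f(x)} \not\subseteq^* W_{f(y)}$) follows because infinitely many columns close, each depositing a witness in $W_{f(x)} \setminus W_{f(y)}$ that is never removed.

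\textbf{Membership in $\SI 3$ and the main obstacle.} That almost inclusion of r.e.\ sets is itself a $\SI 3$ relation is routine: $W_a \subseteq^* W_b$ reads ``$\exists n\, \forall x \ge n\, (x \in W_a \to x \in W_b)$'', a $\Sigma^0_3$ condition since ``$x \in W_a$'' is $\SI 1$. The genuine difficulty — the step I expect to consume most of the work — is making the construction of $\{W_{f(x)}\}$ simultaneously (i) \emph{uniform} in all pairs $(x,y)$ at once, since a single element $z$ sits in infinitely many pairs $(z, \cdot)$ and $(\cdot, z)$ and the column-closing actions for different pairs must not collide, and (ii) \emph{transitive}, i.e.\ compatible with chains $z_0, z_1, \dots, z_k$ so that leaks telescope correctly, which is where I would lean hardest on Baer/Maass--Stob-type structural closure of the lattice $[A,B]$ under the needed operations (splits, via $\sqsubseteq$, to spread a set across columns; meets, to intersect leak-permissions along a chain). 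Organizing the priority bookkeeping so that these column assignments are consistent across the (countably many, overlapping) pairs, while keeping each $W_{f(x)}$ a bona fide r.e.\ set and each discrepancy finite exactly when it should be, is the crux; once that scaffolding is in place the verification splits cleanly into the two implications sketched above plus the $\SI 3$ upper bound.
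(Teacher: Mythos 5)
There is a genuine gap, and it sits exactly where you locate ``the crux''. First, your dynamic strategy is inconsistent with the normal form you chose. You reduce $P$ to cofiniteness, so $Pxy$ holds iff $U_{xy}$ is cofinite, i.e.\ iff \emph{infinitely} many fresh elements enter $U_{xy}$ during the enumeration; but your rule ``when a fresh element enters $U_{xy}$ we close off another column, depositing a permanent witness in $W_{f(x)}\setminus W_{f(y)}$'' then closes infinitely many columns precisely when $Pxy$ holds, which is the opposite of what you need. The trigger you actually want --- becoming convinced that some number never enters $U_{xy}$ --- is a $\PI 1$ event that cannot be detected effectively, and repairing this requires genuine $\SI 3$-style machinery (e.g.\ a representation $Pxy\iff\exists m\,(V^m_{xy}\text{ infinite})$ with expansionary-stage bookkeeping), none of which is sketched. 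Second, the part you defer to ``Baer/Maass--Stob-type structural closure'' is not something those results provide: Baer concerns subgroups of $(\QQ,+)$ and the Maass--Stob theorem only gives uniqueness of the interval lattice $[M',M]$ up to isomorphism; neither supplies a mechanism for realising an arbitrary $\SI 3$ preorder inside that lattice while keeping all pairs and all transitivity constraints coherent. So the construction's hardest obligations are left unmet.

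For comparison, the paper's proof avoids any stage-by-stage construction indexed by pairs. It fixes a non-recursive $A$ with a small major subset $D\subset_{sm} A$, works in the Boolean algebra $\mathcal{B}_D(A)$ of splits of $A$ modulo $D$ (where $X\subseteq^* D$ iff $X$ is recursive), uses iterated Friedberg splitting to get a uniform sequence of generators $p_n=(X_n\cup D)^*$, shows via a homomorphism into the algebra generated by the sets $\hat{i}=\{r\mid Pri\}$ that $Pnk$ iff $p_n\wedge\neg p_k$ lies in the ideal $\mathcal{I}_0$ generated by these elements, and then invokes the $\SI 3$ ideal definability lemma of Harrington and Nies \cite{Harrington1998} to produce a single $B\in[D,A]$ with $(X\cup D)^*\in\mathcal{I}\iff X\subseteq^* B$; the reduction is then $n\mapsto$ an index for $X_n\cup B$, and transitivity of $P$ is handled for free by the ideal structure. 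That definability lemma is the key ingredient your outline is missing; without it (or a full priority construction replacing it) the proposal does not go through.
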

\begin{proof}
Fix a non-recursive $A$. As $A$ is non-recursive there exists a $D\subset_{sm}
A$ (see \cite{Soare1987}, page 194). Then for every $X\sqsubseteq A$ we have the
following (\cite{Nies1998}, Lemma 4.1.2):
$$X\subseteq^*D\iff X\text{ is recursive}.$$
We can then define the Boolean algebra:
$$\mathcal{B}_D(A)=\{(X\cup D)^*\ |\ X\sqsubseteq A\},$$
with $D^*$ and $A^*$ as the minimum and maximum elements respectively, the
complement of $(X\cup D)^*$ being $((A\setminus X)\cup D)^*$, and $(X\cup
D)^*\wedge (Y\cup D)^*=((X\cap Y)\cup D)^*$. Note that we are guaranteed the
existence of complements because $X$ ranges over splits.

The Friedberg splitting theorem (\cite{Friedberg1958}) implies that any
non-recursive $X$ can be split into non-recursive $X_1,X_2$ obtained uniformly
from an r.e.\ index for $X$. By iterating this process we can obtain a uniform
sequence of splits $X_n\sqsubseteq A$. If we use $p_n$ to denote $(X_n\cup
D)^*$, $\mathcal{F}$ will denote the subalgebra generated by the sequence
$\{p_n\ |\ n\in\NN\}$.

Let $P$ be an arbitrary $\SI 3$ preorder, and $\mathcal{I}_0$ the ideal
generated by $\{p_n\wedge\neg p_k\ |\ Pnk\}$. We claim that:
$$Pnk\iff p_n\wedge\neg p_k\in\mathcal{I}_0.$$
Left to right is clear from definition. For the other direction let
$\mathcal{B}_P$ be the boolean algebra generated by subsets of $\NN$ of the form
$\hat{i}=\{r\ |\ Pri\}$. Consider the Boolean algebra homomorphism
$g:\mathcal{F}\ria\mathcal{B}_P$ induced by the map $p_i\mapsto\hat{i}$. As
$g(p_n)\subseteq g(p_k)$ whenever $p_n\wedge\neg p_k\in\mathcal{I}_0$, it
follows that $g$ maps $\mathcal{I}_0$ to $\emptyset$. If $\neg Pnk$ then
$\hat{n}\nsubseteq\hat{k}$, $g(p_n\wedge\neg p_k)$ is non-empty and hence
$p_n\wedge\neg p_k\notin\mathcal{I}_0$.

Now let $\mathcal{I}$ be the ideal of $\mathcal{B}_D$ generated by
$\mathcal{I}_0$. The above argument then implies:
$$Pnk\iff ((X_n\setminus X_k)\cup D)^*\in\mathcal{I}.$$
Note that $\mathcal{I}$ is a $\SI 3$ ideal. By the base case of the ideal
definability lemma (\cite{Harrington1998}) there exists a $B\in[D,A]$
satisfying:
$$(X\cup D)^*\in\mathcal{I}\iff X\subseteq^* B,$$
which of course means:
$$Pnk\iff ((X_n\setminus X_k)\cup D)^*\in\mathcal{I}\iff X_n\subseteq^* X_k\cup
B.$$
This gives us the desired reduction.
\end{proof}

\subsection{Subgroups of
\texorpdfstring{$\mathbf{(\QQ,+)}$}{(Q,+)}}\label{sec:subgroups}

In this section our goal is to use the result above to show that the isomorphism
of
computable subgroups of
$(\QQ,+)$ is complete for $\SI 3$ equivalence relations. Before we proceed, we
first need to establish some facts about subgroups of $(\QQ,+)$. Namely,
such subgroups are isomorphic if and only if they agree on all but a finite
number of prime powers. Likewise, there exists an embedding from $A$ to $B$ if
and only if the set of prime powers in $A$ is almost included in the set of
prime powers in $B$. These results
are based on \cite{Baer1937}. We give simplified proofs to better tailor them to
the computational setting. For notational convenience, we treat 1 as the 0th
prime.

\begin{lemma}\label{lemma:subgroup_generators}
Every subgroup of $(\QQ,+)$ is isomorphic to one with a generating set composed
of prime powers; specifically, the negative powers. That is, every subgroup
is isomorphic to one generated by
elements of the form
$1/p_i^k$ where $p_i$ is the $i$th prime and $k$ is some integer.
\end{lemma}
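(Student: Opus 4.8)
The plan is to produce the isomorphism explicitly as a scaling map and then read a generating set off the rescaled group from its $p$-adic ``heights.'' First I would dispose of the trivial subgroup and otherwise fix a nonzero $a/b \in G$; since multiplication by $b/a$ is an automorphism of $(\QQ,+)$, we may replace $G$ by $(b/a)G$ and thereby assume without loss of generality that $1 \in G$, so that $\ZZ \subseteq G \subseteq \QQ$.

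Next, adopting the convention $p_0 = 1$ from the section preamble, for each $i \ge 0$ set $k_i = \sup\{\, k \ge 0 : 1/p_i^k \in G \,\} \in \NN \cup \{\infty\}$; since $1 = 1/p_i^0 \in G$ this is well defined and $k_0 = 0$. Let $H = \langle\, 1/p_i^k \ : \ i \ge 0,\ 0 \le k \le k_i \,\rangle$. This generating set consists entirely of (negative) prime powers, and when $k_i < \infty$ it may be trimmed to the single generator $1/p_i^{k_i}$ for that prime, since $1/p_i^{\,j} = p_i^{\,k_i - j}\,(1/p_i^{k_i})$ whenever $j \le k_i$; the $i = 0$ term contributes exactly the copy of $\ZZ$. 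I claim $G = H$, which gives the lemma.

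The inclusion $H \subseteq G$ is immediate, as each generator of $H$ lies in $G$. For $G \subseteq H$, take $m/n \in G$ in lowest terms and write $n = \prod_i p_i^{e_i}$. Since $\gcd(m,n) = 1$, B\'ezout's identity supplies integers $u,v$ with $um + vn = 1$, so $1/n = u\,(m/n) + v \in G$. Then for each $i$ the quantity $\prod_{j \ne i} p_j^{e_j}$ is an integer, so $1/p_i^{e_i} = \bigl(\prod_{j \ne i} p_j^{e_j}\bigr)(1/n) \in G$, forcing $e_i \le k_i$; hence $1/p_i^{e_i}$ is among the generators of $H$ and $m/n = m \prod_i \bigl(1/p_i^{e_i}\bigr)^{\text{(suitable exponents)}} \in H$. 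Thus $G = H$ and $H$ has the required form.

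I do not anticipate a genuine obstacle: the argument is elementary. The only point that needs a little care is ensuring the generating set consists of prime powers $1/p_i^k$ themselves rather than merely sums of such — this is exactly what the identity $1/p_i^{e_i} = (\prod_{j \ne i} p_j^{e_j})(1/n)$ secures — together with the bookkeeping for the case $k_i = \infty$, where the contribution for $p_i$ is the infinite set $\{1/p_i^k : k \in \NN\}$ and cannot be collapsed to one element.
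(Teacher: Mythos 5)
Your argument is in substance the same as the paper's: rescale by $b/a$ so that $1\in G$, then use B\'ezout/coprimality to pass from an arbitrary $m/n\in G$ to the reciprocals of the prime powers dividing $n$; the paper phrases this as successive normalizations of a generating set, you phrase it as proving $G=H$ for an explicitly defined $H$, but the ingredients are identical. The one step that does not stand as written is the final inclusion $m/n\in H$: you justify it with the formula $m/n=m\prod_i(1/p_i^{e_i})^{(\text{suitable exponents})}$, but $H$ is a subgroup of $(\QQ,+)$, closed under integer linear combinations of its generators, not under products, so a multiplicative identity among rationals does not by itself certify membership in $H$. The repair is the additive identity you have already used in the other direction: for coprime $a,b$ pick $x,y$ with $xa+yb=1$, so $1/(ab)=y/a+x/b$; applying this inductively to the pairwise coprime factors $p_i^{e_i}$ of $n$ expresses $1/n$ as an integer combination $\sum_i c_i/p_i^{e_i}$, hence $1/n\in H$ and $m/n=m\cdot(1/n)\in H$ --- this is exactly the computation the paper performs in its last step. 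Two immaterial points: under the convention $p_0=1$ your claim $k_0=0$ is off (the set $\{k:1/p_0^k\in G\}$ is all of $\NN$), but since $1/p_0^k=1$ for every $k$ this changes nothing; and your handling of the case $k_i=\infty$, keeping the whole family $\{1/p_i^k\}$ as generators, is correct.
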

\begin{proof}

First observe that every subgroup of $(\QQ,+)$ is isomorphic to a subgroup
that contains 1: if $A$ be a subgroup and $s/t\in A$, then it is easy to verify
that
$f(x)=t/s\cdot x$ is an isomorphism of subgroups, and as $f(s/t)=1$ we have the
desired
subgroup in $f(A)$. A subgroup of $(\QQ,+)$ is then either isomorphic to
$(\ZZ,+)$, in which case it is clearly generated by $\{1/p_0\}$, or it contains
elements strictly between 0 and 1.

Such a group's generating set, without loss of generality, consists of $g$ for
$0<g\le 1$: any element $x>1$ can be obtained from $x-n$ by adding 1 $n$
times, and for the right choice of $n$, $x-n$ will lie between 0 and 1.

Next we claim that for coprime $a$ and $b$, $a/b\in A$ if and only if $1/b\in
A$, so in fact we can assume the generating set consists of elements with 1 in
the numerator. This
is because if
$a$ and $b$ are coprime then there exist integers $x,y$ such that $ax+by=1$. As
$a/b\in A$, so is $xa/b$ and as $1\in A$ so is $y$. We can then obtain $1/b$ as
follows:
$$\frac{xa}{b}+y=\frac{ax+by}{b}=\frac{1}{b}.$$
For the other direction, $1/b\in A$ immediately implies $a\cdot 1/b=a/b\in A$.

We now know that any subgroup is isomorphic to one generated by $1$ along with
fractions of the form $1/q$. The last step is to show that such a subgroup is
also generated by
fractions of the form $1/p^k$, where $p^k$ appears in the prime
factorisation of
some such $q$.

Let $1/q$ be an arbitrary element of the group, and let $q=p^kr$ for some prime
$p$, where $r$ and $p$ are coprime. It immediately follows that $1/p^k$ is in
the group, as $1/p^k=r/q$. For the
other direction, suppose we have a group containing $1/p^k$ and $1/r$. As $p^k$
and $r$ are coprime, there exist integers $x,y$ such that $xp^k+yr=1$. Observe
that:
$$\frac{x}{r}+\frac{y}{p^k}=\frac{xp^k+yr}{p^kr}=\frac{1}{q}.$$
By induction, it follows that the only elements necessary in the generating set
are the reciprocals of the prime powers appearing in the decomposition of $q$.
\end{proof}

In light of this result, from now on when we speak of ``subgroups of $(\QQ,+)$"
we will mean those subgroups generated by a what we will term a \emph{standard}
generating set. Denoted by $S(A)$, the standard generating set of $A$ contains
$1/p^k$ for all primes $p$ and integers $k$ such that $1/p^k\in A$. If $A$ and
$B$ are two subgroups, we use $S(A)-S(B)$ to denote the set consisting of
$1/p^{k-j}$ where $k,j$ are the largest integers, if such exist, such that
$1/p^k$ and $1/p^j$ are in $S(A)$ and $S(B)$ respectively. Note that $S(A)-S(B)$
is finite if $S(A)\subset^*S(B)$.

\begin{proposition}\label{prop:subgroup_isomorphism}
For subgroups $A,B$ of $(\QQ,+)$, $A\cong B$ if and only if $S(A)=^*S(B)$.
\end{proposition}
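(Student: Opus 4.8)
The plan is to prove the equivalence by passing through the \emph{height sequence} of a subgroup, which turns Baer's classification into a bookkeeping statement about prime powers. By the standing convention (Lemma~\ref{lemma:subgroup_generators}) $A$ and $B$ contain $1$ and are generated by their standard sets of prime powers; for each prime $p$ put $k_p(A)=\sup\{k\ |\ 1/p^k\in A\}\in\{0,1,2,\dots\}\cup\{\infty\}$. The first step is to record that such a subgroup is completely determined by the sequence $(k_p(A))_p$: the Bézout/CRT argument already used in Lemma~\ref{lemma:subgroup_generators} shows that, writing $x=m/n$ in lowest terms, $x\in A$ iff the exponent of every prime $p$ dividing $n$ is at most $k_p(A)$; equivalently $A=\{x\in\QQ\ |\ v_p(x)\ge -k_p(A)\text{ for every prime }p\}$, where $v_p$ denotes $p$-adic valuation. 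In this language $S(A)=^*S(B)$ says precisely that $k_p(A)=k_p(B)$ for all but finitely many $p$, and that at the remaining primes $k_p(A)$ and $k_p(B)$ are both finite --- for if one height is finite and the other infinite, infinitely many powers of that prime lie in the symmetric difference of $S(A)$ and $S(B)$.

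For the forward direction, suppose $\varphi:A\ria B$ is an isomorphism. Since $\varphi$ is additive and $\ZZ\subseteq A$, for $x=m/n\in A$ we have $n\varphi(x)=\varphi(m)=m\varphi(1)$, so $\varphi$ is multiplication by the fixed rational $c=\varphi(1)$; as $\varphi$ is bijective $c\ne 0$ and $B=cA$. Multiplying the valuation description of $A$ through by $c$ gives $k_p(cA)=k_p(A)-v_p(c)$ for every $p$, and since $v_p(c)=0$ except at the finitely many primes dividing the numerator or denominator of $c$, we conclude $S(A)=^*S(B)$.

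For the converse, suppose $S(A)=^*S(B)$. By the first step the set $F=\{p\ |\ k_p(A)\ne k_p(B)\}$ is finite and $k_p(A),k_p(B)<\infty$ for $p\in F$, so $c=\prod_{p\in F}p^{\,k_p(A)-k_p(B)}$ is a well-defined nonzero rational. Then $k_p(cA)=k_p(A)-v_p(c)$ equals $k_p(B)$ for every prime $p$; since $v_p(c)\le k_p(A)$ for all $p$ we also get $1\in cA$, so by the uniqueness clause of the first step $cA=B$. Hence multiplication by $c$ is an isomorphism $A\ria B$.

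I expect the only real friction to be routine bookkeeping rather than substance: stating the valuation characterisation of membership carefully, carrying the value $\infty$ through every computation, and checking that the symmetric-difference condition on $S(A)$ and $S(B)$ is genuinely equivalent to the height-sequence condition --- in particular the asymmetry whereby a prime at which one height is finite and the other infinite contributes infinitely many elements to the symmetric difference. Everything else reduces to the short Bézout computation already carried out in the proof of Lemma~\ref{lemma:subgroup_generators}.
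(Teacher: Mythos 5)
Your proof is correct, and it takes a noticeably different route from the one in the paper. Both arguments ultimately hinge on the same phenomenon --- an isomorphism of these subgroups is multiplication by a fixed rational whose numerator and denominator absorb the finitely many differing prime powers --- but you organise the proof around two structural facts that the paper never isolates: (i) membership in such a subgroup is completely described by the height sequence $k_p(A)$ via $p$-adic valuations, so a subgroup containing $1$ is \emph{determined} by its heights, and (ii) any isomorphism $\varphi\colon A\ria B$ is multiplication by $c=\varphi(1)$, because additivity and $1\in A$ give $n\varphi(m/n)=m\varphi(1)$. With these in hand both directions become a one-line valuation bookkeeping ($k_p(cA)=k_p(A)-v_p(c)$), and you correctly handle the one delicate point, namely that a prime where one height is finite and the other infinite contributes infinitely many elements to the symmetric difference, so $S(A)=^*S(B)$ is genuinely equivalent to ``heights agree off a finite set and are finite where they disagree.'' The paper instead constructs the multiplication map explicitly from $S(A)-S(B)$ and $S(B)-S(A)$ and verifies by hand that it is a homomorphism into $B$, onto and injective, and for the converse argues by contradiction using only the preservation of divisibility ($f(1)=p^kf(1/p^k)$ forces infinitely many coprime factors into the numerator of $f(1)$), never establishing that an arbitrary isomorphism is a scalar multiplication. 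Your route is cleaner and slightly stronger: the fact that \emph{every} isomorphism is multiplication by a rational is exactly what Proposition~\ref{prop:subgroup_comp_iso} later needs (the paper only gets it for the isomorphism it constructed), and your valuation description of membership is a reusable form of the B\'ezout computations buried in the proof of Lemma~\ref{lemma:subgroup_generators}. The only caveat is that you should state explicitly that, under the standing convention, $A$ and $B$ contain $1$ (this is what licenses both the valuation characterisation and $\varphi(m)=m\varphi(1)$); the paper's own proof uses this tacitly as well.
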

\begin{proof}
First, suppose $S(A)=^*S(B)$, let $S(A)-
S(B)=\{1/p_1^{k_1},\dots,1/p_n^{k_n}\}$ and $S(B)-
S(A)=\{1/q_1^{j_1},\dots,1/q_n^{j_n}\}$. Note that there is no generality lost
in assuming both sets are of cardinality $n$ as we can always set the
appropriate exponents to 0. We claim that the desired isomorphism is $f:A\ria B$
given by:
\[f(x)=\frac{p_1^{k_1}\dots p_n^{k_n}}{q_1^{j_1}\dots q_n^{j_n}}x.
\]

It is easy to see that $f$ preserves identity and the group operation simply as
a consequence of arithmetic on the rational numbers:
$$f(0)=\frac{p_1^{k_1}\dots p_n^{k_n}}{q_1^{j_1}\dots q_n^{j_n}}0=0,$$
$$f(x+y)=\frac{p_1^{k_1}\dots p_n^{k_n}}{q_1^{j_1}\dots
q_n^{j_n}}(x+y)=\frac{p_1^{k_1}\dots p_n^{k_n}}{q_1^{j_1}\dots
q_n^{j_n}}x+\frac{p_1^{k_1}\dots p_n^{k_n}}{q_1^{j_1}\dots
q_n^{j_n}}y=f(x)+f(y).$$
It remains to
show that $f$ is bijective, and that the image of $f$ is indeed contained in
$B$.

To see that $f(x)\in B$ for all $x$, observe that $s/t\in B$ if and only if
all prime powers $q^m$ that divide $t$, $1/q^{m}\in S(B)$. It follows that for
any $u/v\in A\setminus B$, there must exist a $p^i\in S(A)\setminus S(B)$ $v$
such that $p^i\ |\ v$. Without loss of generality, let this be the largest $i$
with this property. Since $S(A)=^*S(B)$, there must exist a largest $k$ such
that $1/p^k\in S(A)$ and a largest $j$ such that $1/p^j\in S(B)$. It follows
from our definition of $f$ that $f$ multiplies $u/v$ by $p^{k-j}$, and as such
the exponent of $p$ in the denominator of $f(u/v)$ will be $p^{i-k+j}$, and
$1/p^{i-k+j}\in S(B)$. This allows us to conclude that $f(x)\in B$.

To see that $f$ is onto, let $s/t$ be an arbitrary element of $B$. We need to
show that there exists a $u/v\in A$ satisfying:
$$\frac{p_1^{k_1}\dots p_n^{k_n}u}{q_1^{j_1}\dots
q_n^{j_n}v}=\frac{s}{t}.$$
Let $t=q_1^{z_1}\dots q_n^{z_n}P$ where $z_i\le j_i$ and $P$ is such that
$1/P\in A$. Observe that:
$$\frac{p_1^{k_1}\dots p_n^{k_n}}{q_1^{j_1}\dots
q_n^{j_n}}\cdot\frac{q_1^{j_1-z_1}\dots q_n^{j_n-z_n}s}{p_1^{k_1}\dots
p_n^{k_n}P}=\frac{s}{t}.$$
The right multiplicand is clearly in $A$ as the denominator only contains primes
from $S(A)$.

Finally, the fact that $f$ is one to one follows immediately from rational
arithmetic.

For the second direction suppose that
$S(A)\setminus S(B)$ is infinite. To proceed, we observe that we can define a
notion of divisibility in a group in the standard fashion:
$$x\ |\ y\iff zx=y\text{ for some }z$$
Clearly, any isomorphism from $A$ to $B$ must preserve divisibility. I.e., $x\
|\ y$ if and only if $f(x)\ |\ f(y)$, and moreover if $zx=y$ then $zf(x)=f(y)$.

Let $f$ then be some function from $A$ to $B$. If $f$ is to be an isomorphism,
for every $1/p^k\in S(A)\setminus S(B)$ we must have $f(1)=p^kf(1/p^k)$. That
is, if $f(1)=s/t$ and $f(1/p^k)=u/v$ we have:
$$\frac{s}{t}=\frac{p^ku}{v}.$$
Now, observe that $u/v\in B$ and $1/p^k\in S(A)\setminus S(B)$ implies that
if $p^j$ appears in the prime decomposition of $v$, then $j<k$ and hence
$p^{k-j}$ must appear in the decomposition of $s$. However, as there are
infinitely many elements in $S(A)\setminus S(B)$ this would mean that $s$ would
need to be infinitely large, which is impossible.
\end{proof}

A simpler version of the same reasoning can be adapted to show that subgroup
embeddability is almost
inclusion of standard generating sets.

\begin{proposition}\label{prop:subgroup_embeddability}
For subgroups $A,B$ of $(\QQ,+)$, $A$ is embeddable in $B$ if and only if
$S(A)\subseteq^*S(B)$.
\end{proposition}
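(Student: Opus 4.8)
The plan is to mirror the two directions of Proposition~\ref{prop:subgroup_isomorphism}, but with the simplification afforded by the fact that an embedding need not be onto. For the ``if'' direction, suppose $S(A)\subseteq^*S(B)$, and let $S(A)-S(B)=\{1/p_1^{k_1},\dots,1/p_n^{k_n}\}$; this set is finite by the assumption. I would define $f:A\ria B$ by $f(x)=(p_1^{k_1}\cdots p_n^{k_n})\,x$. As in the previous proof, that $f$ preserves $0$ and the group operation is immediate from rational arithmetic, and injectivity is immediate since multiplication by a nonzero rational is injective. The one point requiring care is that $f(A)\subseteq B$: for $u/v\in A$, any prime power $p^i$ dividing $v$ has $1/p^i\in S(A)$; if $p\notin\{p_1,\dots,p_n\}$ then $1/p^i\in S(B)$ already, and if $p=p_\ell$ then multiplication by $p_\ell^{k_\ell}$ with $k_\ell$ the largest exponent for which $1/p_\ell^{k_\ell}\in S(A)$ reduces the exponent of $p_\ell$ in the denominator to at most the largest $j$ with $1/p_\ell^j\in S(B)$ — here I would invoke $S(A)\subseteq^*S(B)$ to ensure such a $j$ exists and $j\ge i-k_\ell$. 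Hence $f(u/v)\in B$.

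For the ``only if'' direction, suppose $S(A)\setminus S(B)$ is infinite and let $f:A\ria B$ be any group embedding. Using the divisibility argument from Proposition~\ref{prop:subgroup_isomorphism}: an embedding preserves the relation $x\mid y$, and in particular $f(1)=p^k f(1/p^k)$ for every $1/p^k\in S(A)$. Writing $f(1)=s/t$ in lowest terms, for each of the infinitely many $1/p^k\in S(A)\setminus S(B)$ we have $f(1/p^k)=u/v\in B$, so $p^j\parallel v$ forces $j<k$ and $p^{k-j}\mid s$. Since these primes $p$ are distinct and infinite in number, $s$ would have to be divisible by infinitely many prime powers, which is impossible for a natural number. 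This contradiction establishes that $S(A)\subseteq^*S(B)$ is necessary.

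The main obstacle, such as it is, is the denominator-tracking bookkeeping in showing $f(A)\subseteq B$ in the forward direction: one must be careful that multiplying by $p_\ell^{k_\ell}$ really does land every element in $B$, which relies on the fact that $k_\ell$ is chosen as the largest exponent occurring in $S(A)$ for the prime $p_\ell$, together with almost-inclusion guaranteeing a matching (possibly $0$) exponent in $S(B)$. This is routine given the analogous, slightly harder, computation already carried out for the isomorphism case, so I would present it compactly by referring back to that argument.
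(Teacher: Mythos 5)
Your proposal is correct and follows essentially the same route as the paper: multiply by the finitely many prime powers in $S(A)-S(B)$ for the forward direction, and use preservation of divisibility (forcing infinitely many distinct primes into the numerator of $f(1)$) for the converse. The extra denominator bookkeeping you spell out is exactly what the paper compresses by reference to Proposition~\ref{prop:subgroup_isomorphism}.
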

\begin{proof}
Suppose $S(A)\subseteq^*S(B)$ and let
$S(A)-S(B)=\{1/p^{k_1}_1,\dots,1/p^{k_n}_n\}$. We claim that
$f(x)=p^{k_1}_1\dots p^{k_n}_nx$ is an embedding of $A$ into $B$. This is
clearly one to one and preserves both addition and the group operation. The
image of $f$ lies within $B$ as multiplying by the primes in $S(A)-S(B)$ will
rid the denominator of any primes inadmissible in $B$.

For the other direction suppose $S(A)\setminus S(B)$ is infinite. Again the
preservation of divisibility induces that if $zx=y$ then $zf(x)=f(y)$. Since
there are infinitely many members of $S(A)\setminus S(B)$, we cannot obtain
this.
\end{proof}

\begin{corollary}\label{cor:subgroup_iso_bi}
Subgroups of $(\QQ,+)$ are isomorphic if and only if they are bi-embeddable.
\end{corollary}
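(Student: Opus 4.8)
The plan is to read the corollary off directly from Propositions~\ref{prop:subgroup_isomorphism} and~\ref{prop:subgroup_embeddability}, since between them they already translate both isomorphism and embeddability of subgroups of $(\QQ,+)$ into statements about standard generating sets. First I would unwind the definition of bi-embeddability: $A$ and $B$ are bi-embeddable precisely when $A$ embeds in $B$ and $B$ embeds in $A$. By Proposition~\ref{prop:subgroup_embeddability}, the first conjunct is equivalent to $S(A)\subseteq^*S(B)$ and the second to $S(B)\subseteq^*S(A)$.

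Next I would record the elementary set-theoretic observation that $S(A)\subseteq^*S(B)$ and $S(B)\subseteq^*S(A)$ together are equivalent to $S(A)=^*S(B)$. Indeed, the two hypotheses say exactly that $S(A)\setminus S(B)$ and $S(B)\setminus S(A)$ are both finite, so their union, which is the symmetric difference of $S(A)$ and $S(B)$, is finite; this is the definition of $S(A)=^*S(B)$. The converse implication is immediate.

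Finally I would invoke Proposition~\ref{prop:subgroup_isomorphism}, which states that $A\cong B$ if and only if $S(A)=^*S(B)$. Chaining the three equivalences yields: $A$ and $B$ are bi-embeddable iff $S(A)=^*S(B)$ iff $A\cong B$, which is the corollary. There is no genuine obstacle here, as all the content has already been packed into the two preceding propositions; the only thing the proof adds is the trivial remark that two-sided almost-inclusion coincides with almost-equality. The one point worth stating carefully is that this remark does not rely on any of the sets $S(A)$ being finite (they need not be, since a $p$-divisible subgroup contributes all of $1/p,1/p^2,\dots$), so no hidden finiteness assumption is being smuggled in.
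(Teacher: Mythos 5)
Your proposal is correct and follows essentially the same route as the paper: both reduce bi-embeddability to two-sided almost-inclusion of standard generating sets via Proposition~\ref{prop:subgroup_embeddability}, note that this is the same as $S(A)=^*S(B)$, and conclude isomorphism from Proposition~\ref{prop:subgroup_isomorphism}. The only cosmetic difference is that the paper handles the easy direction directly (an isomorphism and its inverse are the two embeddings) rather than through the chain of equivalences, which changes nothing of substance.
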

\begin{proof}
One direction is clear: an isomorphism is an embedding both ways, so if $f$ is
an isomorphism between $A$ and $B$, $f$ and $f^{-1}$ are the required
embeddings.

Next, suppose $A,B\subseteq(\QQ,+)$ are bi-embeddable via $f:A\ria B$ and
$g:B\ria A$. We claim that this implies that $S(A)\subseteq^*S(B)$ and
$S(B)\subseteq^*S(A)$, hence $S(A)=^*S(B)$ which, by
Proposition~\ref{prop:subgroup_isomorphism}, implies an isomorphism.
\end{proof}

We now have everything we need to speak about computable groups.

\begin{definition}\label{def:subgroups}
A \emph{computable subgroup of $(\QQ,+)$} is a 4-tuple of computable functions,
$( e,\oplus,\ominus,I)$. The natural numbers are taken to encode group elements,
not necessarily uniquely.  The function $e$ selects the identity element(s), meanings $e(x)=1$ for identity and 0
otherwise. The group operation is encoded by $\oplus:\NN\times\NN\ria\NN$ and
$\ominus:\NN\ria\NN$ takes an element to its inverse. $I$ is the interpretation
function, which maps group elements to (a fixed encoding of) the rationals. We
require that the group axioms be respected through $I$, namely:
\begin{itemize}
\item $I(\oplus(\oplus(a,b),c))=I(\oplus(a,\oplus(b,c)))$.
\item $(e(x)=1)\ria (I(\oplus(a,x))=I(\oplus(x,a))=I(a))$.
\item $(\oplus(a,\ominus(a))=x)\ria (e(x)=1)$.
\end{itemize}
As we are dealing with additive subgroups, we also require that $I$ respects
addition on the rational numbers:
\begin{itemize}
\item $I(\oplus(a,b))=I(a)+I(b)$.
\end{itemize}
\end{definition}

Of course, there is no way to enumerate arbitrary 4-tuples of computable
functions. However we will now show that we can identify with every r.e.\ set a
computable group, and all computable subgroups of $(\QQ,+)$ can be obtained in
this fashion. This allows us to speak of the $i$th computable subgroup, denoted
$G_i$, where $S(G_i)$ is encoded by $W_i$. As such if $( e,\oplus,\ominus,I)$ is
$G_i$, we will let $\la e,\oplus,\ominus,I\ra=i$.

\begin{lemma}\label{lemma:subgroup_representation}
A subgroup of $(\QQ,+)$ is computable if and only if we can uniformly
effectively obtain its standard generating set from some $W_i$.
\end{lemma}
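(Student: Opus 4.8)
The plan is to establish both directions of the biconditional, keeping track of the uniformity in each.

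\emph{($\RA$)}. Suppose the subgroup is presented by a computable $4$-tuple $(e,\oplus,\ominus,I)$ as in Definition~\ref{def:subgroups}. The axioms force $I$ to carry identity codes to $0$ and $\ominus$-images to rational negatives, so $I[\NN]$ is a subgroup of $(\QQ,+)$, which — possibly after the normalisation of Lemma~\ref{lemma:subgroup_generators}, achieved by rescaling through the value of $I$ at a fixed nonzero code — we may take to be our $A$ in standard form. Then $1/p^k\in S(A)$ exactly when some $a\in\NN$ has $I(a)=1/p^k$, a $\SI 1$ condition in the index of $I$. So the machine that runs through all $a$, computes $I(a)$, and outputs the code of $1/p^k$ whenever $I(a)$ is (the fixed code of) a prime-power reciprocal enumerates $S(A)$, and its index is obtained effectively from that of $I$.

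\emph{($\LA$)}. Suppose $S(A)$ is the set of prime powers coded by $W_i$ (if $W_i$ codes some other generating set of $A$, first replace it effectively by its downward closure, which equals $S(A)$ and is again uniformly recursively enumerable). Since membership in $A$ is only $\SI 1$ as a subset of $\QQ$, we cannot take the rationals lying in $A$ as the domain; instead we code a group element by a pair $\la q,t\ra$, calling the code \emph{valid} when $W_i$ has, within $t$ steps, enumerated every prime power that divides the denominator of $q$ in lowest terms. Validity is decidable, and because $w\in A$ precisely when every prime power dividing the denominator of $w$ has its reciprocal in $S(A)$ (by Lemma~\ref{lemma:subgroup_generators}), while each element of $A$ has finite generating support, the valid codes represent exactly the members of $A$. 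Put $I(\la q,t\ra)=q$ on valid codes and $0$ otherwise, let $e$ test whether $I$ returns $0$, and define $\oplus$ and $\ominus$ by reading off rationals via $I$, performing the rational arithmetic, and then searching for a stage that validates the result. That search always halts: a sum or negative of elements of $A$ again lies in $A$, so all of its denominator prime powers lie in $W_i$. A routine check shows $(e,\oplus,\ominus,I)$ is total computable, that $I[\NN]=A$, and that the associativity, identity, inverse and additivity clauses hold, since all are stated modulo $I$ and are therefore inherited from arithmetic on $\QQ$. Everything is uniform in $i$.

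The one delicate point is the padding used in the $(\LA)$ direction: because $A$ is merely $\SI 1$ as a subset of $\QQ$, the domain must be thickened by stage witnesses in order to become decidable, and one must then confirm that the group operations remain \emph{total} — which is exactly where the finiteness of the generating support of each element of $A$ is used. The $(\RA)$ direction and the axiom verifications are bookkeeping.
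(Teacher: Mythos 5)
Your proof is correct and follows essentially the same route as the paper: the forward direction (run $I$ on all codes and enumerate the prime-power reciprocals it hits) is identical, and the backward direction uses the same underlying device of tying element codes to stages of the enumeration of $W_i$ so that $I$ becomes total and the operations become computable, with totality of the searches coming from closure of $A$ under the group operations. The only difference is bookkeeping: the paper codes elements as formal integer combinations $z_0w_0+\dots+z_nw_n$ of the stage-indexed generators and defines $\oplus,\ominus$ via a least-preimage search $I^{-1}$, whereas you code them as pairs $\la q,t\ra$ with a decidable validity predicate, which additionally invokes the (true, and elsewhere used) criterion that $q\in A$ iff every prime power dividing its denominator has its reciprocal in $S(A)$.
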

\begin{proof}
We will treat $W_i$ as consisting of pairs $(x,y)$, closed downward in the
second component. We can then define $G_i$ to be the group where $S(G_i)$ is the
set containing $1/p_x^y$ for all $(x,y)\in W_i$. We can now show that $G_i$ is
computable.

Fix some encoding of finite sequences of integers. Let $w_s$ be the element that
enters $W_i$ at stage $s$ if such an element exists, or 0 otherwise. If $x$ is
the encoding of $z_0z_1\dots z_n$, let $I(x)=z_0w_0+z_1w_1+\dots+z_nw_n$. $I$ is
clearly computable, so there exists a partial recursive $I^{-1}$ where
$I^{-1}(y)$ is the least $x$ such that $I(x)=y$. This allows us to define
$\oplus(x,y)$ as $I^{-1}(I(x)+I(y))$, $\ominus(x)$ as $I^{-1}(-I(x))$ and let
$e(x)=1$ if and only if $I(x)=0$, and 0 otherwise. All of these are clearly
computable and satisfy the axioms of Definition~\ref{def:subgroups}.

For the other direction, suppose we are given $( e,\oplus,\ominus,I)$. We can
enumerate $W_i$ by running $I$ on every number and listing $(x,y)$ each time we
come across an element of the form $1/p_x^y$.
\end{proof}

This gives us all we need to construct a $\SI 3$-complete preorder.

\begin{theorem}\label{thm:Si03_subgroups}
Computable embeddability of computable subgroups of $(\QQ,+)$ is complete for
$\SI 3$ preorders.
\end{theorem}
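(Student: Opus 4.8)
The plan is to read the theorem off from three results already in hand: Theorem~\ref{thm:almost_inclusion} (almost inclusion of r.e.\ sets is $\SI 3$-complete), Proposition~\ref{prop:subgroup_embeddability} (for subgroups of $(\QQ,+)$, embeddability is almost inclusion of standard generating sets), and Lemma~\ref{lemma:subgroup_representation} (the standard generating sets of computable subgroups are exactly the effectively coded r.e.\ sets). Concretely, two things remain to be checked: (i) that computable embeddability of the $G_i$ actually lies in $\SI 3$, and (ii) that one can manufacture, uniformly and computably from an arbitrary r.e.\ index, a computable subgroup in such a way that almost inclusion of the r.e.\ sets is carried \emph{faithfully} into embeddability of the groups. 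Composing the latter with the reduction supplied by Theorem~\ref{thm:almost_inclusion} then finishes the job.

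For (i), I would first observe that whenever $S(G_i)\subseteq^*S(G_j)$ the witnessing embedding produced in Proposition~\ref{prop:subgroup_embeddability} is multiplication by a fixed rational, and this map is computable on the natural-number codes of a computable subgroup: apply the interpretation function of $G_i$, multiply, then apply the partial recursive inverse of the interpretation function of $G_j$, which is defined wherever it is invoked. Hence ``$G_i$ computably embeds in $G_j$'' coincides with $S(G_i)\subseteq^*S(G_j)$. Now ``$1/p^k\in S(G_j)$'' is a $\SI 1$ condition in $(k,p)$ uniformly in $j$ (search $W_j$ for a witnessing pair), so ``$1/p^k\in S(G_i)\to 1/p^k\in S(G_j)$'' is of the form $\PI 1\vee\SI 1$, hence $\DE 2$; a universal quantifier over all prime powers keeps it in $\PI 2$, and the leading ``for all but finitely many'' existential quantifier yields a $\SI 3$ formula, as required.

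For (ii), given an r.e.\ index $a$ I would pass effectively to an r.e.\ index for the downward closure (in the exponent) of $\{(n+1,1)\ |\ n\in W_a\}$ and then, via Lemma~\ref{lemma:subgroup_representation} and the $s$-$m$-$n$ theorem, to an index $h(a)$ of a computable subgroup $G_{h(a)}$ whose standard generating set is $\{1/p_{n+1}\ |\ n\in W_a\}$ (plus the trivial element $1$). Restricting to first powers of pairwise distinct primes is what makes the translation clean: $n\mapsto 1/p_{n+1}$ is a bijection onto the non-trivial part of $S(G_{h(a)})$, so $S(G_{h(a)})\subseteq^*S(G_{h(b)})$ if and only if $W_a\subseteq^*W_b$, and the degenerate cases (empty or finite $W_a$) line up as well. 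Given an arbitrary $\SI 3$ preorder $P$, Theorem~\ref{thm:almost_inclusion} supplies a computable $f$ with $Pxy\iff W_{f(x)}\subseteq^*W_{f(y)}$; setting $F=h\circ f$ gives $Pxy\iff W_{f(x)}\subseteq^*W_{f(y)}\iff S(G_{F(x)})\subseteq^*S(G_{F(y)})\iff G_{F(x)}$ computably embeds in $G_{F(y)}$, which is the desired reduction.

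All the genuine difficulty has been discharged already, in Theorem~\ref{thm:almost_inclusion} (the splitting-and-ideal argument) and in the Baer-style analysis of Proposition~\ref{prop:subgroup_embeddability}. The one point that needs care here is the faithfulness of the encoding in step (ii): the code of the generating set must be chosen so that almost inclusion is \emph{both} preserved and reflected --- merely reusing $W_{f(x)}$ as a code would be risky, since closing a code downward in the exponent need not preserve almost inclusion in both directions --- which the distinct-primes/first-powers device handles, together with the (routine) verification that $h$ is honestly computable from the procedure inside Lemma~\ref{lemma:subgroup_representation}. The corresponding completeness for $\SI 3$ equivalence relations, namely isomorphism of computable subgroups of $(\QQ,+)$, then follows by Proposition~\ref{prop:preorders} and Corollary~\ref{cor:subgroup_iso_bi}.
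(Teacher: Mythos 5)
Your proposal is correct and follows essentially the same route as the paper: compose the reduction from Theorem~\ref{thm:almost_inclusion} with the group coding of Lemma~\ref{lemma:subgroup_representation}, using Proposition~\ref{prop:subgroup_embeddability} to translate almost inclusion of standard generating sets into (computable) embeddability. The two places where you add detail --- the explicit recoding $h$ sending $W_a$ to the group generated by $\{1/p_{n+1}\ |\ n\in W_a\}$, which guarantees that $\subseteq^*$ is both preserved and reflected (a point the paper glosses over when it feeds the arbitrary r.e.\ sets $W_{f(x)}$ into the coding of Lemma~\ref{lemma:subgroup_representation}), and the verification of the $\SI 3$ upper bound via the $S(G_i)\subseteq^*S(G_j)$ characterisation rather than, as the paper does, by directly writing ``there exists a total injective $\varphi_i$ preserving the group operations'' --- are both sound and do not change the substance of the argument.
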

\begin{proof}
First, let us verify that the relation is indeed $\SI 3$. We can represent it
as:

$$\{(\la e_1,\oplus_1,\ominus_1,I_1\ra,\la e_2,\oplus_2,\ominus_2,I_2\ra)\ |\
\exists i\
\varphi_i\text{ is total}\ \wedge$$
$$\forall xy\ \varphi_i(x)=\varphi_i(y)\ria x=y\ \wedge$$
$$e_1(x)=e_2(\varphi_i(x))\ \wedge$$
$$\oplus_1(x,y)=\oplus_2(\varphi_i(x),\varphi_i(y))\}.$$
Totality of p.r.\ functions is a $\PI 2$ property, so placing an existential
quantifier over it is $\SI 3$. By Lemma~\ref{lemma:subgroup_representation}, we
have an effective encoding of groups so we can obtain the code for $\oplus$,
$\ominus$ and $e$ effectively. As $\varphi_i$ is computable, the rest of the
formula is $\PI 1$ so with an existential quantifier over it is $\SI 2$, and
thus contained within $\SI 3$.

To show embeddability is complete we invoke Theorem~\ref{thm:almost_inclusion}.
Almost inclusion is complete for $\SI 3$ preorders, so
Proposition~\ref{prop:subgroup_embeddability} and
Lemma~\ref{lemma:subgroup_representation} already did all the work for us, and
we obtain:
$$W_x\subseteq^*W_y\iff G_x\text{ is embeddable in }G_y.$$\end{proof}

It immediately follows that computable bi-embeddability of subgroups of
$(\QQ,+)$ is complete for $\SI 3$ equivalence relations. However, we can
strengthen this result somewhat. By Corollary~\ref{cor:subgroup_iso_bi}
bi-embeddability implies isomorphism, and there is no need to speak of a
computable isomorphism because that follows immediately.

\begin{proposition}\label{prop:subgroup_comp_iso}
Computable subgroups of $(\QQ,+)$ are isomorphic if and only if they are
computably isomorphic.
\end{proposition}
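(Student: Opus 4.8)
The plan is to reduce the statement to what we already know: by Corollary~\ref{cor:subgroup_iso_bi}, two computable subgroups $A,B$ of $(\QQ,+)$ are isomorphic if and only if they are bi-embeddable, which by Proposition~\ref{prop:subgroup_isomorphism} happens if and only if $S(A)=^*S(B)$. So it suffices to show that whenever $S(A)=^*S(B)$ for computable subgroups $A,B$, the isomorphism exhibited in the proof of Proposition~\ref{prop:subgroup_isomorphism} can be taken to be computable. Recall that proof constructs the isomorphism explicitly as scalar multiplication by a fixed rational: writing $S(A)-S(B)=\{1/p_1^{k_1},\dots,1/p_n^{k_n}\}$ and $S(B)-S(A)=\{1/q_1^{j_1},\dots,1/q_n^{j_n}\}$, the map is $f(x)=\frac{p_1^{k_1}\cdots p_n^{k_n}}{q_1^{j_1}\cdots q_n^{j_n}}x$. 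The heart of the matter is therefore that this rational multiplier is a finite, concrete object that we can name, and that once we have it, composing with the interpretation functions of the two computable presentations yields a computable function.

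First I would fix computable presentations $(e_A,\oplus_A,\ominus_A,I_A)$ of $A$ and $(e_B,\oplus_B,\ominus_B,I_B)$ of $B$; by Lemma~\ref{lemma:subgroup_representation} we may assume these arise from r.e.\ sets $W_a,W_b$ coding $S(A)$ and $S(B)$ respectively, so that $a=\la e_A,\oplus_A,\ominus_A,I_A\ra$ and similarly for $b$. Since we are given (from the outside) that $A\cong B$, hence $S(A)=^*S(B)$, there is some rational $c=\frac{p_1^{k_1}\cdots p_n^{k_n}}{q_1^{j_1}\cdots q_n^{j_n}}$ as above. The point is that $c$ is a single rational number; it does not matter that we cannot decide $S(A)$ or $S(B)$, nor that $S(A)-S(B)$ is not uniformly computable from indices — we are not claiming a uniform procedure, merely the existence of a computable isomorphism for each particular isomorphic pair. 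With $c$ in hand, define $h:A\ria B$ by $h(x)=I_B^{-1}(c\cdot I_A(x))$, where $I_B^{-1}(y)$ is the least code mapping to $y$ under $I_B$ (as in the proof of Lemma~\ref{lemma:subgroup_representation}). Here $I_A$ is computable, multiplication by the fixed rational $c$ is computable on the fixed encoding of the rationals, and $I_B^{-1}$ is partial computable; I would then check that $c\cdot I_A(x)$ always lands in the range of $I_B$ — this is exactly the content of the ``$f(x)\in B$'' verification in the proof of Proposition~\ref{prop:subgroup_isomorphism} — so $I_B^{-1}$ converges on every such input and $h$ is total computable.

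It then remains to observe that $h$ is a group isomorphism, but this is immediate from the corresponding facts in Proposition~\ref{prop:subgroup_isomorphism}: $h$ preserves the identity and the operation because $I_A,I_B$ intertwine the group operations with rational addition and multiplication by $c$ is additive, $h$ is injective because $c\neq 0$ and $I_B^{-1}(I_B(\cdot))$ identifies precisely the elements with equal interpretation, and $h$ is surjective by the same argument that showed $f$ was onto. The only genuine subtlety — and the step I expect to draw the most scrutiny — is the non-uniformity: one must be comfortable asserting ``there exists a computable isomorphism'' pointwise rather than producing an algorithm that, given indices for $A$ and $B$ together with a promise that $A\cong B$, outputs an index for an isomorphism. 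That stronger uniform statement would require extracting $S(A)-S(B)$ effectively, which we cannot do; but the proposition as stated asks only for the existence of \emph{a} computable isomorphism for each isomorphic pair, and for that the fixed scalar $c$ suffices. I would close by remarking that, together with Corollary~\ref{cor:subgroup_iso_bi} and Theorem~\ref{thm:Si03_subgroups}, this shows computable isomorphism of computable subgroups of $(\QQ,+)$ is complete for $\SI 3$ equivalence relations.
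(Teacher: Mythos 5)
Your proposal is correct and is essentially the paper's own argument: the paper likewise observes that the isomorphism from Proposition~\ref{prop:subgroup_isomorphism} is multiplication by a fixed rational $q$ and that this is computed as $I_2^{-1}(qI_1(x))$ in the given presentations. Your additional remarks (totality of $I_B^{-1}$ on the relevant inputs and the explicitly non-uniform nature of the claim) merely spell out details the paper leaves implicit.
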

\begin{proof}
One direction is clear. For the other direction observe that the isomorphism
constructed in Proposition~\ref{prop:subgroup_isomorphism} involves multiplying
by a rational number. This is clearly computable in our framework: if $(
e_1,\oplus_1,\ominus_1,I_1)$ and $( e_2,\oplus_2,\ominus_2,I_2)$ are isomorphic
via $f(x)=qx$, we can compute this by $I_2^{-1}(qI_1(x))$.
\end{proof}
\begin{corollary}
Isomorphism of computable subgroups of $(\QQ,+)$ is complete for $\SI
3$ equivalence relations.
\end{corollary}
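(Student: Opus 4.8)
The plan is to observe that isomorphism of computable subgroups of $(\QQ,+)$ is, as a set of pairs of indices, nothing other than the symmetric fragment of the preorder already shown $\SI 3$-complete in Theorem~\ref{thm:Si03_subgroups}, and then to quote Proposition~\ref{prop:preorders}. So almost all of the work has been done; what remains is bookkeeping.

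First I would unwind the definition of the symmetric fragment. For the computable-embeddability preorder, its symmetric fragment $S_{sym}$ consists of those pairs $(x,y)$ such that $G_x$ is computably embeddable in $G_y$ and $G_y$ is computably embeddable in $G_x$; that is, the pairs $(x,y)$ for which $G_x$ and $G_y$ are bi-embeddable. By Corollary~\ref{cor:subgroup_iso_bi} two subgroups of $(\QQ,+)$ are bi-embeddable precisely when they are isomorphic, and by Proposition~\ref{prop:subgroup_comp_iso} any isomorphism between two computable subgroups may be taken computable. Hence $S_{sym}$ is literally the (computable) isomorphism relation on the effective family $\{G_i\ |\ i\in\NN\}$ supplied by Lemma~\ref{lemma:subgroup_representation}.

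Next I would note that $S_{sym}$ is $\SI 3$: it is the intersection of the $\SI 3$ relation of Theorem~\ref{thm:Si03_subgroups} with the relation obtained by swapping the two arguments, and $\SI 3$ is closed under intersection. Completeness is then immediate from Proposition~\ref{prop:preorders}, whose proof shows exactly that the symmetric fragment of a $\SI 3$-complete preorder is a $\SI 3$-complete equivalence relation: any $\SI 3$ equivalence relation $E$ reduces to the preorder via some $f$, and since $E$ is symmetric the reduction automatically lands in $S_{sym}$. Combining the two observations gives that isomorphism of computable subgroups of $(\QQ,+)$ is complete for $\SI 3$ equivalence relations.

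The only delicate point is the identification of the symmetric fragment with the isomorphism relation, and this is where I would be most careful: one must check that the Baer-style characterisations (Propositions~\ref{prop:subgroup_isomorphism} and~\ref{prop:subgroup_embeddability}, together with Corollary~\ref{cor:subgroup_iso_bi}) really do collapse bi-embeddability to isomorphism for every pair $G_x,G_y$ in the effective listing, and that Proposition~\ref{prop:subgroup_comp_iso} then closes any gap between plain and computable isomorphism so that the relation does not depend on which flavour we mean. With that verified the corollary follows at once; there is no genuine obstacle beyond this matching-up.
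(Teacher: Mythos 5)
Your proposal is correct and follows essentially the same route as the paper: the paper likewise takes the symmetric fragment of the embeddability preorder from Theorem~\ref{thm:Si03_subgroups} (via Proposition~\ref{prop:preorders}) to get completeness of computable bi-embeddability, and then invokes Corollary~\ref{cor:subgroup_iso_bi} and Proposition~\ref{prop:subgroup_comp_iso} to identify this with isomorphism. No substantive difference.
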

\begin{proof}
Theorem~\ref{thm:Si03_subgroups} establishes that computable bi-embeddability of
computable subgroups is complete for $\SI 3$ equivalence relations.
Propositions~\ref{cor:subgroup_iso_bi} and \ref{prop:subgroup_comp_iso} allow
us to lift this result to isomorphism of computable subgroups.
\end{proof}

On the other hand, the restriction to computable subgroups is indeed necessary:
there are more isotypes of subgroups of $(\QQ,+)$ than there are natural
numbers, so some subgroups are inherently uncomputable.

\begin{proposition}
There are uncountably many isotypes of subgroups of $(\QQ,+)$. As such, there
exist uncomputable subgroups of $(\QQ,+)$, even modulo isomorphism.
\end{proposition}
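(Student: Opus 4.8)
The plan is to read the isomorphism classes straight off Proposition~\ref{prop:subgroup_isomorphism}: since $A\cong B$ exactly when $S(A)=^*S(B)$, counting isotypes amounts to counting the subsets of $\{1/p^k\ |\ p\text{ prime},\ k\geq 1\}$ that arise as standard generating sets, up to finite modification. First I would exhibit continuum-many such sets that are pairwise inequivalent under $=^*$.

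For a set $X$ of primes let $A_X$ be the subgroup of $(\QQ,+)$ generated by $1$ together with $\{1/p\ |\ p\in X\}$. A short argument of the kind used in Lemma~\ref{lemma:subgroup_generators} shows $A_X$ is precisely the set of rationals whose lowest-terms denominator is squarefree with every prime factor in $X$; in particular $1/p^2\notin A_X$ for every prime $p$, so $S(A_X)=\{1/p\ |\ p\in X\}$. Hence $S(A_X)=^*S(A_Y)$ iff the symmetric difference $X\triangle Y$ is finite, and Proposition~\ref{prop:subgroup_isomorphism} turns this into $A_X\cong A_Y\iff X\triangle Y$ finite.

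So the isotypes among the $A_X$ are indexed by $\mathcal{P}(\{p\ |\ p\text{ prime}\})/\mathrm{fin}$, the power set of the primes modulo finite sets, and I would finish by invoking the standard fact that this quotient is uncountable: the power set of a countable set has cardinality $2^{\aleph_0}$, while each class under finite modification is countable (there are only countably many finite sets to alter by), so there must be uncountably many classes, since a set of size $2^{\aleph_0}$ is not a countable union of countable sets. This already gives uncountably many pairwise non-isomorphic subgroups of $(\QQ,+)$. For the final sentence of the statement, there are only countably many computable subgroups---each is named by a natural number via Lemma~\ref{lemma:subgroup_representation}, or directly by a $4$-tuple of indices of computable functions---so only countably many isotypes can contain a computable member; choosing $X$ in an isotype containing none yields a subgroup $A_X$ not isomorphic to any computable subgroup.

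I do not expect a real obstacle here: the group-theoretic content is entirely supplied by Proposition~\ref{prop:subgroup_isomorphism}, and the set-theoretic input is the textbook uncountability of $\mathcal{P}(\NN)/\mathrm{fin}$. The only points needing (routine) care are checking that $S(A_X)$ is literally $\{1/p\ |\ p\in X\}$ and not something larger, and reading ``uncomputable even modulo isomorphism'' as the assertion that an entire isotype avoids the computable subgroups---which is exactly what the cardinality count delivers.
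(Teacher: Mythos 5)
Your proof is correct, and its first half coincides with the paper's: both build, for each set $X$ of primes, the subgroup generated by the reciprocals of the primes in $X$, check directly (via the denominator computation) that no unwanted reciprocal sneaks in, and then use Proposition~\ref{prop:subgroup_isomorphism} to translate isomorphism into almost-equality of generating sets. The divergence is in how uncountability of the quotient is obtained. The paper constructs an explicit injection $h:2^\NN\ria 2^\NN/=^*$ by padding each set $A$ with all powers of the primes it indexes (coded on the odd numbers), so that distinct sets have infinite symmetric difference; this yields the sharper conclusion that there are continuum many isotypes, with no cardinal arithmetic beyond the injection itself. You instead observe that each $=^*$-class (equivalently each class of $\mathcal{P}(\text{primes})$ modulo finite sets) is countable, so countably many classes could only cover countably many sets, contradicting $|2^\NN|=2^{\aleph_0}$; this is more elementary but invokes the ``countable union of countable sets is countable'' fact (a whiff of countable choice) and, as stated, delivers only uncountability rather than the exact cardinality --- which is all the proposition asks for. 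You are also more explicit than the paper on the final clause: counting computable subgroups through Lemma~\ref{lemma:subgroup_representation} and choosing an isotype avoiding them is exactly the right reading of ``uncomputable even modulo isomorphism,'' where the paper just says the isotypes ``cannot all be computable.'' The only point to keep an eye on is your identification $S(A_X)=\{1/p\ |\ p\in X\}$: under the paper's convention $1$ (the $0$th prime) also belongs to every standard generating set, but since it is common to all the $A_X$ it does not affect the symmetric differences, so your equivalence $A_X\cong A_Y\iff X\triangle Y$ finite stands.
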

\begin{proof}
We will prove the proposition in two steps. First, we will show that there
exists an injection from
$2^\NN/=^*$ to $Sub(\QQ,+)/\cong$, where $Sub(\QQ,+)$ is the set of subgroups of
$(\QQ,+)$. That is, there are at least as many isotypes of subgroups of
$(\QQ,+)$ as
there are equivalence classes of $=^*$ on the natural numbers. Next, we will
show that there is an injection from $2^\NN$ into $2^\NN/=^*$. I.e., the
cardinality of the powerset of the naturals is no larger than the cardinality of
the powerset of the naturals modulo almost equality. This will establish that
there are uncountably many isotypes of subgroups of $(\QQ,+)$, and therefore
they cannot all be computable.

First, let $f:2^\NN\ria Sub(\QQ,+)$ be the function which maps the set $A$ to
the
subgroup generated by
$\{1/p_i\ |\ i\in A\}$. It is easy to see that $f$ is one to one: suppose $n\in
A$, $n\notin B$. We claim that $1/p_n\in f(A)$, $1/p_n\notin f(B)$ and hence
$A\neq B$ implies $f(A)\neq f(B)$. Suppose to the contrary, there are integers
$z_0,\dots,z_k$ and indices $i_0,\dots,i_k$ with $i_j\neq n$ for all $j$ such
that:
$$\sum_{j=0}^k\ \frac{z_j}{p_{i_j}}=\frac{1}{p_n},$$
$$\frac{z_0(p_{i_1}\dots p_{i_k})+\dots+z_k(p_{i_0}\dots
p_{i_{k-1}})}{p_{i_0}\dots p_{i_k}}=\frac{1}{p_n},$$
$$p_n(z_0(p_{i_1}\dots p_{i_k})+\dots+z_k(p_{i_0}\dots
p_{i_{k-1}}))=p_{i_0}\dots p_{i_k},$$
but that is clearly impossible.

Next, let $g:2^\NN/=^*\ria Sub(\QQ,+)/\cong$ be the function which maps
$[A]_{=^*}$ to $[f(A)]_{\cong}$. The fact that this is an injection follows
immediately from the observation above and
Proposition~\ref{prop:subgroup_isomorphism}.

Finally, let $h:2^\NN\ria 2^\NN/=^*$ be the function which maps $A$ to $\{x\ |\
x=e_n\text{ for some }n\in A\text{, or }x=o_{p_n^k}\text{ for some }k\in\NN,n\in
A\}$, where $e_n$ is the $n$th even number (i.e. $2n$) and $o_n$ is the $n$th
odd number ($2n+1$). For intuition, this could
be thought of as $h(A)=\la A,A'\ra$ where $A'$ is the set that contains all
powers of the $n$th prime for every $n$ in $A$, with the first component of
$h(A)$ encoded by the even numbers and the second by the odd.

To see that $h$ is one to one suppose $n\in A$, $n\notin B$. We want to show
that $h(A)\triangle h(B)$ is infinite, but this is immediate as $o_{p_n^k}\in
h(A)$ for all $k$, while $o_{p_n^k}\notin h(B)$.
\end{proof}

\chapter{The case of \texorpdfstring{$\mathbf{\PI
n}$}{Pi 0n}}\label{sec:Pi0nrelations}

There are two results in this section. In \ref{sec:Pi01} below we demonstrate
the existence of a $\PI 1$-complete
equivalence relation and preorder, and show how a natural example can be
obtained from polynomial time trees. The existence of a $\PI 1$-complete
equivalence relation is based on a characterisation of every $\PI 1$ equivalence
relations as the kernel of a computable function. This result can actually be
derived from a recent work of Cholak, Dzhafarov, Schweber and Shore on partial
orders (\cite{Cholak2011}). We present both results below as the works are
independent and based on different arguments.

In \ref{sec:Pi0n} we present a quite different, and striking result: the
non-existence of equivalence relations complete
for $\PI n$, where $n\geq 2$. As it happens, the same construction will also
suffice
to establish the non-existence of $\DE n$-complete equivalence relations as
well.

\section{\texorpdfstring{$\mathbf{\PI 1}$}{Pi 01}-complete equivalence
relations and preorders}\label{sec:Pi01}

While the existence of $\SI n$-complete equivalence relations was never in
question, with $\PI 1$ the case is different. In the absence of an effective
means of enumerating the class the technique used to obtain a $\SI n$-complete
equivalence relation did not work, and the question of the existence of any
complete equivalence
relation, no matter how artificial, was an open and interesting one. In light of
this fact we will present the results in a different order to that in
Section~\ref{sec:Si0nrelations}. Rather than letting the existence of a complete
equivalence relation follow from the existence of a complete preorder, in
\ref{sec:Pi01eqrel} below we will construct a complete equivalence relation
directly, demonstrating how a $\PI 1$ equivalence relation can be seen as a
limit of a family of computable approximations. Then in \ref{sec:Pi01preorder}
we will show how the characterisation of $\PI 1$ preorders of \cite{Cholak2011}
can be used to construct a $\PI 1$-complete preorder by the same principles.

\subsection{A complete equivalence relation}\label{sec:Pi01eqrel}

We begin with the observation that if we fix a computable function $f$, a
natural example of a $\PI 1$ equivalence
relation could look something like this:
\begin{equation}\label{eq:Pi01_eqrel_form}
\{(x,y)\ |\ \forall n\ f(x,n)=f(y,n)\}.
\end{equation}
In fact, the reader may find it difficult to think of a $\PI 1$ equivalence
relation that cannot be interpreted in this way. As it turns out, that is
because no such equivalence relations exist: every $\PI 1$ equivalence relations
arises from the column equality of some computable function. This result, which
we will shortly present as Theorem~\ref{thm:Pi01complete}, proves to be
the key step in constructing a $\PI 1$-complete equivalence relation.

We first introduce a convenient way of thinking about a $\PI 1$ equivalence
relation. As every such relation is the complement of an r.e.\ set, for any
$E\in\PI 1$ there exists some $W_e$ such that $E=\NN^2\setminus W_e$.
This allows us to construct a computable sequence of approximations to $E$,
$\{E_t\ |\ t\in\NN\}$, satisfying:
\begin{equation}\label{eq:Etinclusion}
E_{t+1}\subseteq E_t.
\end{equation}
\begin{equation}
E_t\cap [0,t]^2\text{ is an equivalence relation over }[0,t]^2.
\end{equation}
\begin{equation}\label{eq:Eintersect}
E=\bigcap_{t\in\mathbb{N}} E_t.
\end{equation}
This is done simply by enumerating $W_e$ until $(\NN^2\setminus W_e)\cap
[0,t]^2$ is an equivalence relation over $[0,t]^2$, which it must do eventually.
We will use $E'_t$ to denote $E_t\cap [0,t]^2$.

With this in mind, we can show that every $\PI 1$ equivalence relation is of the
form in \eqref{eq:Pi01_eqrel_form}.

\begin{theorem}\label{thm:Pi01complete} Given a $\PI 1$
equivalence relation $E$, we can effectively obtain a
computable $f$
such that for all $x,y$:
\begin{equation}
Exy \iff \forall n\ f(x,n)=f(y,n).
\end{equation} 
\end{theorem}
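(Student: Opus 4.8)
The plan is to read the required $f$ off the computable approximations $E'_t$ introduced above, exploiting the fact that these relations only ever \emph{refine}: because $E_{t+1}\subseteq E_t$, a class of $E'_t$ is never re‑merged, only split. The guiding principle is that a column of $f$ may safely record only information that is never retracted, i.e.\ a split, and must never record the merely optimistic coarse structure of an early $E'_t$. Concretely, I would first extract an effective list of \emph{refinement events}: enumerating $W_e$ one pair at a time, whenever the newly enumerated pair causes some class $C$ of the current partition to break into two pieces I record this as the next event, writing $C=C^0\sqcup C^1$ with $\min C\in C^0$; stages at which nothing breaks are recorded as trivial events. Since $E=\bigcap_t E'_t$, this list has the key property that for every pair with $\neg Exy$ there is an event at which the (by that stage common) class of $x$ and $y$ breaks, with $x$ in one piece and $y$ in the other.

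I would then define $f(x,n)$ by cases on the $n$th event. If it is trivial, $f(x,n):=0$. Otherwise, with $C^0_n\sqcup C^1_n$ the pieces produced at stage $t_n$: if $x\le t_n$ put $f(x,n):=1$ when $x\in C^1_n$ and $f(x,n):=0$ otherwise (in particular $0$ when $x$ lies in another class); and if $x>t_n$ one must say which side $x$ is ``really'' on, the point being that because $C^0_n,C^1_n$ were fixed before $x$ appeared and the partition only refines, the intersection of $x$'s class with $[0,t_n]$ is permanently confined to $C^0_n$, to $C^1_n$, or to neither --- morally $f(x,n):=1$ iff $[x]_E\cap C^1_n\neq\emptyset$. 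Granting such an $f$, the implication $\neg Exy\Rightarrow\exists n\,f(x,n)\neq f(y,n)$ is immediate: at the event separating $x$ from $y$ both belong to the class being split, so one lands in $C^0_n$ and the other in $C^1_n$, and those choices are permanent; and $Exy\Rightarrow\forall n\,f(x,n)=f(y,n)$ follows because $x$ and $y$ are never separated, hence lie in the same class at every stage and so on the same side of every split.

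The real work, and the step I expect to be the obstacle, is making $f$ honestly computable while preserving the $Exy\Rightarrow\forall n$ direction, which is delicate exactly for numbers appearing after an event. The condition ``$[x]_E\cap C^1_n\neq\emptyset$'' is only $\PI 1$, not decidable, so it cannot be used literally; the naive computable surrogate --- inspect $[x]_{E'_x}$ at the stage $x$ first appears --- can attach $x$ to the wrong side through an optimistic, later retracted, merger, after which a number equivalent to $x$ but appearing still later gets assigned the other side, destroying column‑equality. The resolution I would pursue is either (i) to commit $f(x,n)$ only once $x$'s side relative to $C^0_n\mid C^1_n$ is genuinely forced, and argue --- using that refinements are permanent --- that this already occurs at stage $x$ and is thereafter stable; or (ii) to choose the enumeration of $W_e$ so that separations among $[0,t]$ are realised by stage $t$ ``from below'', so the approximation never over‑merges a small number with a later one; in either case one must then check carefully that the resulting $f$ has column‑equality \emph{exactly} $E$, with no over‑refinement. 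Everything else --- reflexivity, the easy direction, and the routine check that ``$\forall n\,f(x,n)=f(y,n)$'' is itself $\PI 1$ --- is straightforward.
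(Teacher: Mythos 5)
You have located the right difficulty, but you have not solved it, and neither of your two proposed patches can close the gap as stated. Patch (ii) --- reorder the enumeration so that a later number is never over-merged with a smaller one by the stage it appears --- is impossible whenever $E$ is non-computable: since every approximation contains $E$, that property would give, for $x<y$, that $Exy$ holds if and only if $(x,y)\in E'_y$, a decidable criterion, so $E$ would be computable. Patch (i) is essentially a hope: it is true that a class which has met $C^1_n$ can never later meet $C^0_n$ (separations are permanent), but what is \emph{not} stable is ``meets a side'' versus ``meets neither'', and that is exactly the retraction phenomenon you describe one sentence earlier. Concretely, with $Exy$, $t_n<x<y$, the stage-$x$ class of $x$ may meet $C^1_n$ (so your rule outputs $1$) while the true class meets neither piece; by stage $y$ the optimistic merger has been withdrawn, $[y]_{E'_y}\cap[0,t_n]=\emptyset$, and your rule outputs $0$ for $y$, destroying column-equality. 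So the side of $x$ is not ``genuinely forced'' at stage $x$, nor at any computable stage, and your construction of $f$ is left undefined precisely on the cases that matter.

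The missing ingredient is a mechanism that buys \emph{consistency without correctness}: a late element must not choose a side itself but must defer to an earlier element it is still merged with. This is exactly what the paper's proof does, defining $f(x,n)=f(z,n)$ for $z=\min[x]_{E'_{\max(x,n)}}$ when $z<x$, and $f(x,n)=x$ otherwise; the columns for small $n$ are then not the ``true sides'' of anything, just mutually consistent values. The genuinely hard step, which your outline omits entirely, is proving that $Exy$ implies $f(x,n)=f(y,n)$ even for small $n$, when the two recursive chains may descend through different intermediate elements; the paper handles this with a double induction on the recursion depths, using $E_{t+1}\subseteq E_t$ to compare the chains. If you complete your split-based $f$ by chaining through $\min[x]_{E'_x}$ and then prove such a consistency lemma, you will essentially have reproduced the paper's argument; as it stands, the proposal identifies the obstacle but leaves the theorem unproved.
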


\begin{proof}Intuitively, $f(x,n)$ is an approximation of
$\min[x]_E$. It will recurse on $f(z,n)$ where $z$ is the smallest element
satisfying $E'_{\max(x,n)}xz$. If the smallest such $z$ is $x$ itself, then
$f(x,n)=x$. Observe that if $x<n$ then $f(x,n)=\min[x]_{E'_n}$. The idea is that
for a large enough $n$, $f(x,n)$ will in fact be $\min[x]_E$, so $Exy$ thus
implies $f(x,n)=f(y,n)$. If $n$ is not large enough, we will show that $Exy$
nevertheless implies $f(x,n)=f(y,n)$, namely $\min[z]_{E'_n}$ for some $z$.

Explicitly $f$ is:
\[
 f(x,n) =
  \begin{cases}
   f(z,n)& \text{for }z=\min[x]_{E'_{\max(x,n)}},\text{ if }z<x\\
   x& \text{otherwise. }
  \end{cases}
\]

First let us verify that $\neg Exy$ implies that for some $n$, $f(x,n)\neq
f(y,n)$. Observe that if $(x,y)\notin E$, then for a large enough $n$,
$(x,y)\notin
E'_n$. We can without loss of generality assume that this $n$ is larger than $x$
or $y$, so since $f(x,n)=\min[x]_{E'_n}$ and $f(y,n)=\min[y]_{E'_n}$,we have
$f(x,n)\neq f(y,n)$ as required.

It
remains to
consider the case where $(x,y)\in E$. In this case $(x,y)\in
E_n$ for all $n$ by \eqref{eq:Eintersect}.

We proceed by double induction on the recursive stack depth of $f(x,n)$ and
$f(y,n)$, $i$ and $k$, with the inductive hypothesis that for $x\le y$,
$(x,y)\in E_{\max(y,n)}$ implies $f(x,n)=f(y,n)$ for functions of depth at most
$i$ and $k$ respectively.

Base case: $i=k=0$. In this case $x=\min[x]_{E'_{\max(x,n)}}$ and
$y=\min[y]_{E'_{\max(y,n)}}$. Since $(x,y)\in E$, $(x,y)\in
E'_{\max(y,n)}$. It must be the case that $x=y$, so clearly $f(x,n)=f(y,n)$.

Inductive case for $i$: suppose the inductive hypothesis holds for functions of
depth $i$ and $k$. Let $f(x,n)$ be of depth $i+1$, $f(y,n)$ of at most $k$.
Let $f(x,n)=f(z,n)$. Observe that $f(y,n)=f(w,n)$, $z\le w\le x$: the upper
bound we get from $(x,y)\in E'_{\max(y,n)}$ , the lower from
\eqref{eq:Etinclusion}. We then have $(z,w)\in E'_{\max(w,n)}$ as
$E_{\max(x,n)}\subseteq E_{\max(w,n)}$, so we can apply the inductive
hypothesis on $f(z,n)$ and $f(w,n)$.

Inductive case for $k$: suppose the inductive hypothesis holds for functions of
depth $i$ and $k$. Let $f(y,n)$ be of depth $k+1$, $f(x,n)$ of at most $i$. Note
that $f(x,n)=f(z,n)$ and $f(y,n)=f(w,n)$ with $z\leq w$. It remains to show that
$(z,w)\in E_{\max(w,n)}$ before we can apply the inductive hypothesis. This is
clearly the case as $(y,w)\in E_{\max(y,n)}\subseteq E_{\max(x,n)}$. As
$E'_{\max(x,n)}$ is an equivalence relation and contains $(x,y),(y,w)$ and
$(x,z)$ it must also contain $(z,w)$. As $E_{\max(x,n)}\subseteq E_{\max(w,n)}$,
we obtain what is needed.\end{proof}

It should be noted that not only do we obtain $f(x,n)=\min[x]_E$ for a large
enough $n$,
but we also know that $n$ is large enough if $f(x,n)=x$. Recall that in
Proposition~\ref{prop:relations_minelements} we
have established that the set of least elements of a $\PI 1$ equivalence
relation is $\SI 1$ and hence r.e.: $f$ gives us the function which
enumerates them. Simply apply $f$ to all values of $x$ and $n$, and print $x$
whenever $f(x,n)=x$.

We can strengthen this result to hold for polynomial time functions, thereby
obtaining a $\PI 1$ complete equivalence relation as polynomial time functions
have an effective enumeration. We thank
Moritz M\"{u}ller at KGRC Vienna for suggesting the simplified proof.

\begin{theorem}\label{thm:quadratic_complete}
The equality of polynomial (in fact, quadratic) functions is complete for $\PI
1$ equivalence relations. That is, for every $\PI 1$ equivalence relation $E$
and for every $x,y$, we can effectively obtain quadratic time $g_{x},g_{y}$
such that:
\begin{equation}
Exy\iff g_{x}=g_{y}.
\end{equation}
\end{theorem}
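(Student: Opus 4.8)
The plan is to take the computable function $f$ produced by Theorem~\ref{thm:Pi01complete}, for which $Exy\iff\fao n f(x,n)=f(y,n)$, and to ``slow it down'' into a quadratic-time function $g_x$ without disturbing this equivalence. The naive idea — evaluate $f(x,\cdot)$ under a clock encoded in the input — fails, because the computations $f(x,\cdot)$ and $f(y,\cdot)$ have unrelated running times, so on an input where one side has converged and the other has not, a naive time-out returns different values even when $Exy$. The remedy is to make the \emph{level} at which we evaluate $f$ a function of the \emph{input length} (and of the fixed index for $\overline E$) alone, never of $x$; then agreement for $Exy$-pairs is inherited from Theorem~\ref{thm:Pi01complete} applied at one common level.

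Concretely, fix $e$ with $E=\NN^2\setminus W_e$ and recall the computable approximations $\{E'_t\}$ and the recursion defining $f$. Two features of that recursion drive the construction: (a) computing $f(x,n)$ consults only the relations $E'_j$ with $j\le\max(x,n)$, and the recursion has depth at most $x$; (b) the part of this data with $j\le x$ is a fixed finite list, computable once from $x$ and $e$, so it can be built into $g_x$ as a constant, whereas the part with $x<j\le n$ is the expensive part, since producing $E'_j$ may take a number of enumeration steps not polynomially bounded in $j$. Let $g_x$ act on binary strings $w$, let $T(m)$ be the largest $t$ such that, simulating $m$ steps of the enumeration of $W_e$ and the attendant checks, all of $E'_0,\dots,E'_t$ are determined correctly (so $T$ is computable, non-decreasing, and $T(m)\to\infty$, since each $E'_j$ is eventually an equivalence relation), and set $g_x(w)=f\bigl(x,T(|w|)\bigr)$, drawing the approximations with index $\le T(|w|)$ from the $|w|$-step simulation and those with index $\le x$ from the hard-wired data; together these cover every $E'_j$ the recursion for $f(x,T(|w|))$ can reach. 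Simulating $|w|$ steps, determining $T(|w|)$ and the relevant approximations, and running a recursion of depth $O(1)$ whose levels each cost $\mathrm{poly}(|w|)$ shows $g_x$ runs in polynomial — with careful bookkeeping, quadratic — time in $|w|$, and the index of $g_x$ is produced effectively from $x$.

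For the verification: if $Exy$ then for every $w$ we have $g_x(w)=f(x,T(|w|))=f(y,T(|w|))=g_y(w)$ directly from Theorem~\ref{thm:Pi01complete}, evaluated at the common level $T(|w|)$, so $g_x=g_y$ as functions. If $\neg Exy$ then, as in the proof of Theorem~\ref{thm:Pi01complete}, there is $n_0>\max(x,y)$ with $f(x,n)=\min[x]_{E'_n}\ne\min[y]_{E'_n}=f(y,n)$ for every $n\ge n_0$ (the classes of $x$ and $y$ are distinct in the equivalence relation $E'_n$, hence so are their minima, and this persists by nestedness of the $E'_t$); taking $w$ with $T(|w|)\ge n_0$ gives $g_x(w)\ne g_y(w)$, so $g_x\ne g_y$. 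Hence $x\mapsto g_x$ is a component-wise reduction of $E$ to equality of quadratic-time functions; since the quadratic-time functions are effectively enumerable and equality of two of them is a $\PI 1$ equivalence relation, that relation is $\PI 1$-complete, as claimed.

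I expect the real work to be exactly the point flagged in the first paragraph: arranging that the evaluation level is independent of $x$ while still equipping $g_x$ with enough information about the short initial segments of $E$ — the hard-wired data — that the recursion goes through on \emph{every} input, not merely on long ones. Checking that this hard-wired portion is finite and genuinely covers all approximations the recursion touches, together with the routine but slightly fiddly running-time accounting needed to reach the stated quadratic bound, are where care is required; the two implications themselves then fall straight out of Theorem~\ref{thm:Pi01complete}.
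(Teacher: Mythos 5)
Your proposal is correct, and it follows the same overall strategy as the paper: start from the $f$ of Theorem~\ref{thm:Pi01complete} and pad so that the computational budget available to $g_x$ is a function of the input length alone, independent of $x$. The implementation differs in one respect worth noting. The paper clocks $f$ itself: it builds a monotone time bound $h$ from the running time of $f$, encodes $n$ as $1^n01^{h(n)}$, and arranges $g_x(p(n))=f(x,n)$ exactly for \emph{every} $n$, so $g_x=g_y$ is literally $\forall n\, f(x,n)=f(y,n)$ and both directions are immediate from Theorem~\ref{thm:Pi01complete}. You instead clock the master enumeration of the approximations $E'_t$, evaluate at level $T(|w|)$, and hard-wire the finitely many approximations of index $\le x$; the levels you realise are only those in the range of $T$, so your backward direction needs the extra observation that once $\neg Exy$ the disagreement $f(x,n)\neq f(y,n)$ persists for all sufficiently large $n$ (which does follow from the nestedness of the $E'_t$, as you say), together with $T(m)\to\infty$. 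What your variant buys is that no well-formedness check on the input is needed and the value depends only on $|w|$; what the paper's buys is that equality of the $g$'s reflects equality of $f$ at all levels with no persistence argument. Both yield the quadratic bound by essentially the same step-counting bookkeeping, so there is no gap.
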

\begin{proof}
The idea behind the proof is to pad the input until it is large enough to allow
us to put a bound on the running time of the $f$ we constructed in
Theorem~\ref{thm:Pi01complete}. We will thus construct a quadratic time $g$
mapping
from strings to strings such that
for a computable $p$ mapping integers to strings,
\begin{equation}
g(p(x),p(n))=f(x,n).
\end{equation}
This of course gives us:
\begin{equation}
Exy\iff\forall n\ g(p(x),p(n))=g(p(y),p(n)).
\end{equation}
We will then let $g_{x}(p(n))=g(p(x),p(n))$, and 0 if the input is not of the
form $p(n)$. Provided we can effectively verify if a string is of the form
$p(n)$ for some $n$, this gives us the theorem.

Let $E$ be a $\PI 1$ equivalence relation and $f$ be a computable function
mapping strings to strings over a binary alphabet, on input
$(1^x,1^n)$ behaving as on $(x,n)$ in   \ref{thm:Pi01complete}, and on any other
input outputting 0.
That is:
\[
 f(1^x,1^n) =
  \begin{cases}
   f(1^z,1^n)& \text{for }z=\min[x]_{E'_{\max(x,n)}},\text{ if }z<x\\
   1^x& \text{otherwise. }
  \end{cases}
\]
On any pair of strings not consisting entirely of ones $f$ outputs 0.

Without loss of generality, let $f(1^x,1^y)$ be computable in time $t(x,y)$
where  $t$ be an increasing function in $x$, $y$. This can be achieved, for
instance, by having $f(1^x,1^y)$ first compute the function value for all
smaller values of $x,y$ before beginning on $1^x$,$1^y$.

We want a time constructible $h$ satisfying $t(x,n)\le h(x)+h(n)$. This can be
achieved by having $h(k)$ run $f(1^k,1^k)$ and counting the number of steps.
This satisfies $t(x,n)\le h(x)+h(n)$ as for $x\le n$, $t(x,n)\le t(n,n)=h(n)$,
and $h(k)=m$ can be calculated in time quadratic in $m$: place a binary counter
at the start of the tape and simulate $f$ to the right of it.
In the worst case scenario for every step of $f$ the machine would iterate over
$m+\log m$ cells to update the counter, and whenever the counter gains another
bit the entire tape would need to be shifted right. This will happen $\log m$
times, and at most $m$ cells would need to be shifted each time. This would take
at most $m\log m$,
coming to $m\log m+m(m+\log m)$ or $O(m^2)$.

To compute $g(a,b)$ verify whether $a=1^x01^{h(x)}$ and $b=1^n01^{h(n)}$. If so
output $f(1^x,1^n)$, else output 0.
Observe that $g$ is quadratic time: we can verify whether $a=1^x01^z$ is of the
form
$1^x01^{h(x)}$ by beginning to compute $f(1^x,1^x)$, stopping whenever the
number of steps exceeds $z$.
If the input is of the right form we can compute $f(1^x,1^n)$ in time $t(x,n)\le
h(x)+h(n)$, which is of the same order as $|a|,|b|$ respectively.

Recall that we have defined $g_{x}(p(n))$ to be $g(p(x),p(n))$ for some
computable
$p$. Let $p(x)=1^x01^{h(x)}$. We have already seen that we can verify if a
string is of this form in time quadratic in the length of the string.
This ensures that for every $x,n$:
\begin{equation}
g(1^x01^{h(x)},1^n01^{h(n)})=f(1^x,1^n).
\end{equation} 
As for
all other values the function is 0, $g_{x}=g_{y}$ iff $\forall n\
f(1^x,1^n)=f(1^y,1^n)$, which by Theorem~\ref{thm:Pi01complete} guarantees that
$Exy$.
\end{proof}

As is often the case, having established the existence of one $\PI 1$ complete
relation, showing the existence of others is easy, as all we have to do is pick
a relation rich enough to encode computable function equality. For a natural
mathematical example, we will show that the isomorphism problem for polynomial
time
trees is $\PI 1$ complete.

\begin{definition}
A \emph{subtree of $\{0,\dots,c\}^*$} is a language closed under prefixes, $T$.
Implicit in the definition is the notion of a predecessor function, $pred:T\ria
T$, satisfying $pred(\epsilon)=\epsilon$ and for $w\neq\epsilon$, $pred(w)=v$
if and only if $va=w$ for some $a\in\{0,\dots,c\}$. An isomorphism of subtrees
is a bijective function preserving predecessor. A subtree $T$ is polynomial if
there is a procedure for verifying $w\in T$ in time polynomial in $|w|$.
\end{definition}

Note that we have an effective enumeration of all such trees. The set of
strings accepted by any polynomial time machine on the
alphabet $\{0,\ldots,c\}$ is,
by definition, a polynomial time language over $\{0,\ldots,c\}$. Every such
language induces a prefix-closed language:
given a machine $M$ and a string $s$, run $M$ on every prefix of $s$ and accept
iff all prefixes are accepted.
If $M$ is polynomial time this incurs at most a linear slowdown, and thus is
still polynomial.

Given an enumeration of polynomial time machines we will then say the tree
induced by $e$ to mean
the prefix-closed language induced by the $e$th machine. 

\begin{theorem}\label{thm:Pi01_trees}
Isomorphism of polynomial time subtrees of $\{0,\dots,c\}^*$ is
complete for $\PI 1$ equivalence relations.
\end{theorem}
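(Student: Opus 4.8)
The plan is to combine two ingredients. First, Theorem~\ref{thm:quadratic_complete} already tells us that equality of quadratic time functions is $\PI 1$-complete, so for hardness it suffices to reduce $g_x = g_y$ to $T_x \cong T_y$ by effectively coding each quadratic function as a polynomial time tree. Second, one has to check that isomorphism of polynomial time subtrees lies in $\PI 1$ to begin with, which is a compactness argument over finite truncations.

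For the upper bound I would first note that an isomorphism of subtrees preserves $pred$, hence preserves levels: the root is the unique fixed point of $pred$, and one then inducts on depth. Consequently $T \cong T'$ holds if and only if for every $n$ the depth-$n$ truncations $T \cap \{0,\dots,c\}^{\leq n}$ and $T' \cap \{0,\dots,c\}^{\leq n}$ (the strings in $T$, resp.\ $T'$, of length at most $n$) are isomorphic as finite rooted trees. The forward direction is immediate by restriction; for the converse, the partial isomorphisms between depth-$n$ truncations, ordered by restriction, form a tree that is finitely branching (each truncation is finite) and non-empty at every level (by hypothesis), so König's lemma gives an infinite branch $\sigma_0 \subseteq \sigma_1 \subseteq \dots$ whose union is an isomorphism $T \to T'$. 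Since the depth-$n$ truncation of a polynomial time tree is a finite object computable uniformly from its index and $n$ (enumerate all strings of length $\leq n$ and test membership) and isomorphism of finite trees is decidable, ``$T_e \cong T_i$'' is a $\forall n$ statement over a computable matrix, i.e.\ $\PI 1$.

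For hardness, given an index $x$ for one of the quadratic functions $g_x$ of Theorem~\ref{thm:quadratic_complete}, I would build a tree $T_x$ using only the letters $0$ and $1$: lay down an infinite spine $\epsilon, 0, 00, \dots$, and off the node $0^n$ hang, via its $1$-child, a finite path $C^x_n$ with $\langle g_x(w_n)\rangle + 1$ nodes, where $w_n$ is the $n$-th binary string and $\langle\cdot\rangle$ a fixed string-to-integer coding; that is, $T_x = \{0^n\ |\ n\geq 0\}\cup\{0^n 1 0^j\ |\ n\geq 0,\ 0\leq j\leq \langle g_x(w_n)\rangle\}$. Every spine node then has degree exactly $2$ while every node of every $C^x_n$ has degree at most $1$, so the spine is precisely the set of degree-$2$ nodes, one per level, and is therefore rigid: any isomorphism $T_x \to T_y$ must fix each $0^n$ and hence carry $C^x_n$ isomorphically onto $C^y_n$. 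As two paths are isomorphic exactly when they have equal length, this yields $T_x \cong T_y \iff \forall n\ g_x(w_n) = g_y(w_n) \iff g_x = g_y$ (in the $\Leftarrow$ direction in fact $T_x = T_y$ as languages). Membership ``$w \in T_x$'' is decided by parsing $w$ as $0^n$ or $0^n 1 0^j$, rejecting any other shape, and in the second case computing $g_x(w_n)$ in time quadratic in $|w_n| = O(\log|w|)$ and comparing $j$ against $\langle g_x(w_n)\rangle$ (stopping as soon as the latter is seen to exceed $j \leq |w|$); this runs in time polynomial — indeed polylogarithmic — in $|w|$, and an index for the procedure is obtained effectively from $x$. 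Since our machine already accepts a prefix-closed language, it is its own induced tree. Composing with Theorem~\ref{thm:quadratic_complete} gives $Exy \iff T_x \cong T_y$ for an arbitrary $\PI 1$ equivalence relation $E$, which together with the upper bound proves completeness.

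The step I expect to be the main obstacle is the $\PI 1$ upper bound — specifically, arguing carefully that level-wise isomorphism of all finite truncations upgrades to a genuine isomorphism of the (possibly infinite) trees, where finite branching is essential and König's lemma does the work, and verifying that the truncations are uniformly computable from polynomial time indices. The hardness half is then the comparatively routine ``rigid spine with coding gadgets'' construction, the only care needed being the polynomial (in fact polylogarithmic) time bound on membership.
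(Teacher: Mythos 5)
Your proposal is correct and takes essentially the same route as the paper: the $\PI 1$ upper bound is the same finite-truncation-plus-K\"onig's-lemma argument, and hardness is obtained, as in the paper, by reducing equality of the quadratic-time functions of Theorem~\ref{thm:quadratic_complete} to isomorphism of ``infinite spine with finite coding branches'' trees, the only differences being cosmetic (the paper uses the spine $1^x$ with a $0$-branch of length $f(x)$ directly below it and gets rigidity from the unique node at each depth with an infinite branch below, whereas you hang the coded value off a separate child and get rigidity from child counts). One harmless nitpick: membership cannot literally be decided in polylogarithmic time since parsing the input already takes linear time, but polynomial time is all that is required.
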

\begin{proof}
First let us verify that the problem is indeed $\PI 1$.
We claim that the
isomorphism can be thus expressed by:
\begin{equation}\label{Pi01form}
\{ (e,i):\forall d\ T_{(e,d)}\cong T_{(i,d)}\}.
\end{equation}
By $T_{(e,d)}$ we mean the tree induced by $e$ on strings of length no greater
than $d$: that is, the subtree of the tree induced by $e$ of depth $d$.
The matrix is clearly computable, as it concerns isomorphism of finite
trees.
To show that the isomorphism of subtrees of any depth implies a complete
isomorphism we need K\"{o}nig's lemma.

First observe that as any isomorphism must preserve the predecessor relation, it
must
therefore map the root to the root.
We will define a tree $I$ in which there exists a branch of length $d$ starting
from the root for every isomorphism between $T_{(e,d)}$ and $T_{(i,d)}$,
and given such a branch the subbranch of length $d-1$ corresponds to the
restriction of that isomorphism to $T_{(e,d-1)}$ and $T_{(i,d-1)}$.
That is, the root of $I$ corresponds to the unique function mapping the sole
node of $T_{(e,0)}$ to the sole node of $T_{(i,0)}$,
and for every isomorphism between $T_{(e,d)}$ and $T_{(i,d)}$ there is a
corresponding node at depth $d$.
Since the structures are finite every node has finitely many children, as there
are only finitely many functions between them. But if \eqref{Pi01form} holds
there are infinitely many nodes in the tree.
By K\"{o}nig's lemma, there must be a branch of infinite length in $I$,
corresponding to a complete isomorphism between the trees induced by $e$ and
$i$.

Next we will show how given indices $e$ and $i$ of polynomial time machines
computing functions $f$ and $g$ respectively,
we can construct polynomial time subtrees of $\{0,1\}^*$, $T_e$ and $T_i$, such
that:\bc $T_e\cong T_i \LR \forall x\ f(x)=g(x).$\ec
This will be sufficient to establish $\PI 1$ completeness by
Theorem~\ref{thm:quadratic_complete}.

To construct $T_e$, whenever $M_e$ on input $x$ outputs $n$ we require that
$1^x0^n\in T_e$. We then close the resulting language under substrings.
This corresponds to an infinite 1-branch, with a 0-branch corresponding to the
machine output at each level, as illustrated in Figure \ref{Tree}.

\begin{figure}\label{Tree}
\begin{center}
\begin{tikzpicture}[dot/.style={draw,circle,minimum size=1mm,inner sep=0pt,outer
sep=0pt,fill=black}]
\node at (0,-0.5){$f(0)=3$};
\node at (0,-1){$f(1)=0$};
\node at (0,-1.5){$f(2)=1$};
\node at (0,-2){$\vdots$};

\node at (2.5,-1.5){$\RA$};

\path \foreach \x/\y/\k in { 5/0/0,  5.5/-0.5/1, 6/-1/2, 6.5/-1.5/3}
      {coordinate [dot] (p\k) at (\x,\y)};
\draw \foreach \k in {1,...,3}
      {(p0)--(p\k)};

\path \foreach \x/\y/\k in { 4.5/-0.5/0,  4/-1/1, 3.5/-1.5/2}
      {coordinate [dot] (q\k) at (\x,\y)};
\draw \foreach \k in {0,...,2}
      {(p0)--(q\k)};

\path {coordinate [dot] (q3) at (5.5,-1.5)};
\draw {(p2)--(q3)};

\node at (6.8,-1.7){$\ddots$};
\end{tikzpicture}
\end{center}
\caption{The construction of $T_e$}
\end{figure}
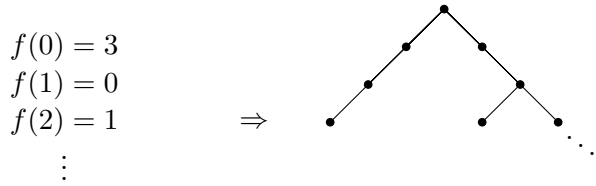

Let us verify that the resulting construction is indeed polynomial time.
Given a string we must first verify whether it is of the form $1^a0^b$ for some
$a,b\geq 0$, which can be done in linear time.
Then we check whether $M_e$ on input $a$ outputs $b$. Since $M_e$ is polynomial
time, so are these operations.

To see that $T_e\cong T_i \RA \forall x\ f(x)=g(x)$ we note again that the
isomorphism must map the root of $T_e$ to the root of $T_i$.
Having established this, assume for contradiction that $f(c)\neq g(c)$ for some
$c$, but $T_e\cong T_i$.
Observe that the isomorphism must map $1^c$ in $T_e$ to $1^c$ in $T_i$ as that
is the only node in both trees that is at distance $c$ from the root and has an
infinite branch below it.
However, this is impossible because one has a branch of length $f(c)$ and the
other of length $g(c)$.

To see that $T_e\cong T_i \LA \forall x\ f(x)=g(x)$ one need only observe that
with identical functions our construction will produce identical, and a fortiori
isomorphic, trees. \end{proof}

\subsection{A complete preorder}\label{sec:Pi01preorder}

Theorem~\ref{thm:Pi01complete} can be seen as a corollary of a largely unrelated
work. In \cite{Cholak2011} the authors in their investigation of partial orders
offer a characterisation of $\PI 1$ preorders (Proposition 3.1):
every such preorder is computably isomorphic to the
inclusion relation on a computable family of sets. As an equivalence relation is
merely a symmetric preorder, a $\PI 1$ equivalence relation is then isomorphic
to equality on a computable family of sets. This, of course, is another way to
view the equality
of computable functions.

As there is no effective enumeration of computable sets the inclusion relation
on them is not in itself a $\PI 1$ preorder. However, we can use a padding
argument as in \ref{thm:quadratic_complete} to extend the result of
\cite{Cholak2011} to polynomial time sets, which does result in a $\PI
1$-complete
preorder.

\begin{theorem}
The inclusion relation on polynomial time sets is complete for $\PI 1$
preorders. That is, taking $X_i$ to mean the $i$th polynomial time set, for
every $\PI 1$
preorder $P$ there exists a computable $f$
satisfying:
$$Pxy\iff X_{f(x)}\subseteq X_{f(y)}.$$
\end{theorem}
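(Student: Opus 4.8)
The plan is to mirror Theorem~\ref{thm:quadratic_complete}, substituting the characterisation of $\PI 1$ preorders from \cite{Cholak2011} for Theorem~\ref{thm:Pi01complete} and using a padding argument to descend from computable to polynomial time sets. First I would check that the relation is $\PI 1$: for polynomial time indices $i,k$ we have $X_i\subseteq X_k\iff\forall w\,(w\in X_i\ria w\in X_k)$, and the matrix is computable because membership in a polynomial time set is decidable uniformly from its index, so the single universal quantifier places the relation in $\PI 1$.

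For completeness, let $P$ be an arbitrary $\PI 1$ preorder. By Proposition~3.1 of \cite{Cholak2011}, $P$ is computably isomorphic to the inclusion relation on a computable family of sets, so absorbing the isomorphism we may fix a computable predicate $c(x,w)$ such that, writing $C_x=\{w\ |\ c(x,w)=1\}$, we have $Pxy\iff C_x\subseteq C_y$. Let $U$ be a (total) machine computing $c$, and let $T(x,w)$ be the number of steps used by $U$, made monotone in both arguments by taking $T(x,w)=\max\{\,(\text{steps of }U\text{ on }(x',w'))\ |\ x'\le x,\ w'\le w\,\}$. Put $h(w)=T(w,w)$; then $h$ is computable and monotone, and for every $x$ and every $w\geq x$ we have $T(x,w)\le h(w)$, while for the finitely many $w<x$ we have $T(x,w)\le T(x,x)$, a constant.

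Now set $X_x=\{\,1^w01^{h(w)}\ |\ w\in C_x\,\}$. The crucial point --- the one I would be most careful about --- is that the codeword $1^w01^{h(w)}$ of $w$ does not depend on $x$, so for all $x,y$ we have $X_x\subseteq X_y$ precisely when $w\in C_x$ implies $w\in C_y$ for every $w$; hence $X_x\subseteq X_y\iff C_x\subseteq C_y\iff Pxy$, which is the required reduction once we know the $X_x$ are polynomial time uniformly in $x$. To see the latter, let $M_{f(x)}$, on input $v$ of length $N$, first check that $v$ has the form $1^w01^z$ (linear time), then compute $h(w)$ by simulating $U$ on the $(w+1)^2$ pairs $(x',w')$ with $x',w'\le w$, aborting and rejecting as soon as any such simulation exceeds $z$ steps; since $w\le N$ and, absent an abort, every sub-simulation runs for at most $z\le N$ steps, this stage costs only polynomial time, and we reject unless $z=h(w)$. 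Finally $M_{f(x)}$ runs $U(x,w)$ with $x$ hardwired and accepts iff the output is $1$; for $w\geq x$ this costs at most $h(w)=z\le N$ steps and for $w<x$ at most the constant $T(x,x)$, so $M_{f(x)}$ is polynomial time, and $f$ is computable since its only $x$-dependent ingredient is the constant $x$ written into an otherwise fixed program. The main obstacle is exactly the tension between making the padding long enough that membership in $X_x$ can be tested in polynomial time for every fixed $x$ and making it independent of $x$ so that inclusions among the $X_x$ track inclusions among the $C_x$; this is resolved by the choice $h(w)=T(w,w)$ together with the observation that the awkward case $w<x$ concerns only finitely many strings and so contributes only a constant. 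Completeness for $\PI 1$ equivalence relations then follows by Proposition~\ref{prop:preorders}, re-deriving Theorem~\ref{thm:quadratic_complete}, since equality of polynomial time sets is the symmetric fragment of inclusion.
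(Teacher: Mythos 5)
Your proof is correct, and it reaches the theorem by a more modular route than the paper's. Both arguments follow the same overall plan --- invoke Proposition 3.1 of \cite{Cholak2011} and then pad as in Theorem~\ref{thm:quadratic_complete} --- but the paper does not use the Cholak characterisation as a black box: it reproduces their stage-by-stage construction of the sets $A_i$, proves that membership of $n$ in $A_i$ is settled within the first $\max\{\lceil\sqrt{2n}\rceil,i\}$ stages, and chooses the padding $h$ to absorb the cost of producing the approximations $P_t$, the remaining per-stage work being bounded polynomially in $n$ by hand. You instead take from \cite{Cholak2011} only the existence of a uniformly computable family $C_x$ with $Pxy\iff C_x\subseteq C_y$ and pad by the monotonised diagonal running time $h(w)=T(w,w)$ of the uniform decision procedure; the two observations that make this work --- the codeword of $w$ is independent of $x$, so inclusions among the $X_x$ track inclusions among the $C_x$ exactly, and the only inputs where $h(w)$ fails to majorise $T(x,w)$ are the finitely many $w<x$, which the fixed machine $M_{f(x)}$ absorbs into an additive constant --- are exactly the ones you flag, and your abort-at-$z$-steps check that $z=h(w)$ is the same clocking trick the paper uses. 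What your route buys is independence from the internals of the Cholak construction: it shows that \emph{any} uniformly computable family of sets can be converted, inclusion-preservingly, into a uniformly indexed family of polynomial time sets; what the paper's route buys is an explicit description of the sets being padded. One small point to make explicit: to output an index $f(x)$ in a standard (clocked) enumeration of polynomial time sets you should note that the additive constant (essentially $T(x,x)$ plus simulation overhead) is computable from $x$, so an appropriate clock can be computed along with the hardwired program; and your closing sentence yields the $\PI 1$-completeness of equality of polynomial time \emph{sets}, the set-theoretic counterpart of Theorem~\ref{thm:quadratic_complete} rather than that theorem verbatim.
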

\begin{proof}
By Proposition 3.1 of \cite{Cholak2011}, for every $\PI 1$ preorder $P$ there
exists a sequence of computable sets $\{A_i\ |\ i\in\NN\}$ such that $Pxy$ if
and only if $A_x\subseteq A_y$. Our goal is to create a sequence of polynomial
time sets
of binary strings $\{X_i\ |\ i\in\NN\}$ and a computable $h$ satisfying:
$$n\in A_i\iff 1^n0^{h(n)}\in X_i.$$
Once we establish this the proof will be complete: by
\cite{Cholak2011} we shall have $Pxy$ if and only if $A_x\subseteq A_y$ which,
by our construction, holds if and only if $X_x\subseteq X_y$, and we can obtain
$X_i$ from $A_i$ computably via $h$.

As with equivalence relations, we will require computable approximations to our
preorder $P$. That is, a computable sequence $\{P_t\ |\ t\in\NN\}$, satisfying:
\begin{equation}\label{eq:Ptinclusion}
P_{t+1}\subseteq P_t.
\end{equation}
\begin{equation}
P_t\cap [0,t]^2\text{ is a preorder over }[0,t]^2.
\end{equation}
\begin{equation}\label{eq:Pintersect}
P=\bigcap_{t\in\mathbb{N}} P_t.
\end{equation}

The sequence $\{A_i\ |\ i\in\NN\}$ in \cite{Cholak2011} is constructed by
stages. We give their construction below:

At stage $t$, for all $i<t$, if $P_tit$ add to $A_t$ the contents of $A_i$.
Consider every $i,k\le t$. Let $n$ be the smallest number not added to any set
thus far. If $P_tik$, let $n\in A_k$. Else, let $n\notin A_k$.

Let us verify that this construction behaves as claimed. First, the sequence
$\{A_i\ |\ i\in\NN\}$ is clearly computable. Next, suppose $Pik$. If $i<k$, at
stage $k$
all elements placed thus far into $A_i$ will be placed into $A_k$. Thereon, at
any stage $t$ an element could only be placed into $A_i$ if there exists a $j$
with $P_tji$. However, as $P_t$ is transitive at that stage, $P_tjk$ and that
element is placed in $A_k$ as well. Next, suppose $k<i$. Clearly for $t>i$, any
element placed into $i$ at stage $t$ is also placed into $k$ as $P_t$ is
transitive. The only other time an element enters $i$ is at stage $i$ where we
add the contents of $A_j$ to $A_i$, for all $j$ where $P_i{ji}$. For
contradiction, let us assume this process violates our requirement that
$A_i\subseteq A_k$. Let $j$ be the smallest number such that $P_i{ji}$ but
$A_j\nsubseteq A_k$. Clearly $j>k$, otherwise we have already seen that all the
elements of $A_j$ would be in $A_k$. It is easy to see that this implies that at
stage $j$ there must have been a $j'$ with $P_jj'j$ (by transitivity, $P_jj'k$)
with $j'>k$ and $A_{j'}\nsubseteq A_k$. However as there are only finitely many
numbers between $k$ and $i$, eventually this would no longer be possible. As
such $Pik$ implies $A_i\subseteq A_k$.

Finally, suppose $\neg Pik$. This means there exists a stage $t$ with $\neg
P_tik$. At that stage we will place some element into $A_i$ (because clearly
$P_tii$), but not into $A_k$, meaning that $\neg Pik$ implies $A_i\nsubseteq
A_k$

We observe that the following is a decision procedure for determining whether
$n\in A_i$: perform the first $\max\{\lceil\sqrt{2n}\rceil,i\}$ steps of the
construction, and observe whether $n$ has been added to $A_i$ thus far. To see
that this is sufficient note that there are only two ways an element $n$ could
be added to $A_i$:
\begin{enumerate}
\item
$n$ has already been placed into some $A_k$, and $P_iki$ held at stage $i$. We
would
then add the contents of $A_k$ to $A_i$ at the beginning of that stage.
\item
At some stage $t$, $n$ was the smallest integer not placed in any set so far,
and was placed into $A_i$ because $P_tik$ for some $i$.
\end{enumerate}
The first case happens at stage $i$. In the second case, note that at every
stage $s$ we have $s$ sets under consideration, and each set induces the
introduction of a new integer. So the largest integer placed in any set at stage
$t$ is $\sum_{s\le t} s=t^2/2$, so $t\le\lceil\sqrt{2n}\rceil$.

While we can thus establish that we need to run no more than
$\max\{\lceil\sqrt{2n}\rceil,i\}$ steps of the construction, this tells us
nothing about how much time each step will actually take: at stage $t$ we must
first construct $P_t$, and that could take an arbitrarily long time. This is the
purpose of the padding function $h$.

Recall that the sequence $\{P_t\ |\ t\in\NN\}$ is computable. That is, there
exists a machine $M$ that on input $t$ returns the code of the machine
deciding $P_t$. As in Theorem~\ref{thm:quadratic_complete}, $h(n)$ is simply the
function that counts the number of steps $M$ takes on input $n$. As before, we
can assume that $h(n-1)\le h(n)$.

We can then define $X_i$ via its decision procedure as follows: to determine
whether $1^n0^k\in A_i$, first verify whether $k=h(n)$. If not, reject the
string. Otherwise perform the first $\max\{\lceil\sqrt{2n}\rceil,i\}$ steps of
the $\{A_i\ |\ i\in\NN\}$ construction and determine whether $n\in A_i$.

It is clear that this construction guarantees $n\in A_i$ if and only if
$1^n0^{h(n)}\in X_i$. It remains to verify that $X_i$ is polynomial time.

We can ignore the case where $\max\{\lceil\sqrt{2n}\rceil,i\}=i$ as it takes
constant time. As such, we will consider the time needed to perform the first
$t=\lceil\sqrt{2n}\rceil$ steps of the construction of $\{A_i\ |\ i\in\NN\}$.
More precisely, we will consider the amount of time needed to perform the
$t$th step, as performing all preceding step requires at
most $t$ times more work, which is sublinear in the length
of the input.

As in Theorem~\ref{thm:quadratic_complete}, the first step is to ensure the
input string is of the right form. That is, on input $1^n0^m$ we must first
verify whether $h(n)=m$. The argument is as before: begin calculating $h(n)$ by
simulating the machine which counts the number of steps $M$ takes on input $t$,
where $M$ is the machine that on input $t$ returns the code of the machine
deciding $P_t$. It should be clear whether the machine takes too little or too
many steps in time at most quadratic in $m$. If the string is of the wrong form,
reject it.

Given that the string is of the form $1^n0^{h(n)}$, it takes at most
$h(t)$ time to construct the required
approximation of $P$, whereas the input is of length $n+h(n)$. Following this we
copy the elements of $A_k$ into $A_{t}$ for all $k$ with
$P_{t}kt$: there are at most
$t$ such sets, and each set contains at most $n-1$ elements,
so the number of operations is polynomial in $n$.

Finally, we consider every $k\le t$, of which there are
$t$, and for every $j$ add a new element into $A_j$ if $P_tkj$: at most $t^2$
operations, or linear in $n$.
\end{proof}

We would like to complete the parallel with the equivalence relation case by
showing that the
embeddability of polynomial time trees is complete for $\PI 1$ preorders.
Unfortunately, embeddability in the natural sense of an injective function
preserving predecessor does not appear to be $\PI 1$ but rather $\DE 2$: we
could guess the node the root is mapped to and require that for every level
below that the embedding is preserved, or we could require that for every depth
of the preimage there exists a depth of the image enabling an embedding.

It is also worth noting that with this notion of an embedding bi-embeddability
would
not, in fact, imply isomorphism as illustrated in Figure~\ref{fig:bi-emb_trees}.
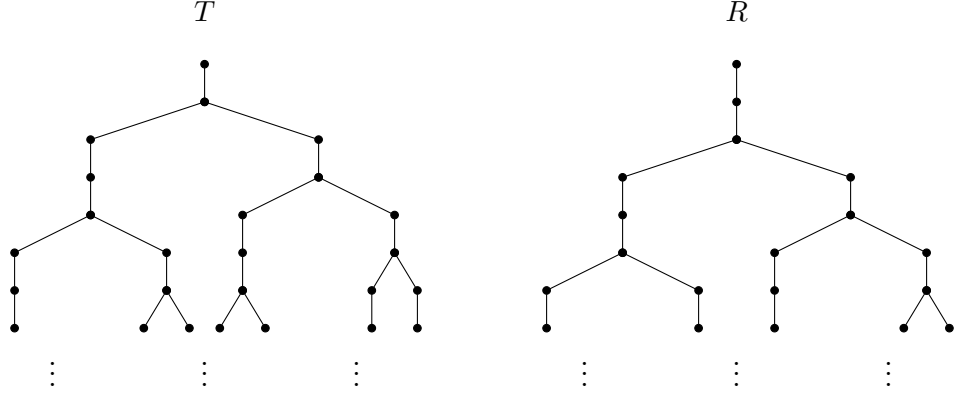
\begin{figure}
\begin{center}
\begin{tikzpicture}[dot/.style={draw,circle,minimum size=1mm,inner sep=0pt,outer
sep=0pt,fill=black}]

\node at (0,0.7) {$T$};

\path \foreach \x/\y/\k in { 0/0/0,  0/-0.5/1, -1.5/-1/2, -1.5/-1.5/3,
-1.5/-2/4, -2.5/-2.5/5, -2.5/-3/6, -2.5/-3.5/7}
      {coordinate [dot] (p\k) at (\x,\y)};
\draw \foreach \k/\i in {0/1,1/2,2/3,3/4,4/5,5/6,6/7}
      {(p\k)--(p\i)};

\path \foreach \x/\y/\k in { 0/-0.5/0,  1.5/-1/1, 1.5/-1.5/2, 0.5/-2/3,
0.5/-2.5/4, 0.5/-3/5, 0.2/-3.5/6}
      {coordinate [dot] (p\k) at (\x,\y)};
\draw \foreach \k/\i in {0/1,1/2,2/3,3/4,4/5,5/6}
      {(p\k)--(p\i)};

\path \foreach \x/\y/\k in { -1.5/-2/0, -0.5/-2.5/1, -0.5/-3/2, -0.8/-3.5/3}
      {coordinate [dot] (p\k) at (\x,\y)};
\draw \foreach \k/\i in {0/1,1/2,2/3}
      {(p\k)--(p\i)};

\path \foreach \x/\y/\k in { -0.5/-3/0, -0.2/-3.5/1}
      {coordinate [dot] (p\k) at (\x,\y)};
\draw \foreach \k/\i in {0/1}
      {(p\k)--(p\i)};

\path \foreach \x/\y/\k in { 0.5/-3/0, 0.8/-3.5/1}
      {coordinate [dot] (p\k) at (\x,\y)};
\draw \foreach \k/\i in {0/1}
      {(p\k)--(p\i)};

\path \foreach \x/\y/\k in { 1.5/-1.5/0, 2.5/-2/1, 2.5/-2.5/2, 2.2/-3/3,
2.2/-3.5/4}
      {coordinate [dot] (p\k) at (\x,\y)};
\draw \foreach \k/\i in {0/1,1/2,2/3,3/4}
      {(p\k)--(p\i)};

\path \foreach \x/\y/\k in { 2.5/-2.5/0, 2.8/-3/1, 2.8/-3.5/2}
      {coordinate [dot] (p\k) at (\x,\y)};
\draw \foreach \k/\i in {0/1,1/2}
      {(p\k)--(p\i)};

\node at (0,-4) {$\vdots$};

\node at (-2,-4) {$\vdots$};

\node at (2,-4) {$\vdots$};

\node at (7,0.7) {$R$};

\path \foreach \x/\y/\k in { 7/0/0, 7/-0.5/1,  7/-1/2, 5.5/-1.5/3, 5.5/-2/4,
5.5/-2.5/5, 4.5/-3/6, 4.5/-3.5/7}
      {coordinate [dot] (p\k) at (\x,\y)};
\draw \foreach \k/\i in {0/1,1/2,2/3,3/4,4/5,5/6,6/7}
      {(p\k)--(p\i)};

\path \foreach \x/\y/\k in { 7/-1/0,  8.5/-1.5/1, 8.5/-2/2, 7.5/-2.5/3,
7.5/-3/4, 7.5/-3.5/5}
      {coordinate [dot] (p\k) at (\x,\y)};
\draw \foreach \k/\i in {0/1,1/2,2/3,3/4,4/5}
      {(p\k)--(p\i)};

\path \foreach \x/\y/\k in { 5.5/-2.5/0, 6.5/-3/1, 6.5/-3.5/2}
      {coordinate [dot] (p\k) at (\x,\y)};
\draw \foreach \k/\i in {0/1,1/2}
      {(p\k)--(p\i)};

\path \foreach \x/\y/\k in { 8.5/-2/0, 9.5/-2.5/1, 9.5/-3/2, 9.2/-3.5/3}
      {coordinate [dot] (p\k) at (\x,\y)};
\draw \foreach \k/\i in {0/1,1/2,2/3}
      {(p\k)--(p\i)};

\path \foreach \x/\y/\k in { 9.5/-3/0, 9.8/-3.5/1}
      {coordinate [dot] (p\k) at (\x,\y)};
\draw \foreach \k/\i in {0/1}
      {(p\k)--(p\i)};

\node at (7,-4) {$\vdots$};

\node at (9,-4) {$\vdots$};
\node at (5,-4) {$\vdots$};

\end{tikzpicture}
\end{center}
\caption{Bi-embeddable, but non-isomorphic trees}
\label{fig:bi-emb_trees}
\end{figure}

Instead we will consider \emph{root-preserving} embeddability: the existence of
an injective function that maps the root to the root and respect predecessor.
This notion has a natural interpretation. $T_a$ is root-preserving embeddable
into $T_b$ if and only if $T_a$ is isomorphic to a set of prefixes of $T_b$.
That is, the nodes in $T_a$ can be renamed in such a way so that $T_a\subseteq
T_b$. This will also give us an alternative proof of Theorem~\ref{thm:Pi01_trees}.

\begin{theorem}
Root-preserving embeddability of polynomial time subtrees of $\{0,\dots,c\}^*$
is
complete for $\PI 1$ preorders.
\end{theorem}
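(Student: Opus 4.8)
The plan is to follow the pattern of Theorem~\ref{thm:Pi01_trees}, with isomorphism replaced by root-preserving embeddability and with the characterisation of $\PI 1$ equivalence relations replaced by the characterisation of $\PI 1$ preorders as inclusion of polynomial time sets obtained in the preceding theorem.

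First I would check that root-preserving embeddability of polynomial time subtrees is $\PI 1$. The crucial observation --- and the only genuinely new ingredient compared with the unrestricted notion of embeddability, which came out $\DE 2$ --- is that a root-preserving embedding $f$ preserves depth: from $f(\epsilon)=\epsilon$, injectivity, and $f\circ pred = pred\circ f$ one gets by induction that a node at depth $d$ is sent to a node at depth exactly $d$. Consequently $T_a$ root-preserving embeds into $T_b$ if and only if the depth-$d$ truncation of $T_a$ root-preserving embeds into the depth-$d$ truncation of $T_b$ for every $d$: the forward direction is restriction, and for the converse I would run the K\"onig's lemma argument of Theorem~\ref{thm:Pi01_trees}, forming the finitely branching tree whose level-$d$ nodes are the finitely many root-preserving embeddings between the depth-$d$ truncations, ordered by restriction, and extracting an infinite branch. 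Since embeddability of finite trees is decidable, the relation is $\PI 1$ --- it asserts, for every $d$, a computable property of the depth-$d$ truncations --- and reflexivity and transitivity are immediate.

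For completeness I would invoke the preceding theorem on inclusion of polynomial time sets, which is complete for $\PI 1$ preorders (equivalently, Proposition~3.1 of \cite{Cholak2011} together with the same padding argument). Given a $\PI 1$ preorder $P$, fix polynomial time sets $\{X_i\ |\ i\in\NN\}$ with $Pxy\iff X_x\subseteq X_y$, and let $T_i$ be the prefix-closed language over $\{0,1\}^*$ equal to $\{1^n\ |\ n\in\NN\}\cup\{1^n0\ |\ n\in X_i\}$. Its decision procedure checks whether the input has the form $1^n$ (accept) or $1^n0$ (run the decider for $X_i$ on $n$), and rejects otherwise; this is polynomial time and uniform in $i$, so the reduction is $i\mapsto e_i$, where $e_i$ is an index for a polynomial time machine deciding $T_i$. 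Each $T_i$ has $1^*$ as its unique infinite branch. I would then verify $X_x\subseteq X_y\iff T_x$ root-preserving embeds into $T_y$: if $X_x\subseteq X_y$ then $1^n\mapsto 1^n$, $1^n0\mapsto 1^n0$ is a root-preserving embedding; conversely, any root-preserving embedding must send the infinite branch of $T_x$ to an infinite branch of $T_y$, hence to $1^*$, and being depth-preserving it fixes $1^n$ for every $n$, so a non-spine child $1^n0$ of $T_x$ (i.e.\ $n\in X_x$) forces $1^n$ to have a non-spine child in $T_y$ (i.e.\ $n\in X_y$).

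Everything here is short once the earlier results are in hand, so the only point requiring care is the $\PI 1$ upper bound: one must genuinely exploit root-preservation to pin down depths, since without it the tree of partial embeddings need not be finitely branching and K\"onig's lemma fails --- which is exactly why the naive notion of embeddability lands only in $\DE 2$. As a byproduct I would also remark that, depth being preserved, root-preserving bi-embeddability of polynomial time trees coincides with isomorphism, so this yields a second proof of Theorem~\ref{thm:Pi01_trees} via Proposition~\ref{prop:preorders}.
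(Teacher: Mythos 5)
Your proposal is correct and follows essentially the same route as the paper: the relation is expressed as root-preserving embeddability of all depth-$d$ truncations (with the K\"onig's lemma argument of Theorem~\ref{thm:Pi01_trees} giving the converse), and completeness is obtained by reducing from inclusion of polynomial time sets via exactly the same tree construction, a spine $1^*$ with a leaf $1^n0$ attached whenever $n\in X_i$. Your verification of the converse direction (the unique infinite branch forces the embedding to fix the spine, hence leaves map to leaves at the same depth) is in fact a slightly more careful rendering of the paper's one-line argument.
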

\begin{proof}
The proof follows the same lines as Theorem~\ref{thm:Pi01_trees}. The desired
relation can be expressed as:
\begin{equation}
\{ (e,i):\forall d\ T_{(e,d)}\text{ is root-preserving embeddable in
}T_{(i,d)}\}.
\end{equation}

To obtain the theorem we will show how given indices $e$, $i$ of polynomial time
sets we can construct polynomial time subtrees of $\{0,1\}^*$, $T_e$ and $T_i$,
satisfying:
$$X_e\subseteq X_i\iff T_e\text{ is root-preserving embeddable in }T_i.$$

Let $X_e$ and $X_i$ be polynomial time sets. Let $1^x0\in T_e$ for every $x\in
X_e$ and $1^x0\in T_i$ for every $x\in X_i$. Let $1^y\in T_e,T_i$ for all $y$.
Suppose $X_e\subseteq X_i$. Note that our construction guarantees that
$T_e\subseteq T_i$, and as such the identity function is the required embedding.
Next, suppose that $f$ is a root preserving embedding of $T_e$ into $T_i$.
Observe that $f$ must map $1^x0$ to $1^x0$ as it is the only string of length
$x+1$ with no suffixes in the language. As such a string is in $T_i$ if and only
if $x\in X_i$, this establishes that $X_e\subseteq X_i$.\end{proof}
\begin{proposition}
If $T_a$ and $T_b$ are polynomial time subtrees of $\{0,\dots,c\}^*$ then $T_a$
and $T_b$ are root-preserving bi-embeddable if and only if $T_a\cong T_b$.
\end{proposition}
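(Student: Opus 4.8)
The direction from isomorphism to bi-embeddability is immediate: an isomorphism is a bijection preserving $pred$, hence in particular a root-preserving embedding, and so is its inverse. For the converse the plan is to prove a Schr\"oder--Bernstein type statement for root-preserving embeddings of finitely branching rooted trees. Write $S(u)$ for the subtree of $S$ hanging below a node $u$ (a rooted tree with root $u$), and write $X\preceq Y$ for ``$X$ is root-preserving embeddable in $Y$'', which is a preorder since root-preserving embeddings compose. First I would record that any root-preserving embedding $f$ is depth-preserving: $f$ is injective with $f(\epsilon)=\epsilon$, so $f(w)\neq\epsilon$ whenever $w\neq\epsilon$, and then $pred(f(w))=f(pred(w))$ gives $|f(w)|=|f(pred(w))|+1=|w|$ by induction on $|w|$. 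Consequently $f$ sends the $k$th level of a tree into the $k$th level of its target; restricting $f$ to $S(u)$, for $u$ a child of the root, yields a root-preserving embedding of $S(u)$ into $T(f(u))$, and distinct children of the root go to distinct children of the root. Since a subtree of $\{0,\dots,c\}^*$ branches at most $c+1$ ways, each level is finite.

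The heart of the proof is a combinatorial lemma: if $S$ and $T$ are root-preserving bi-embeddable, then there is a bijection $\beta$ between the children of the root of $S$ and the children of the root of $T$ such that $S(u)$ and $T(\beta(u))$ are root-preserving bi-embeddable for every such $u$. To prove it, take witnessing root-preserving embeddings $f\colon S\to T$ and $g\colon T\to S$; by the previous paragraph they induce injections $\bar f,\bar g$ between the two sets of children of the roots, which are finite and hence equinumerous, with $S(u)\preceq T(\bar f(u))$ and $T(u')\preceq S(\bar g(u'))$ for all $u,u'$. Then $\sigma=\bar g\circ\bar f$ is a permutation of the children of the root of $S$ and $S(u)\preceq S(\sigma(u))$ for each $u$; running around the $\sigma$-cycle through $u$, say of length $\ell$, gives $S(u)\preceq S(\sigma(u))\preceq\dots\preceq S(\sigma^\ell(u))=S(u)$, so all the subtrees on that cycle are mutually bi-embeddable. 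Hence $T(\bar f(u))\preceq S(\bar g(\bar f(u)))=S(\sigma(u))$ is bi-embeddable with $S(u)$, and together with $S(u)\preceq T(\bar f(u))$ this shows $S(u)$ and $T(\bar f(u))$ are bi-embeddable, so $\beta=\bar f$ does the job.

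Granting the lemma, I would assemble the isomorphism $\phi\colon T_a\to T_b$ level by level. Set $\phi_0(\epsilon)=\epsilon$ and maintain as invariant that $\phi_k$ is a bijection from the $k$th level of $T_a$ onto the $k$th level of $T_b$ with $pred(\phi_k(v))=\phi_{k-1}(pred(v))$ and with $T_a(v)$ root-preserving bi-embeddable with $T_b(\phi_k(v))$ for every level-$k$ node $v$; this holds at $k=0$ by the hypothesis of the proposition. To pass from $\phi_k$ to $\phi_{k+1}$, apply the lemma at each pair $(v,\phi_k(v))$ with $v$ on level $k$ to match the children of $v$ with those of $\phi_k(v)$; the resulting map is a bijection onto level $k+1$ of $T_b$ because $\phi_k$ was a bijection onto level $k$. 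The union $\phi=\bigcup_k\phi_k$ is then a bijection $T_a\to T_b$ preserving $pred$, i.e.\ an isomorphism. The single real obstacle is the lemma; the finite-branching hypothesis is precisely what makes its cycle argument go through, and the level-by-level assembly is routine bookkeeping.
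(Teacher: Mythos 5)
Your proof is correct, but it takes a genuinely different route from the one in the paper. The paper first reduces to finite trees: since a root-preserving embedding preserves depth, bi-embeddability of $T_a$ and $T_b$ gives bi-embeddability of each pair of depth-$d$ truncations, where a bare counting argument (injections both ways between finite sets force equal cardinality, so the embedding is onto and hence an isomorphism) settles the finite case; isomorphism of all truncations is then promoted to isomorphism of the full trees by the same K\"onig's lemma argument already used for Theorem~\ref{thm:Pi01_trees}. You instead avoid any compactness step: your key lemma, proved by the permutation-cycle (Schr\"oder--Bernstein-style) argument on the finite sets of children, matches children of the roots so that corresponding subtrees remain bi-embeddable, and the isomorphism is then assembled level by level, with the finite-branching hypothesis doing the work locally at each node rather than globally on truncations. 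What your approach buys is a direct, more explicit construction of the isomorphism that works for any finitely branching rooted trees without invoking K\"onig's lemma or the earlier machinery; what the paper's approach buys is brevity, since it reuses the truncation-plus-K\"onig argument already in place and needs only a one-line cardinality observation for the finite case. Both arguments are sound; the only points worth making explicit in a write-up of yours are the (easy) facts you already flag -- depth preservation, the restriction of an embedding to the subtree below a child, and that the children-level bijections at distinct nodes glue into a bijection of the next level -- none of which hides a gap.
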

\begin{proof}
We only need to prove the proposition for finite trees, as the rest follows from
K\"{o}nig's lemma. One direction is obvious. For the other direction suppose
$T_a$ and $T_b$ are finite trees with $f$ a root-preserving embedding of $T_a$
into $T_b$ and $g$ a root-preserving embedding of $T_b$ into $T_a$. We claim
that $f$ is actually an isomorphism. As $f$ is already an injective, predecessor
preserving mapping, it is sufficient to show that $f$ is onto.

As $f$ and $g$ are injective functions from $T_a$ to $T_b$ and $T_b$ to $T_a$
respectively, it follows that $|T_a|\le |T_b|$ and $|T_b|\le |T_a|$. As
$|T_a|=|T_b|$, and both are finite sets, it follows that any one to one function
between them is also onto, thus $f$ is an isomorphism.
\end{proof}
\begin{corollary}
Another way of obtaining Theorem~\ref{thm:Pi01_trees}.
\end{corollary}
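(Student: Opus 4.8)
The plan is to obtain Theorem~\ref{thm:Pi01_trees} as a formal consequence of the two preceding results rather than by re-running the direct construction of Section~\ref{sec:Pi01eqrel}. First I would invoke the theorem that root-preserving embeddability of polynomial time subtrees of $\{0,\dots,c\}^*$ is complete for $\PI 1$ preorders, together with Proposition~\ref{prop:preorders}: passing to the symmetric fragment of a $\PI 1$-complete preorder yields a $\PI 1$-complete equivalence relation, and the reduction function witnessing completeness is literally the same.

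Next I would identify the symmetric fragment of root-preserving embeddability with root-preserving bi-embeddability; this is immediate from the definition of the symmetric fragment of a preorder. I would then apply the immediately preceding proposition, which states that for polynomial time subtrees root-preserving bi-embeddability coincides with isomorphism (the finite case by a counting argument, the general case by K\"onig's lemma). Chaining these identifications, isomorphism of polynomial time subtrees of $\{0,\dots,c\}^*$ is exactly the symmetric fragment of a $\PI 1$-complete preorder, hence it is a $\PI 1$-complete equivalence relation; in particular, being the symmetric fragment of a $\PI 1$ preorder, it is itself $\PI 1$. That is precisely the assertion of Theorem~\ref{thm:Pi01_trees}.

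There is no genuine obstacle here: every step is bookkeeping built on results already established. The only point worth double-checking is that the relation whose completeness we inherit is really isomorphism and not merely bi-embeddability --- i.e.\ that when the reduction $e\mapsto T_e$ used in the root-preserving-embeddability theorem is applied in the symmetric situation it produces trees that are isomorphic, not just bi-embeddable --- and this is exactly what the preceding proposition guarantees, so nothing further needs to be verified.
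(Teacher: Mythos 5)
Your chain of reasoning --- completeness of root-preserving embeddability, Proposition~\ref{prop:preorders} to pass to the symmetric fragment, and the preceding proposition identifying root-preserving bi-embeddability with isomorphism --- is exactly the route the paper intends for this corollary. Your proposal is correct and matches the paper's approach.
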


\section{The end of completeness}\label{sec:Pi0n}

In this section we present a modified proof of a result of Russell Miller and
Keng Meng Ng from
\cite{Ianovski2012}, concerning the non-existence of $\PI n$-complete
equivalence relations for $n\geq 2$. We then show that it is immediately
extendable to the $\DE n$ case, thereby completing our investigation of
equivalence relations complete under computable component-wise reducibility.

\begin{theorem}\label{thm:noPi0nrelation}
For $n\geq 2$, there are no $\PI n$-complete equivalence relations.
\end{theorem}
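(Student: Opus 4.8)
The plan is to argue by contradiction: suppose $E$ is a $\PI n$-complete equivalence relation for some fixed $n\geq 2$, and produce a $\PI n$ equivalence relation $R$ with $R\nleq E$, contradicting completeness. Before the construction, two bookkeeping observations. First, $E$ cannot itself be $\DE n$: a $\DE n$ equivalence relation that is $\PI n$-complete reduces every $\DE n$ equivalence relation, hence is $\DE n$-complete among relations, which Proposition~\ref{prop:Si0n_complete_relations} forbids; so $E$ is properly $\PI n$. Second, since $\DE n\subseteq\PI n$, if the $R$ we build can be taken to be $\DE n$ then the very same argument, applied to a hypothetical $\DE n$-complete equivalence relation in place of $E$, also yields the non-existence of $\DE n$-complete equivalence relations — so we aim to keep $R$ at the $\DE n$ level throughout, obtaining both assertions at once.

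For the construction, write $E=\NN^2\setminus W$ with $W\in\SI n$. The point at which $n\geq 2$ is used is that $E$ is then $\PI 1$ relative to $\emptyset^{(n-1)}$, so, exactly as in Section~\ref{sec:Pi01eqrel} but relativised, we fix an $\emptyset^{(n-1)}$-computable decreasing approximation $E_0\supseteq E_1\supseteq\cdots$ with $E=\bigcap_s E_s$ and each $E_s\cap[0,s]^2$ an equivalence relation. We build $R$ through an $\emptyset^{(n-1)}$-computable decreasing approximation $R_s$ of the same shape — so $R$-classes may split but never merge during the construction — with requirements $\mathcal{R}_e$: $\varphi_e$ is not a reduction of $R$ to $E$. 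Each $\mathcal{R}_e$ receives a private infinite computable block of fresh numbers on which it is free to act. The crucial asymmetry to be exploited is that the construction runs with the oracle $\emptyset^{(n-1)}$, strong enough to enumerate $W=\neg E$, whereas every candidate reduction $\varphi_e$ is merely plain computable; this is precisely what is unavailable at $n=1$, consistent with the fact that a $\PI 1$-complete equivalence relation does exist.

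The hard part — and the heart of the proof — is that naive pair-by-pair diagonalisation cannot work: since $E$ is assumed complete, a genuine computable reduction of our finished $R$ to $E$ does exist, so no amount of finite, or even limit, inspection of a single $\varphi_e$ in isolation against a fixed target pair can refute it; any "wait and copy $E$'s verdict" strategy merely reproduces the behaviour of a correct reduction, and the complementary "do the opposite" strategy would force a $\SI n$ set to sit inside a $\PI n$ equivalence relation, which we cannot guarantee. The resolution must make the construction of $R$ genuinely self-referential — tying $R$ and its presumed reduction together so that the diagonalisation bites because $R$ is being shaped against \emph{all} the $\varphi_e$ simultaneously while being built by a strictly more powerful process than any of them — and must do so \emph{without} an effective enumeration of $\PI n$ equivalence relations, using instead a recursion-theoretic fixed point at the level of the reductions $\varphi_e$ together with the $\emptyset^{(n-1)}$-enumeration of $\neg E$ to detect, in the limit, the first place a given $\varphi_e$ fails to behave like a reduction. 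This is the step I expect to require the real work.

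Finally, the routine parts: a finite-injury (equivalently, $\emptyset^{(n-1)}$-priority) organisation lets the splitting actions of the various $\mathcal{R}_e$ coexist without destroying transitivity, symmetry and reflexivity of the approximations $R_s$, and guarantees each requirement acts only finitely often, so that $R=\bigcap_s R_s$ is a genuine $\PI n$ equivalence relation; inspecting the construction shows the membership of $R$ is in fact decided by a $\DE n$ predicate, so $R$ is $\DE n$; and checking that every $\mathcal{R}_e$ is met then gives $R\nleq E$, contradicting the completeness of $E$ and establishing the theorem for $\PI n$, with the $\DE n$ statement following from the same $R$.
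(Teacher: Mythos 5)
There is a genuine gap: the core of your argument is exactly the part you defer. You set up scaffolding (decreasing $\emptyset^{(n-1)}$-computable approximations of $E$ and of the relation $R$ to be built, requirements $\mathcal{R}_e$, a finite-injury organisation) and then explicitly leave the diagonalisation module as ``the step I expect to require the real work''. That module is the whole content of the theorem, and the devices you point to for it --- a recursion-theoretic fixed point over the $\varphi_e$ and a self-referential tie between $R$ and its presumed reduction --- are not what makes the known argument work: the paper's proof uses no recursion theorem and no injury. For $n=2$ (general $n$ by direct relativisation of this case, i.e.\ $\PI 2$ over $\emptyset^{(n-2)}$, not $\PI 1$ over $\emptyset^{(n-1)}$) it codes the $\PI 2$ facts \emph{positively}: uniformly r.e.\ sets $V[x,y]$ with $(x,y)\in E$ iff $V[x,y]$ is infinite, and the constructed $F$ likewise via sets $U_{xy}$. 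For each $e$ it keeps a dedicated pair $x_e,y_e$ permanently $F$-inequivalent and plays a switching game with swing elements $z_i$: pour elements into $U_{x_ez_i}$ while no new evidence for $\varphi_e(x_e)\mathrel{E}\varphi_e(z_i)$ appears, switch to $U_{y_ez_i}$ when it does, and abandon $z_i$ for a fresh $z_j$ if evidence for $\varphi_e(x_e)\mathrel{E}\varphi_e(y_e)$ appears. Transitivity of $E$ forbids infinite oscillation on a fixed $z_i$, so if $\varphi_e$ were a reduction the routine would settle on one side, making that $U$ infinite while the matching $V$ stays finite --- the required disagreement.

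Moreover, the framework you chose actively obstructs this mechanism, so it is not merely an omitted routine verification. In a decreasing ($\PI 1$-relative-to-oracle) approximation, $R$-classes may split but never merge, so a swing element cannot change allegiance from $x$'s side to $y$'s side; and since $E$ is only co-c.e.\ in the oracle, the only observable events are pairs \emph{leaving} $E$ --- membership can never be confirmed --- so the available one-shot strategies (``keep $(x,z)$ in $R$ and wait for $(\varphi_e(x),\varphi_e(z))$ to leave $E$'') simply end up agreeing with a genuine reduction instead of defeating it. The published argument works precisely because both $F$-membership and $E$-membership are rendered as ``infinitely often'' events, a symmetry your co-c.e.\ set-up destroys. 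Two smaller points: the observation that $E$ must be properly $\PI n$ plays no role, and the $\DE n$ non-existence does not fall out of the same $R$ by inspection --- in the paper it needs a further modification of the construction (arranging each $U[x,y]$ to be finite or cofinite) to push the built relation down to $\DE 2$.
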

\begin{proof}
We present the proof for the case $n=2$. For a general $n$ we simply relativise
to sets
$\PI 2$ with an appropriate oracle.

Let $E$ be an arbitrary $\PI 2$ equivalence relation. Recall that $\{e\ |\
W_e\text{ is infinite}\}$ is complete for $\PI 2$ sets
(see, for instance, \cite{Soare1987} Theorem IV:3.2). It follows that there
exists an r.e.\ sequence of sets
$\{V[x,y]\ |\ x,y\in\NN\}$ satisfying:
$$V[x,y]\text{ is infinite}\iff (x,y)\in E.$$
We will construct an $F\in\PI 2$ satisfying $F\nleq E$ by constructing an r.e.\
sequence $\{U_{xy}\ |\ x,y\in\NN\}$ where:
$$U_{xy}\text{ is infinite}\iff (x,y)\in F.$$
To ensure $F$ is an equivalence relation we will let $U_{xx}=\NN$ for all $x$
and we will only explicitly construct sets $U_{xy}$ for $x<y$, taking
$U_{yx}=U_{xy}$ as given. We will verify that $F$ is transitive after
the construction.

We will give the procedure for enumerating $\{U_{xy}\ |\ x,y\in\NN\}$ by giving
a subroutine that we will dovetail over to eventually cover all natural numbers.
We will use $x_e,y_e$ and $z_e$ to denote the $3e+1$, $3e+2$ and $3e+3$
respectively. I.e., the $e$th number with remainder 1,2 and 0 on division by 3.
The subroutine is run on $x_e,y_e$ and $z_i$:
\begin{enumerate}
\item
Wait for $\varphi_e(x_e),\varphi_e(y_e)$ and $\varphi_e(z_i)$ to converge.
\item
If any of $\varphi_e(x_e),\varphi_e(y_e)$ or $\varphi_e(z_i)$ output the same
value, end the subroutine.
\item
Enumerate $V[\varphi_e(x_e),\varphi_e(z_i)]$ and
$V[\varphi_e(x_e),\varphi_e(y_e)]$, adding a new element to $U_{x_ez_i}$ for
every step of the computation. If an element enters
$V[\varphi_e(x_e),\varphi_e(z_i)]$ go to step 4. If an element enters
$V[\varphi_e(x_e),\varphi_e(y_e)]$ go to step 5.
\item
Enumerate $V[\varphi_e(y_e),\varphi_e(z_i)]$ and
$V[\varphi_e(x_e),\varphi_e(y_e)]$, adding a new element to $U_{y_ez_i}$ for
every
step of the computation. If an element enters
$V[\varphi_e(y_e),\varphi_e(z_i)]$ go to step 3. If an element enters
$V[\varphi_e(x_e),\varphi_e(y_e)]$ go to step 5.
\item
Restart the subroutine on $x_e,y_e,z_j$ where $z_j$ is the least multiple of 3
not yet used in any subroutine.
\end{enumerate}

As mentioned, to enumerate $\{U_{xy}\ |\ x,y\in\NN\}$ we dovetail the above
subroutine on the smallest $x_e,y_e$ and $z_i$ not yet used.

Let us now verify that $F$ is a $\PI 2$ equivalence relation as intended. That
$F$ is indeed $\PI 2$ is immediate from its definition via infinite r.e.\ sets.
 We
have already dealt with symmetry and reflexivity. We will demonstrate that $F$
has no equivalence class with more than 2 members, thus obtaining transitivity
automatically.

Observe that $(x_e,y_d)\notin F$ for any $d,e$ as we never place any element
into $U_{x_ey_d}$. We also claim that it is not possible for $(x,z_i),(x,z_j)\in
F$ for $i<j$ as we would only run the subroutine on the same $x$ and $z_j$ if we
reach step 5, and once we reach step 5 we would no longer place any new elements
into $U_{xz_i}$, so it could not be infinite. As such the only candidate for a
an equivalence class with more than two members is $\{x_e,y_e,z_i\}$ via
$U_{x_ez_i}$ and $U_{y_ez_i}$ being infinite. Observe that this is impossible:
if the subroutine ever reaches step 5 then $U_{x_ez_i}$ and $U_{y_ez_i}$ would
be finite. If it never reaches step 5, then either after a finite amount of
steps it stays at step 3 or 4 for ever, in which case either $U_{y_ez_i}$ or
$U_{x_ez_i}$ would be finite, or it must jump between 3 and 4 infinitely many
times. As such a jump happens only when an element enters
$V[\varphi_e(y_e),\varphi_e(z_i)]$ and $V[\varphi_e(x_e),\varphi_e(z_i)]$
receive a new element respectively, these sets must be infinite. That is not
possible because $E$ is an equivalence relation, so
$V[\varphi_e(x_e),\varphi_e(y_e)]$ would need to be infinite as well. But if an
element ever enters $V[\varphi_e(x_e),\varphi_e(y_e)]$, we would have gone to
step 5.

We are now ready to show that $F\nleq E$. Suppose for contradiction that it is:
there then exists an $f=\varphi_e$ such that $(x_e,z_i)\in F$ iff
$(\varphi_e(x_e),\varphi_e(z_i))\in E$. As we have argued above $(x_e,z_i)\in F$
only if after a certain stage the subroutine stays at step 3 for ever, but the
subroutine will leave step 3 if an element ever enters
$V[\varphi_e(x_e),\varphi_e(z_i)]$, so that set must be finite and
$(\varphi_e(x_e),\varphi_e(z_i))\notin E$. The same argument shows that
$(y_e,z_i)\in F$ would necessitate $(\varphi_e(y_e),\varphi_e(z_i))\notin E$.
The only remaining possibility is where $(x_e,z_i),(y_e,z_i)\notin F$ for any
$i$. This means an infinite number of subroutines on $x_e,y_e$ must have reached
step 5, but since such a subroutine would enter step 5 only if an element enters
$V[\varphi_e(x_e),\varphi_e(z_i)]$ or $V[\varphi_e(y_e),\varphi_e(z_i)]$,
either $(\varphi_e(x_e),\varphi_e(z_i))$ or $(\varphi_e(y_e),\varphi_e(z_i))$
must be in $E$, which contradicts the assumption that $\varphi_e$ is a
reduction.
\end{proof}

\begin{corollary}
There is no $\PI n$-complete preorder for $n\geq 2$.
\end{corollary}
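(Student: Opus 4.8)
The plan is to derive this corollary immediately from two results already established in the excerpt: Theorem~\ref{thm:noPi0nrelation}, which rules out $\PI n$-complete equivalence relations for $n\geq 2$, and Proposition~\ref{prop:preorders}, which states that a $\PI n$-complete preorder forces the existence of a $\PI n$-complete equivalence relation. The argument is thus a single contrapositive step, and I would carry it out in one line.

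Explicitly, suppose toward a contradiction that $S$ is a $\PI n$-complete preorder for some $n\geq 2$. Applying Proposition~\ref{prop:preorders} to $S$, its symmetric fragment $S_{sym}=\{(x,y)\mid Sxy\wedge Syx\}$ is a $\PI n$ equivalence relation (since $\PI n$ relations are closed under intersection), and it is complete for $\PI n$ equivalence relations: given any $\PI n$ equivalence relation $E$, viewing $E$ as a preorder yields a computable $f$ with $Exy\iff Sf(x)f(y)$, and since $E$ is symmetric one gets $Exy\iff Sf(x)f(y)\wedge Sf(y)f(x)\iff S_{sym}f(x)f(y)$. This contradicts Theorem~\ref{thm:noPi0nrelation}.

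There is no real obstacle here; the substance of the argument is entirely contained in Theorem~\ref{thm:noPi0nrelation}, whose proof does the hard combinatorial work via the dovetailed subroutine. The only point worth a moment's care is that Proposition~\ref{prop:preorders} is genuinely uniform across all levels $n$ of the hierarchy — which it is, since its proof appeals only to closure of $\PI n$ under finite intersection and to the bare definition of completeness — so the reduction from the preorder case to the equivalence-relation case goes through for every $n\geq 2$ exactly as needed.
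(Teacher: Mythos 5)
Your proposal is correct and follows exactly the paper's own proof: the corollary is obtained by applying Proposition~\ref{prop:preorders} to a hypothetical $\PI n$-complete preorder and deriving a contradiction with Theorem~\ref{thm:noPi0nrelation}. The extra detail you supply about the symmetric fragment merely unfolds the proposition's proof and changes nothing in substance.
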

\begin{proof}
The existence of such a preorder would, by Proposition~\ref{prop:preorders},
contradict the theorem.
\end{proof}

\begin{proposition}
For every $\PI 2$ equivalence relation $E$ there exists a $\DE 2$ equivalence
relation $F$ such that $F\nleq E$.
\end{proposition}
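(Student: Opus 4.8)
The plan is to observe that the equivalence relation $F$ built in the proof of Theorem~\ref{thm:noPi0nrelation} already does the job: applied to the given $\PI 2$ equivalence relation $E$, that construction produces an equivalence relation $F$ with $F\nleq E$, and I claim the $F$ obtained there is not merely $\PI 2$ but in fact $\DE 2$. Since $F\nleq E$ has already been proved, it suffices to check that $F$ is also $\SI 2$, whence $F\in\SI 2\cap\PI 2=\DE 2$ as required; the analogous statement for $\PI n$ and $\DE n$ with $n\geq 2$ then follows by relativising the construction to an appropriate oracle, exactly as in Theorem~\ref{thm:noPi0nrelation}.

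Recall the shape of that $F$: it arises from a uniformly r.e.\ family $\{U_{xy}\}$ with $(x,y)\in F\iff U_{xy}$ infinite, where $U_{xx}=\NN$, $U_{xy}=\emptyset$ unless $\{x,y\}$ is of the form $\{x_e,z_i\}$ or $\{y_e,z_i\}$, and such a set receives a new element precisely during the computation steps in which the subroutine run on the triple $(x_e,y_e,z_i)$ is executing step 3 (for a pair $(x_e,z_i)$) or step 4 (for a pair $(y_e,z_i)$). Thus $(x_e,z_i)\in F$ iff that run spends infinitely many computation steps at step 3, which read literally is a $\PI 2$ condition; this is why the theorem only asserts $F\in\PI 2$. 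The key point is the combinatorial fact already isolated inside the proof of Theorem~\ref{thm:noPi0nrelation}: the run on $(x_e,y_e,z_i)$ cannot oscillate between steps 3 and 4 infinitely often, since that would force both $V[\varphi_e(x_e),\varphi_e(z_i)]$ and $V[\varphi_e(y_e),\varphi_e(z_i)]$ to be infinite, hence $V[\varphi_e(x_e),\varphi_e(y_e)]$ infinite by transitivity of $E$, whence the run would already have passed to step 5 --- a contradiction. Consequently the step occupied by the run stabilises, so ``the run spends infinitely many computation steps at step 3'' is equivalent to ``from some construction stage on the run is active and permanently at step 3'', which is a $\SI 2$ predicate with decidable matrix, uniformly in $e,i$. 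The same argument handles the pairs $(y_e,z_i)$ with step 4 in place of step 3. Membership in $F$ of the diagonal and of every remaining pair is decidable, so $F$ is a decidable case split over $\SI 2$ conditions; hence $F\in\SI 2$, and therefore $F\in\DE 2$.

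The step I expect to carry the weight is exactly the one flagged above: converting the a priori $\PI 2$ statement ``$U_{x_ez_i}$ is infinite'' into the $\SI 2$ statement ``the run eventually sits permanently at step 3''. This hinges on ruling out infinite oscillation between steps 3 and 4, but that is precisely the case analysis already carried out in the proof of Theorem~\ref{thm:noPi0nrelation}; one only has to recast it as a claim about which step the run occupies rather than about the cardinalities of the sets $U_{xy}$, so no genuinely new idea is required.
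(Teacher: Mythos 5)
Your argument is correct, but it takes a genuinely different route from the paper. You keep the construction of Theorem~\ref{thm:noPi0nrelation} untouched and argue that the $F$ it produces is already $\SI 2$: since elements enter $U_{x_ez_i}$ (resp.\ $U_{y_ez_i}$) only while the run on $(x_e,y_e,z_i)$ sits at step 3 (resp.\ step 4), and since the no-infinite-oscillation argument (which uses transitivity of $E$) shows the step occupied by a run stabilises, infinitude of $U_{x_ez_i}$ is equivalent to the $\exists\forall$ condition ``from some stage on the run is active and permanently at step 3'', whose matrix is decidable because the whole dovetailed construction is computable; the degenerate behaviours (divergence at step 1, termination at step 2, restart at step 5, or $z_i$ never being paired with $x_e,y_e$) make both sides false, so the equivalence is exact and $F\in\SI2\cap\PI2$. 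The paper instead reworks the construction itself: it enumerates both $U[x,y]$ and $\overline{U[x,y]}$ (padding complements during the waiting phase, during steps 3--4, and at step 5), so that each $U[x,y]$ is cofinite or finite according as $(x,y)\in F$ or not, which exhibits $F$ as simultaneously $\PI2$ and $\SI2$ via a uniformly computable family. Both proofs hinge on the same combinatorial core --- that a run cannot alternate between steps 3 and 4 infinitely often --- but your version buys economy (no new construction, just a re-reading of the old one as a statement about stabilising states), while the paper's version buys an explicit computable approximation with a finite/cofinite dichotomy, at the cost of redoing the construction. Your concluding remark about relativising to handle $\PI n$/$\DE n$ for $n\ge 2$ matches the paper's corollary; note the proposition itself only requires the $n=2$ case.
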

\begin{proof}
We will modify the construction in Theorem~\ref{thm:noPi0nrelation} to create a
computable sequence $\{U[x,y]\ |\ x,y\in\NN\}$ where:
$$U[x,y]\text{ is cofinite}\iff (x,y)\in F.$$
$$U[x,y]\text{ is finite}\iff (x,y)\notin F.$$
As $Fxy$ is then determined both by $U[x,y]$ being infinite and
$\overline{U[x,y]}$ being finite, it is both a $\PI 2$ and $\SI 2$ relation.

As before, we let $U[x,x]=\NN$ and $U[x,y]=U[y,x]$.Likewise $x_e,y_e$ and
$z_e$ denotes $3e+1$, $3e+2$ and $3e+3$
respectively.
The subroutine is run on $x_e,y_e$ and $z_i$:
\begin{enumerate}
\item
Wait for $\varphi_e(x_e),\varphi_e(y_e)$ and $\varphi_e(z_i)$ to converge. Place
a new element into $\overline{U[x_e,y_e]}$, $\overline{U[x_e,z_i]}$ and
$\overline{U[y_e,z_i]}$ for each step of the computation.
\item
If any of $\varphi_e(x_e),\varphi_e(y_e)$ or $\varphi_e(z_i)$ output the same
value, end the subroutine.
\item
Enumerate $V[\varphi_e(x_e),\varphi_e(z_i)]$ and
$V[\varphi_e(x_e),\varphi_e(y_e)]$, adding a new element to $U[x_e,z_i]$,
$\overline{U[x_e,y_e]}$ and $\overline{U[y_e,z_i]}$ for
every step of the computation. If an element enters
$V[\varphi_e(x_e),\varphi_e(z_i)]$ go to step 4. If an element enters
$V[\varphi_e(x_e),\varphi_e(y_e)]$ go to step 5.
\item
Enumerate $V[\varphi_e(y_e),\varphi_e(z_i)]$ and
$V[\varphi_e(x_e),\varphi_e(y_e)]$, adding a new element to $U[y_e,z_i]$,
$\overline{U[x_e,y_e]}$ and $\overline{U[x_e,z_i]}$ for
every
step of the computation. If an element enters
$V[\varphi_e(y_e),\varphi_e(z_i)]$ go to step 3. If an element enters
$V[\varphi_e(x_e),\varphi_e(y_e)]$ go to step 5.
\item
Place all remaining natural numbers into $\overline{U[x_e,z_i]}$ and
$\overline{U[y_e,z_i]}$. Restart the subroutine on $x_e,y_e,z_j$ where $z_j$ is
the least multiple of 3
not yet used in any subroutine.
\end{enumerate}

Note that when dovetailed this routine enumerates $\{U[x,y]\ |\ x,y\in\NN\}$ and
its complement, meaning it is a computable sequence as required.

It remains to see that if any set in the sequence is infinite it is in fact
cofinite. As we have seen in Theorem~\ref{thm:noPi0nrelation} the subroutine
cannot go between steps 3 and 4 for ever, if either of $U[x_e,z_i]$ or
$U[y_e,z_i]$ is infinite, the subroutine must eventually remain at that step. As
such, only a finite number of elements could have been placed in the complement.
\end{proof}
\begin{corollary}
For $n\geq 2$, there is no $\DE n$-complete equivalence relation.
\end{corollary}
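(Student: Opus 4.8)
The plan is to obtain this as an immediate consequence of the preceding Proposition, together with the same relativisation used to pass from the case $n=2$ to general $n$ in the proof of Theorem~\ref{thm:noPi0nrelation}.

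First, suppose towards a contradiction that $E$ is a $\DE n$-complete equivalence relation for some $n\geq 2$. Since $\DE n\subseteq\PI n$, $E$ is in particular a $\PI n$ equivalence relation, so the preceding Proposition --- or rather its relativisation --- applies to it. That Proposition is stated for $n=2$, but its proof relativises verbatim: one carries out the construction of the computable sequence $\{U[x,y]\ |\ x,y\in\NN\}$ and of its complement relative to an appropriate oracle $O$, chosen so that the predicate ``$W_e^{O}$ is infinite'' is $\PI n$-complete, with the sequence $\{V[x,y]\}$ witnessing membership in $E$ obtained relative to $O$ as well. This produces an equivalence relation $F$ which is simultaneously $\PI n$ and $\SI n$, hence $\DE n$, and which satisfies $F\nleq E$.

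The contradiction is then immediate: a $\DE n$-complete equivalence relation must admit a reduction from every $\DE n$ equivalence relation, and in particular from $F$, contradicting $F\nleq E$. Hence there is no $\DE n$-complete equivalence relation for $n\geq 2$.

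The only step that requires attention is the claim that the Proposition relativises, but this is entirely routine and strictly parallel to the relativisation already invoked for Theorem~\ref{thm:noPi0nrelation}: the construction is uniform in the oracle and uses nothing about it beyond the completeness of the ``infinite r.e.\ set'' predicate for $\PI n$ relative to it. I therefore do not anticipate any genuine obstacle --- all of the substantive work was already done in establishing the Proposition.
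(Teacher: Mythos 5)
Your argument is correct and is exactly the paper's intended proof: the paper's one-line proof ("relativise to an appropriate oracle") is precisely your relativisation of the preceding Proposition to an oracle making the infinite-r.e.-set predicate $\PI n$-complete, combined with the observation that a $\DE n$-complete $E$ is in particular $\PI n$, so the resulting $\DE n$ relation $F\nleq E$ contradicts completeness. You have merely spelled out the details the paper leaves implicit.
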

\begin{proof}
We relativise the proof to an appropriate oracle.
\end{proof}

\chapter{Conclusion} \label{sec:Conclusion}

At the end of our enquiry we emerge with a fair idea of what the world of
complete equivalence relations and preorders looks like.

The $\SI n$ relations are relatively well behaved. There exist complete
equivalence relations for every level, and in the lower rungs these correspond
to natural mathematical notions. We further note that for the case of $\SI 4$,
Turing equivalence of r.e.\ sets is shown to be complete in \cite{Ianovski2012}.
The search
for examples in $\SI 5$ or higher is, perhaps, futile, as it would stretch the
spirit of
the word to refer to relations of quantifier depth five or more as ``natural".

In $\PI n$ we find a world starkly different from that in the standard theory of
$m$-reducibility. For $\PI 1$ a complete equivalence relation exists, but beyond
that any familiarity of degree structure breaks down. Even to find a $\PI n$
equivalence relation that dominates all $\DE n$ equivalence relations is
impossible. Thus the $\DE n$ case, too, lacks complete members beyond that shown
complete for $\DE 1$ in \cite{Gao2001}.

Our departure from the literature with an excursion into the study of preorders,
while offering greater generality, did not yield any additional surprises.
Wherever we investigated, preorders behaved in the same manner as equivalence
relations did. This suggests an open question to researchers in related fields:

\begin{question}
What can be said about (complexity, analytic, set-theoretic) classes that admit
complete equivalence relations, but not complete preorders, with respect to some
notion of component-wise reducibility?
\end{question}

Of course the relevance of this question is in itself contingent on whether a
parallel of Proposition~\ref{prop:preorders} holds. In a setting where classes
are not closed under intersection, the connection between equivalence relations
and preorders may be lost altogether.

The degree structure of equivalence relations under component-wise reducibility
is rich and complex, and largely beyond the scope of this work. Our contribution
concerns merely the maximum degrees - the complete equivalence relations, but we
hope that such as it is it may nevertheless shed some light for those who will
gaze deeper into the depths than we have managed in the course of this
thesis.
\bibliographystyle{alpha}
\bibliography{bibliography.bib}
\end{document}